\DeclareMathAlphabet{\mathpzc}{OT1}{pzc}{m}{it}
\definecolor{dgreyblue}{rgb}{0.26,0.3,0.46}             
\newcommand{\cA}{\mathcal{A}}
\newcommand{\cC}{\mathcal{C}}
\newcommand{\cE}{\mathcal{E}}
\newcommand{\cF}{\mathcal{F}}
\newcommand{\cG}{\mathcal{G}}
\newcommand{\cK}{\mathcal{K}}
\newcommand{\cL}{\mathcal{L}}
\newcommand{\cV}{\mathcal{V}}
\newcommand{\R}{{\mathbb R}}  
\renewcommand{\text}[1]{\hbox{\rm \ #1\ \/}}
\newcommand{\be}[1]{\begin{equation}\label{#1}}
\newcommand{\ee}{\end{equation}}
\newcommand{\beqn}{\begin{eqnarray*}}
\newcommand{\eeqn}{\end{eqnarray*}}
\newcommand{\beq}{\begin{eqnarray}}
\newcommand{\eeq}{\end{eqnarray}}
\newcommand{\ben}{\begin{enumerate}}
\newcommand{\een}{\end{enumerate}}
\newcommand{\bi}{\begin{itemize}}
\newcommand{\ei}{\end{itemize}}
\newcommand{\eps}{\varepsilon}
\newcommand{\IE}{{\em i.e.}\xspace}
\newcommand{\tx}{^{\rm th}}
\newtheorem{fact}{Fact}
\renewenvironment{proof}{{\noindent\bf Proof.\ }}{\hfill{\Pisymbol{pzd}{113}}\vspace{0.1in}}
\newenvironment{proof-sketch}{{\noindent\bf Sketch of Proof.\ }}{\hfill{\Pisymbol{pzd}{113}}\vspace{0.1in}}
\newcommand{\NP}{\mathsf{NP}}
\newcommand{\LP}{\mathsf{LP}}
\newcommand{\cS}{\mathcal{S}}
\newcommand{\cP}{\mathcal{P}}
\newcommand{\TB}{\vspace{-0.1ex}}\newcommand{\TiE}{\setlength{\itemsep}{-1ex}}
\newcommand{\comment}[1]{}
\newcommand{\set}[1]{\ensuremath{\{#1\}}}
\newcommand{\EG}{{\it e.g.}\xspace}
\newcommand{\FI}[1]{Fig.~\ref{#1}\xspace}
\newcommand{\iif}{{\bf{if}}}
\newcommand{\tthen}{{\bf{then}}}
\newcommand{\ffor}{{\bf{for}}}
\newcommand{\wwhile}{{\bf{while}}}
\newcommand{\ddo}{{\bf{do}}}
\newcommand{\diam}{\mathsf{diam}}
\newcommand{\cQ}{\mathcal{Q}}
\newcommand{\dist}{\mathrm{dist}}
\newcommand{\cut}{\mathrm{cut}}
\newcommand{\eopt}{ E_{\mathrm{opt}} }
\newcommand{\eqdef}{\stackrel{\mathrm{def}}{=}}
\definecolor{columbiablue}{rgb}{0.61, 0.87, 1.0}
\newcommand{\badp}{{\sc Tadp}}
\newcommand{\eadp}{{\sc Eadp}}
\newcommand{\curvgrom}{\mathfrak{C}_{\mathrm{Gromov}}}
\newcommand{\tdp}{{\sc Tdp}}
\newcommand{\dkst}{\textsc{D}$k$\textsc{S}$_3$\xspace}
\newcommand{\mnc}{{\sc Mnc}}
\newcommand{\optmnc}{{\mathsf{OPT}_{\mbox{\footnotesize\mnc}}}}
\newcommand{\vmnc}{{V_{\mbox{\footnotesize\mnc}}}}
\newcommand{\opttdp}{{\mathsf{OPT}_{\mbox{\tdp}}}}
\newcommand{\UGC}{{\sc Ugc}\xspace}
\newcommand{\ETH}{{\sc Eth}\xspace}
\newcommand{\SAT}{{\sc Sat}\xspace}
\newcommand{\kSAT}{$k$-{\sc Sat}\xspace}
\newcommand{\tSAT}{$3$-{\sc Sat}\xspace}
\newcommand{\ham}{{{\sc Cubic}-{\sc Hp}}\xspace}
\newcommand{\wte}{\widetilde{E}}
\newcommand{\whe}{\widehat{E}}
\newcommand{\ccap}{\mathrm{cap}}
\journalname{Algorithmica}
\begin{document}

\title{Why did the shape of your network change?  
\\
(On detecting network anomalies via non-local curvatures)
}

\titlerunning{Why did the shape of your network change?}

\author{
Bhaskar DasGupta$^\ast$
\and
Mano Vikash Janardhanan
\and
Farzane Yahyanejad
}

\institute{
$^\ast$Bhaskar DasGupta (corresponding author) \at
           Department of Computer Science, University of Illinois at Chicago, Chicago, IL 60607, USA \\
           Tel.: +312-355-1319\\
           Fax: +312-413-0024\\
           \email{bdasgup@uic.edu} 
\and
Mano Vikash Janardhanan \at
           Department of Mathematics, University of Illinois at Chicago, Chicago, IL 60607, USA \\
           \email{manovikashj@gmail.com}
\and
Farzaneh Yahyanejad \at
           Department of Computer Science, University of Illinois at Chicago, Chicago, IL 60607, USA \\
           \email{farzanehyahyanejad@gmail.com}
}

\date{Received: date / Accepted: date}

\maketitle

\begin{abstract}
Anomaly detection problems (also called \emph{change-point detection} problems)
have been studied in data mining, statistics and computer science over the last several decades 
(mostly in \emph{non-network context}) in applications such as 
medical condition monitoring, weather change detection and speech recognition.
In recent days, however, anomaly detection problems have become increasing more relevant in the context of 
\emph{network science} since useful insights for many complex systems in biology, finance and social science 
are often obtained by representing them via networks.
Notions of local and non-local curvatures of higher-dimensional geometric shapes and topological spaces 
play a \emph{fundamental} role in physics and mathematics 
in characterizing anomalous behaviours of these higher dimensional entities. 
However, using curvature measures to detect anomalies in networks 
is not yet very common.
To this end, a main goal in this paper to formulate and analyze curvature analysis methods to provide the foundations of 
systematic approaches to find \emph{critical components} and \emph{detect anomalies} in networks.
For this purpose, we use two measures of network curvatures
which depend on non-trivial global properties, such as distributions of geodesics and 
higher-order correlations among nodes, of the given network.
Based on these measures, we precisely formulate several computational problems related to anomaly detection in 
static or dynamic networks, and provide non-trivial computational complexity results 
for these problems. 
This paper must \emph{not} be viewed as delivering the final word 
on appropriateness and suitability of specific curvature measures.
Instead, it is our hope that this paper will stimulate and motivate further theoretical or empirical research
concerning the exciting interplay between notions of 
curvatures from network and non-network domains, a \emph{much} desired goal in our opinion.
\end{abstract}

\keywords{Anomaly detection \and Gromov-hyperbolic curvature \and geometric curvature \and exact and approximation algorithms
\and inapproximability}
\PACS{02.10.Ox \and 89.20.Ff \and 02.40.Pc}
\subclass{MSC 68Q25 \and MSC 68W25 \and MSC 68W40 \and MSC 05C85}

\section{Introduction}
\label{sec-intro}

Useful insights for many complex systems are often obtained by representing them as networks and 
analyzing them using 
graph-theoretic and combinatorial 
algorithmic tools~\cite{DL16,Newman-book,Albert-Barabasi-2002}.
In principle, we can classify these networks into \emph{two} major classes: 
\begin{enumerate}[label=$\triangleright$,leftmargin=0.8cm]
\item
\emph{Static} networks that model the corresponding system by \emph{one} fixed network.
Examples of such networks include 
biological signal transduction networks \emph{without} node dynamics, 
and many social networks.
\item
\emph{Dynamic} networks where elementary components of the 
network (such as nodes or edges) are added and/or removed as the network \emph{evolves} over time.
Examples of such networks include 
biological signal transduction networks \emph{with} node dynamics, 
causal networks 
reconstructed from DNA microarray time-series data,
biochemical reaction networks 
and dynamic social networks.
\end{enumerate}
Typically, such networks may have so-called \emph{critical} 
(\emph{elementary}) components 
whose presence or absence alters 
some significant \emph{non-trivial non-local} 
property\footnote{A non-trivial property usually refers to a property such that 
a significant percentage of all possible networks satisfies the property and
also a significant percentage of all possible networks does \emph{not} satisfy the property.
A non-local property (also called global property) usually refers to a property that \emph{cannot} be inferred 
by simply looking at a \emph{local} neighborhood of any one node.} 
of these networks. 
For example:
\begin{enumerate}[label=$\triangleright$,leftmargin=0.8cm]
\item
For a static network, 
there is a rich history in finding various types of 
critical 
components 
dating back to quantifications of fault-tolerance or redundancy
in electronic circuits or routing networks.
Recent examples of practical application of determining critical and non-critical components
in the context of systems biology 
include quantifying redundancies in biological networks~\cite{KW96,TSE99,ADGGHPSS11}
and confirming the existence of central influential neighborhoods in biological networks~\cite{ADM14}.
\item
For a dynamic network, 
critical components may correspond to a set of nodes or edges whose addition and/or removal 
\emph{between} two time steps
alters a significant topological property (\EG, connectivity, average degree) of the network. 
Popularly also known as the \emph{anomaly detection} or \emph{change-point detection}~\cite{AC17,KS09} problem,
these types of problems have also been studied over the last several decades in 
data mining, statistics and computer science \emph{mostly in the ``non-network'' context of time series data}
with applications to areas such as 
medical condition monitoring~\cite{Yang06,Bosc03}, weather change detection~\cite{Ducre03,Reev07}
and speech recognition~\cite{Chow11,Rybach09}.
\end{enumerate}
In this paper we seek to address research questions of the following \emph{generic} nature:
\begin{quote}
\emph{``Given a static or dynamic network, identify the critical
components of the network that ``encode'' significant non-trivial global properties of the network''}.
\end{quote}
To identify critical components, one first needs to provide details for following four specific items:
\smallskip
\begin{adjustwidth}{0.6cm}{}
\begin{description}
\item[(\emph{i})]
network model selection,
\item[(\emph{ii})]
network evolution rule for dynamic networks,
\item[(\emph{iii})]
definition of elementary critical components, and 
\item[(\emph{iv})]
network property selection (\IE, the global properties of the network to be investigated).
\end{description}
\end{adjustwidth}
\smallskip
The specific details for these items for this paper are as follows:
\medskip
%
\begin{adjustwidth}{0.6cm}{}
\begin{description}
\item[(\emph{i}) Network model selection:]
Our network model will be undirected graphs.
\item[(\emph{ii}) Network evolution rule for dynamic networks:]
Our dynamic networks follow the time series model and 
are given as a sequence of networks over \emph{discrete} time steps, where 
each network is obtained from the previous one in the sequence by adding and/or deleting some nodes and/or edges. 
\item[(\emph{iii}) Critical component definition:]
Individual edges are elementary members of critical components.
\item[(\emph{iv}) Network property selection:]
The network measure for this paper will be based on one or more well-justified notions of ``network curvature''. 
More specifically, we will use  
\textbf{(\emph{a})}
\textbf{Gromov-hyperbolic curvature} based on the properties of \emph{exact and approximate} 
geodesics distributions and higher-order connectivities
and 
\textbf{(\emph{b})}
\textbf{geometric curvatures} based on 
identifying network motifs with \emph{geometric complexes} (``geometric motifs'' in systems biology jargon) 
and then using Forman's combinatorializations.
\end{description}
\end{adjustwidth}

\subsection{Organization of the paper and a summary of our contributions}

The rest of the paper is organized as described below.

In Section{~\ref{sec-defn}} we introduce some 
basic definitions and notations and provide a summary list of 
some other notations that are are used throughout the rest of the paper.

In Section{~\ref{sec-just-etc}} we discuss the relevant 
background, motivation, and justification for using the curvature measures and provide two illustrative examples
in which curvature measures detect anomaly where other simpler measures do not.
We also remark on the limitations of our theoretical results that may be useful to future researchers.

In Section~\ref{def-curv} we define and motivate the two notions of graph curvature that is used in this paper
in the following manner: 
\begin{enumerate}[label=$\triangleright$,leftmargin=0.8cm]
\item
The Gromov-hyperbolic curvature is introduced in Section~\ref{def-grom-curv} together with 
justifications for using them, relevant known results and some clarifying remarks about them.
\item
Generic notions of geometric curvatures are introduced in Section~\ref{sec-geom-defn} together with 
relevant topological concepts necessary to define them and justifications for using them.
The precise definition of the geometric curvature used in this paper is given by 
Equation~\eqref{gmeq1} in Section~\ref{sec-gmeq1}.
\end{enumerate}
In Section~\ref{def-prob} we present our formalizations of anomaly detection problems on networks
based on curvature measures. We distinguish two types of anomaly detection problems in the following manner:
\begin{enumerate}[label=$\triangleright$,leftmargin=0.8cm]
\item
In Section~\ref{sec-static} we formalize the \emph{Extremal Anomaly Detection Problem} 
(Problem~\eadp$_{\mathfrak{C}}(G,\wte,\gamma)$)
for ``static networks'' that do \emph{not} change over time. 
\item
In Section~\ref{sec-dynamic}
we formalize the \emph{Targeted Anomaly Detection Problem} (\badp$_{\mathfrak{C}}(G_1,G_2)$)
for ``dynamic networks'' that \emph{do} change over time. 
\end{enumerate}
In Section~\ref{sec-proof-extreme} we present our results 
regarding the computational complexity of extremal anomaly detection problems for the two 
types of curvatures in the following manner:
\begin{enumerate}[label=$\triangleright$,leftmargin=0.8cm]
\item
Theorem~\ref{ext-thm} in Section~\ref{sec-proof-extreme-geometric}
states the computational complexity results for geometric curvatures.
Some relevant comments regarding Theorem~\ref{ext-thm} and an informal overview of its proof techniques
appear in Section~\ref{informal-ext-thm}, whereas the precise technical proofs for 
Theorem~\ref{ext-thm} are presented separately in Section~\ref{sec-app-proof-ext-thm}.
\item
Theorem~\ref{thm-eadp-curvgrom} in Section~\ref{sec-proof-extreme-gromov}
states the computational complexity results for Gromov-hyperbolic curvature.
An informal overview of the proof techniques for Theorem~\ref{thm-eadp-curvgrom} 
appears in the very beginning of Section~\ref{sec-app-proof-thm-eadp-curvgrom}, whereas the precise technical proofs for 
Theorem~\ref{thm-eadp-curvgrom} are presented in the remaining part of 
the same section.
\end{enumerate}
In Section~\ref{sec-proof-target} we present our results 
regarding the computational complexity of targeted anomaly detection problems for the two 
types of curvatures in the following manner:
\begin{enumerate}[label=$\triangleright$,leftmargin=0.8cm]
\item
Theorem~\ref{thm-badp} in Section~\ref{sec-proof-target-geom}
states the computational complexity results for geometric curvatures.
An informal overview of the proof techniques for Theorem~\ref{thm-badp} 
appears in Section~\ref{sec-informal-badp}, whereas the precise technical proofs for 
Theorem~\ref{thm-badp} are presented separately in Section~\ref{sec-app-proof-thm-badp}.
\item
Theorem~\ref{thm-grom-hard} in Section~\ref{sec-badp-all}
states the computational complexity results for Gromov-hyperbolic curvature.
Some relevant comments regarding Theorem~\ref{thm-grom-hard} and an informal overview of its proof techniques
appear in Section~\ref{sec-informal-badp-gromov}, whereas the precise technical proofs for 
Theorem~\ref{thm-grom-hard} are presented separately in Section~\ref{sec-app-proof-thm-grom-hard}.
\end{enumerate}
Finally, we conclude in Section~\ref{sec-conclusion} with a few interesting research problems 
for future research.

\medskip
\noindent
\textbf{Remarks on the organization of our proofs} 

\medskip
Many of our proofs in Sections~\ref{sec-proof-extreme}--\ref{sec-proof-target} are long, are complicated or 
involve tedious calculations. 
For easier understanding and to make the paper more readable, when appropriate we have included 
a subsection generically titled 
``Proof techniques and relevant comments regarding Theorem $\dots\dots$''
before providing the actual detailed proofs. 
The reader \emph{is} cautioned however that these brief subsections are meant to provide some general idea
and subtle points behind the proofs and should \emph{not} be considered as a substitution for more formal proofs.

\section{Basic definitions and notations}
\label{sec-defn}
For an undirected unweighted graph $G=(V,E)$ of $n$ nodes $v_1,\dots,v_n$, 
the following notations related to $G$ are used throughout:
\begin{enumerate}[label=$\blacktriangleright$,leftmargin=0.8cm]
\item
$v_{i_1}\leftrightarrow v_{i_2}\leftrightarrow v_{i_3}\leftrightarrow \dots\leftrightarrow v_{i_{k-1}}\leftrightarrow v_{i_k}$ 
denotes a path of \emph{length} $k-1$ consisting of the edges 
$\set{v_{i_1},v_{i_2}}$, 
$\set{v_{i_2},v_{i_3}}$,
$\dots$,
$\set{v_{i_{k-1}},v_{i_k}}$.
\item
$\overline{u,v}$ and $\dist_G(u,v)$ 
denote a \emph{shortest path} and the distance 
(\IE, number of edges in $\overline{u,v}$)
between nodes $u$ and $v$, respectively.
\item
$\diam(G)=\max_{v_i,v_j} \set{ \dist_G(v_i,v_j) }$
denotes the \emph{diameter} of $G$.
\item
$G\setminus E'$
denotes the graph obtained from $G$ 
by removing 
the edges in $E'$ from $E$.
\end{enumerate}
A $\eps$-approximate solution (or simply an $\eps$-approximation) 
of a 
minimization (\emph{resp}., maximization) problem is a
solution with an objective value no larger than (\emph{resp}., no smaller than) $\eps$ times 
(\emph{resp}., $\nicefrac{1}{\eps}$ times) the
value of the optimum; an algorithm of {\em performance} or 
{\em approximation ratio} $\eps$ produces an $\eps$-approximate solution.
A problem is $\eps$-\emph{inapproximable} under a certain complexity-theoretic assumption means that the problem does not 
admit a polynomial-time
$\eps$-approximation algorithm assuming that the complexity-theoretic assumption is true.
We will also use other \emph{standard} definitions from structural complexity theory 
as readily available in any graduate level textbook on algorithms such as~\cite{V01}.


Other specialized notations used in the paper are defined when they are first needed. 
For the benefit of the reader, we provide a list of some such commonly used notations in the paper 
with brief comments about them in Table~\ref{tab-not}. Please see the referring section 
for exact descriptions of these notations.

\begin{table}[htbp]
\begin{tabular}{l l l}
\toprule
  {\textbf{Nomenclature or brief explanation}}&  
       {\textbf{Notation}}	& 
           \hspace*{-0.2in}
	        { 
					  \begin{tabular}{l}
					   \textbf{Referring}
						 \\
						 \textbf{section}
					  \end{tabular}
					}
\\
\midrule
path of length $k-1$  
     & 
$v_{i_1}\leftrightarrow \dots\leftrightarrow v_{i_k}$
     & 
Section~\ref{sec-defn}
\\
\midrule
shortest path, distance between nodes $u$ and $v$
     & 
$\overline{u,v}$, $\dist_G(u,v)$ 
     & 
Section~\ref{sec-defn}
\\
\midrule
diameter of graph $G$
     & 
$\diam(G)$
     & 
Section~\ref{sec-defn}
\\
\midrule
the graph $(V,E\setminus E')$ where $G=(V,E)$
     & 
$G\setminus E'$
     & 
Section~\ref{sec-defn}
\\
\midrule
curvature of graph $G$ 
     & 
$\mathfrak{C}$ or $\mathfrak{C}(G)$
     & 
Section~\ref{def-curv}
\\
\midrule
geodesic triangle 
     & 
$\Delta_{u,v,w}$
     & 
Section~\ref{def-grom-curv}
\\
\midrule
Gromov-hyperbolic curvature of $\Delta_{u,v,w}$
     & 
$\curvgrom(\Delta_{u,v,w})$
     & 
Section~\ref{def-grom-curv}
\\
\midrule
Gromov-hyperbolic curvature of graph $G$ 
     & 
$\curvgrom$ or $\curvgrom(G)$ 
     & 
Section~\ref{def-grom-curv}
\\
\midrule
$k$-simplex of $k+1$ affinely independent points
     & 
$\cS\big(x_0,\dots,x_k\big)$
     & 
Section~\ref{sec-app-topo}
\\
\midrule
order $d$ association of $p$-face $f^p$ of a $q$-simplex
     & 
$f_d^p$
     & 
Section~\ref{sec-gmeq1}
\\
\midrule
geometric curvature of graph $G$
     & 
$\mathfrak{C}^p_d$ or $\mathfrak{C}^p_d(G)$
     & 
Section~\ref{sec-gmeq1}
\\
\midrule
\hspace*{-0.1in}
\begin{tabular}{l}
Extremal Anomaly Detection Problem,
\\
value of its optimal solution 
\end{tabular}
     & 
\hspace*{-0.1in}
\begin{tabular}{l}
\eadp$_{\mathfrak{C}}(G,\wte,\gamma)$, 
\\
$\mathsf{OPT}_{\mbox{\eadp}_{\mathfrak{C}}}(G,\wte,\gamma)$
\end{tabular}
     & 
Section~\ref{sec-static}
\\
\midrule
\hspace*{-0.1in}
\begin{tabular}{l}
Targeted Anomaly Detection Problem,
\\
value of its optimal solution 
\end{tabular}
     & 
\hspace*{-0.1in}
\begin{tabular}{l}
\badp$_{\mathfrak{C}}(G_1,G_2)$,
\\
$\mathsf{OPT}_{\mbox{\badp}_{\mathfrak{C}}}(G_1,G_2)$
\end{tabular}
     & 
Section~\ref{sec-dynamic}
\\
\midrule
densest-$k$-subgraph problem
     & 
\dkst 
     & 
Section~\ref{sec-app-proof-ext-thm}
\\
\midrule
\hspace*{-0.1in}
\begin{tabular}{l}
minimum node cover problem, 
\\
cardinality of its optimal solution
\end{tabular}
     & 
\mnc, $\optmnc$
     & 
Section~\ref{sec-app-proof-thm-badp}
\\
\midrule
\hspace*{-0.1in}
\begin{tabular}{l}
triangle deletion problem, 
\\
cardinality of its optimal solution
\end{tabular}
     & 
\tdp, $\opttdp$
     & 
Section~\ref{sec-app-proof-thm-badp}
\\
\midrule
Hamiltonian path problem for cubic graphs
     & 
\ham
     & 
Section~\ref{sec-informal-badp-gromov}
\\
\bottomrule
\end{tabular}
\caption{\label{tab-not}A list of some frequently used notations with brief explanations.}
\end{table}

\section{Background, motivation, justification and illustrative examples}
\label{sec-just-etc}

The main purpose of this section is to (somewhat informally) explain to the reader the appropriateness of our curvature 
measures both from a theoretical and an empirical point of view.
We also provide brief comments  on the limitations of our theoretical results which may be of use to future researchers.

\subsection{Justifications for using network curvature measures}

Prior researchers have proposed and evaluated a number of 
established network measures such as \emph{degree-based measures} (\EG, degree distribution), 
\emph{connectivity-based measures} (\EG, clustering coefficient), 
\emph{geodesic-based measures} (\EG, betweenness centrality)
and other more novel network measures~\cite{rich-club,LM07,ADGGHPSS11,bassett-et-al-2011}
for analyzing networks.
The network measures considered in this paper are ``appropriate notions'' of \emph{network curvatures}.
As demonstrated in published research works such as~\cite{ADM14,WJS16,WSJ16,SSG18},
these network curvature measures 
saliently encode \emph{non-trivial higher-order correlation} among nodes and edges that
\emph{cannot} be obtained by other popular network measures.
Some important characteristics of these curvature measures that we consider are~\cite[Section~(III)]{ADM14}\cite{JLBB11}:
\begin{enumerate}[label=$\blacktriangleright$,leftmargin=0.8cm]
\item
These curvature measures depend on \emph{non-trivial global} network properties, 
as opposed to measures such as 
\emph{degree distributions} or \emph{clustering coefficients} that are \emph{local} in nature or 
\emph{dense subgraphs} that use \emph{only} pairwise correlations.
\item
These curvature measures can mostly be computed efficiently in polynomial time,
as opposed to $\NP$-complete measures such as 
\emph{cliques}~\cite{GJ79}, 
\emph{densest}-$k$-\emph{subgraphs}~\cite{GJ79}, or 
some types of community decompositions such as \emph{modularity maximization}~\cite{DD13}. 
\item
When applied to real-world networks, 
these curvature measures can explain many phenomena one frequently encounters in real network applications
that are \emph{not} easily explained by other measures such as:
\begin{enumerate}[label=$\blacktriangleright$,leftmargin=0.8cm]
\item
paths mediating up- or down-regulation of a target node starting from the same regulator node in 
\emph{biological regulatory networks} often have many small crosstalk paths, and 
\item
existence of congestions in a node that is not a hub in \emph{traffic networks}.
\end{enumerate}
Further details about the suitability of our curvature measures for real biological or social
networks are provided in Section~\ref{sec-real-grom} for Gromov-hyperbolic curvature and 
at the end of Section~\ref{sec-gmeq1}
for geometric curvatures.
\end{enumerate}
Curvatures are very natural measures of anomaly of higher dimensional objects in 
mainstream physics and mathematics~\cite{book,Berger12}.
However, networks are \emph{discrete objects} that do \emph{not} necessarily have an associated natural geometric embedding. 
Our paper seeks to adapt the definition of curvature from the non-network domains
(\EG, from continuous metric spaces or from higher-dimensional geometric objects)
in a suitable way for detecting network anomalies. 
For example, in networks with sufficiently small Gromov-hyperbolicity and sufficiently large diameter 
a suitably small subset of nodes or edges can be removed to stretch the geodesics between two distinct parts 
of the network by an exponential amount.
Curiously this kind of property can be shown to have extreme implications on the expansion properties of such networks~\cite{a1,DKMY18},
akin to the characterization of 
singularities (an extreme anomaly) by geodesic incompleteness (\IE, stretching all geodesics passing through the region 
infinitely)~\cite{HP96}.

\subsection{Justifications for investigating the edge-deletion model}

In this paper we add or delete edges from a network while keeping the node set the same.
This scenario captures a wide variety of applications 
such as inducing desired outcomes in disease-related biological networks via gene knockout~\cite{SWLXLAA11,ZA15},
inference of minimal biological networks from indirect experimental evidences or gene 
perturbation data~\cite{ADDKSZW07,ADDS,W02}, and finding influential nodes in social and biological networks~\cite{ADGGHPSS11},
to name a few.
However, the node addition/deletion model or a mixture of node/edge addition/deletion model is also 
significant in many applications; we leave investigations of these models 
as future research topics.

\subsection{Two illustrative examples}
\label{sec-ex2}

It is obviously practically impossible to compare our curvatures measures for anomaly detection with respect to
\emph{every possible} other network measure that has been used in prior research works.
However, we do still provide two illustrative examples of comparing our curvature measures to the well-known
\emph{densest subgraph measure}.
The densest subgraph measure is defined as follows. 

\begin{definition}[Densest subgraph measure]
Given a graph $G=(V,E)$, the densest subgraph measure find a 
subgraph $(S,E_S)$ induced by a subset of nodes $\emptyset\subset S\subseteq V$ that maximizes the ratio (density) 
$\rho(S)\eqdef\frac{|E_S|}{|S|}$. 
Let $\rho(G)\eqdef\max_{\emptyset \subset S\subseteq V} \{ \rho(S) \}$ denote the 
density of a densest subgraph of $G$.
\end{definition}

An efficient polynomial time algorithm to compute $\rho(G)$ using a max-flow technique
was first provided by Goldberg~\cite{Gold84}.
We urge the readers to review the definitions of the relevant curvature measures (in Section~\ref{def-curv})
and the anomaly detection problems (in Section~\ref{def-prob})
in case of any confusion regarding the examples we provide.

\bigskip
\noindent
\textbf{Extremal anomaly detection for a static network} 
\bigskip

Consider the extremal anomaly detection problem 
(Problem~\eadp\ in Section~\ref{sec-static})
for a network $G=(V,E)$ of $10$ nodes and $20$ edges as shown in 
\FI{ex2-fig}
using the geometric curvature $\mathfrak{C}^2_3$
as defined by Equation~\eqref{gmeq1}.
It can be easily verified that 
$\mathfrak{C}^2_3(G)=6$
and
$\rho(G)=\nicefrac{9}{4}$.
Let $\wte=E$ and 
suppose that we set our targeted decrease of the curvature or density value to 
be $75\%$ of the original value, \IE, 
we set 
$\gamma=\nicefrac{3}{4}\times\mathfrak{C}^2_3(G)=\nicefrac{9}{2}$ for the geometric curvature measure and 
$\gamma=\nicefrac{3}{4}\times\rho(G)=\nicefrac{27}{16}$ for the densest subgraph measure.
It is easily verified that 
$\mathfrak{C}^2_3(G\setminus \{e_1\})=1<\nicefrac{9}{2}$,
thus showing 
$\mathsf{OPT}_{\mbox{\eadp}_{\mathfrak{C^2_3}}}(G,\wte,\gamma)=1$.
However, one can verify that more than $4$ edges will need to be deleted from $G$
to bring down the value of $\rho(G)$ to $\nicefrac{27}{16}$ in the following manner: 
since the densest subgraph in $G$ is induced by $8$ nodes and $18$ edges, if no more than 
$4$ edges are deleted then the density of this subgraph in the new graph is at least 
$\nicefrac{14}{8}>\nicefrac{27}{16}$.

\begin{figure}[htbp]
\centerline{\includegraphics[scale=0.95]{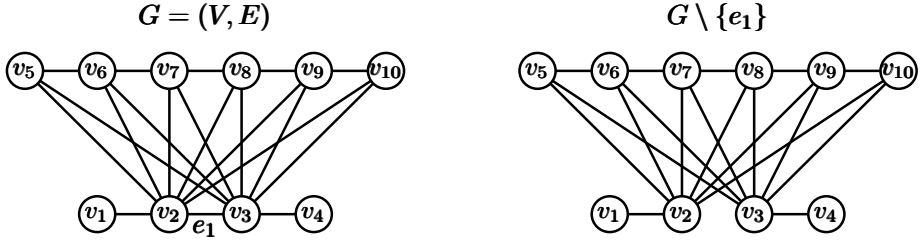}}
\caption{\label{ex2-fig}Toy example of extremal anomaly detection discussed in 
Section~\ref{sec-ex2}.
The given graph $G=(V,E)$ has $|V|=10$ nodes, $|E|=20$ edges and $\alpha=16$ triangles ($3$-cycles) giving 
$\mathfrak{C}^2_3(G)=|V|-|E|+\alpha=6$, where the densest subgraph of $G$ is the subgraph node-induced 
by the nodes $V\setminus \{v_1,v_4\}$ with $8$ nodes and $18$ edges giving 
$\rho(G)=\nicefrac{18}{8}=\nicefrac{9}{4}$.
The graph $G\setminus \{e_1\}$ has $|E\setminus\{e_1\}|=19$ edges and 
$\alpha'=10$ triangles
giving
$\mathfrak{C}^2_3(G\setminus \{e_1\})=|V|-|E\setminus\{e_1\}|+\alpha'=1$.
However, it can be verified that 
more than four edges will need to be deleted from $G$
to bring down the value of $\rho(G)$ to at most $\nicefrac{27}{16}$.
}
\end{figure}

\bigskip
\noindent
\textbf{Targeted anomaly detection for a dynamic biological network} 
\bigskip

\begin{figure}[htbp]
\centerline{\includegraphics[scale=0.62]{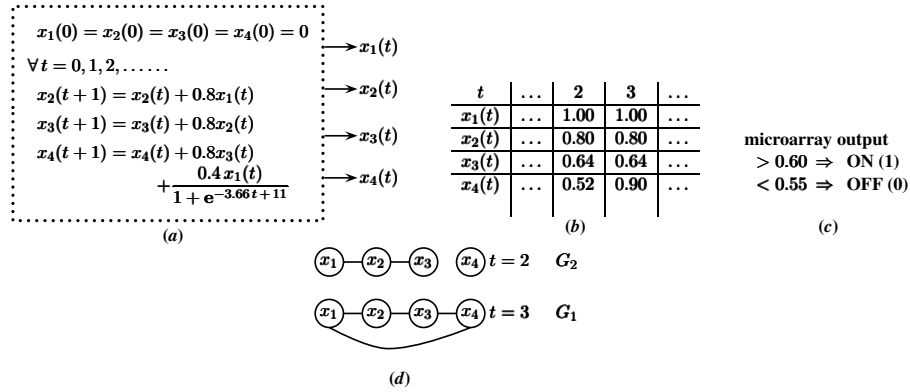}}
\caption{\label{ex1-fig}Toy example of targeted anomaly detection discussed in 
Section~\ref{sec-ex2}.
\textbf{(\emph{a})}
The original dynamical system with four observable output variables.
\textbf{(\emph{b})}
Values of the four output variable over time starting with an all-zero initial condition.
\textbf{(\emph{c})}
The thresholding rules employed to binarize the expression values in 
\textbf{(\emph{b})}
as captured by a DNA microarray.
\textbf{(\emph{d})}
The reverse-engineered network for two successive time steps.
}
\end{figure}

\noindent
Consider the targeted anomaly detection problem 
(Problem~\badp\ in Section~\ref{sec-dynamic})
using the Gromov-hyperbolic curvature (Definition~\ref{def-hyperbolic-1}).
Suppose that we have a biological dynamical system of $4$ variables $x_1,x_2,x_3,x_4$
generated by a set of recurrence equations 
as shown 
in \FI{ex1-fig}$\!$(\emph{a})
for 
$x_1(t),x_2(t),x_3(t)$ and $x_4(t)$ as a function of discrete time $t=0,1,2,3,\dots\dots$,
with the initial condition of 
$x_1(0)=x_2(0)=x_3(0)=x_4(0)=0$.
Note that 
in this biological system 
any change in the value of $x_1$ affects $x_4$ with a delay.
These recurrence equations are not known to the observer, but
they generate a sequence of real values of the state variables for each successive discrete
time units (shown in \FI{ex1-fig}$\!$(\emph{b}) for $t=2$ and $t=3$).
Suppose that an observer measures a binarized version of these real values of the state variables for each successive discrete
time units using a DNA microarray by using thresholds as shown in \FI{ex1-fig}$\!$(\emph{c}),
and then reverse-engineers a time-varying 
network by using the hitting-set approach of Krupa~\cite[Section 5.4.2]{DL16}\cite{Jarrah}
with a time delay of $2$ (the corresponding network for $t=2$ and $t=3$ is shown in 
\FI{ex1-fig}$\!$(\emph{d})).
Suppose that for our targeted anomaly detection problem 
we fix our attention to the two graphs $G_2$ and $G_1$ constructed in the
two successive time steps $t=2$ and $t=3$, respectively, 
where $G_2=G_1\setminus \big\{ \{x_1,x_4\},\{x_3,x_4\} \big\}$ is the target graph. 
It can be easily verified that 
$\curvgrom(G_1)=\rho(G_1)=1$, 
$\curvgrom(G_2)=0$, and 
$\rho(G_2)=\nicefrac{1}{2}$.
Since 
$\curvgrom(G_1\setminus \{x_1,x_4\})=0$ 
it follows that 
we only need to delete the edge 
$\{x_1,x_4\}$ 
to bring down the value of $\curvgrom(G_1)$ to $\curvgrom(G_2)$.
However, both the edges 
$\{x_1,x_4\}$ and $\{x_3,x_4\}$
need to be deleted from $G_1$ 
to bring down the value of $\rho(G_1)$ to $\rho(G_2)$.

\subsection{Brief remarks regarding the limitations of our theoretical results}

Our theoretical results obviously have some limitations, specially for real-world networks.
For example, 
our inapproximability results for the Gromov-hyperbolic curvature require a 
high average node degree. Thus, for real-world networks such as scale-free networks the 
inapproximability bounds may not apply. 
On another note, for geometric curvatures we only considered the first-order non-trivial measure 
$\mathfrak{C}^2_d$, but perhaps more salient non-trivial topological properties could be captured by
using $\mathfrak{C}^p_d$ for $p>2$.

\section{Two notions of graph curvature}
\label{def-curv}

For this paper, a \emph{curvature} for a graph $G$ is a 
function 
$\mathfrak{C} \eqdef \mathfrak{C}(G) : G\mapsto\R$. 
There are several ways in which 
network curvature can be defined depending on the type of global properties 
the measure is desired to affect; in this paper we consider two such definitions as described subsequently.

\subsection{Gromov-hyperbolic curvature}
\label{def-grom-curv}

This measure for a metric space was first suggested by Gromov in
a group theoretic context~\cite{G87}.
The measure was first defined for \emph{infinite} continuous metric space~\cite{book},
but was later also adopted for \emph{finite} graphs.
Usually the measure is defined via \emph{geodesic triangles} as stated in Definition~\ref{def-hyperbolic-1}.
For this definition, it would be useful to consider the given graph $G$ as a metric graph, \IE, 
we identify (by an isometry) any edge $\{u,v\}\in E$ with
the real interval $[0,1]$ and thus any point in the interior of the edge $\{u,v\}$ can also be thought as
a (virtual) node of $G$.
Define a geodesic triangle $\Delta_{u,v,w}$ to be 
an ordered triple of three shortest paths 
$(\overline{u,v}$, $\overline{u,w}$ and $\overline{v,w})$ for the three nodes $u,v,w$ in $G$.

\begin{definition}[Gromov-hyperbolic curvature measure via geodesic triangles]
\label{def-hyperbolic-1}
For a geodesic triangle $\Delta_{u,v,w}$, 
let $\curvgrom(\Delta_{u,v,w})$ be 
the minimum number such that
$\overline{u,v}$ lies 
in a $\curvgrom(\Delta_{u,v,w})$-neighborhood
of $\overline{u,w}\,\cup\,\overline{v,w}$,
\emph{\IE}, for every node $x$ on $\overline{u,v}$, there exists a node $y$ on 
$\overline{u,w}$ or $\overline{v,w}$ such that $\dist_G(x,y)\leq\curvgrom(\Delta_{u,v,w})$.
Then
the graph $G$ has a Gromov-hyperbolic curvature (or
Gromov hyperbolicity) of $\curvgrom\eqdef\curvgrom(G)$ 
where 
$\curvgrom(G)=\min\limits_{u,v,w\in V} \left\{ \curvgrom(\Delta_{u,v,w}) \right\}$.
\end{definition}

An infinite collection $\mathcal{G}$ of graphs
belongs to the class of $\curvgrom$-Gromov-hyperbolic graphs 
if and only if 
any graph $G\in\mathcal{G}$ has a Gromov-hyperbolic curvature of $\curvgrom$.
Informally, any infinite metric space has a finite value of $\curvgrom$ 
if it behaves metrically in the large scale as a \emph{negatively curved} Riemannian manifold, and thus 
the value of $\curvgrom$ can be related to the other standard 
curvatures of a hyperbolic manifold. 
For example, a simply connected complete Riemannian manifold whose sectional curvature is below $\alpha<0$ has a value of 
$\curvgrom=O\big(\sqrt{-\alpha}\,\big)$ (see~\cite{R96}).
This is a major justification of using $\curvgrom$ as a notion of curvature of any metric space.

Let $\omega$ be the value such that two $n\times n$ matrices can be multiplied in $O(n^\omega)$ time; 
the smallest current value of $\omega$ is about {$2.373$~\cite{VVW12}}. Then the following results computational complexity
results are known for computing 
$\curvgrom(G)$
for an $n$-node graph $G$.
\begin{enumerate}[label=$\triangleright$,leftmargin=0.7cm]
\item
$\curvgrom(G)$
can be \emph{exactly} computed 
in
$O\left(n^{\frac{5+\omega}{2}}\right)=O\left(n^{3.687}\right)$ 
{time~\cite{ipl15}}.
\item
$(1+\eps)$-approximation of 
$\curvgrom(G)$
can be computed 
in
\\
$\tilde{O}\left(\frac{1}{\eps} n^{1+\omega}\right)=\tilde{O}\left(\frac{1}{\eps} n^{3.373}\right)$ 
{time~\cite{Du14}}, and 
$(2+\eps)$-approximation of 
\\
$\curvgrom(G)$
can be computed 
in
$\tilde{O}\left(\frac{1}{\eps} n^{\omega}\right)=\tilde{O}\left(\frac{1}{\eps} n^{2.373}\right)$ 
{time~\cite{Du14}}\footnote{$\tilde{O}(\cdot)$ is a standard computational complexity 
notation that omits poly-logarithmic factors.}.
\item
$8$-approximation of 
$\curvgrom(G)$
can be computed 
in
$O\left(n^2\right)$ 
time~\cite{CCDDMV18}.
\end{enumerate}
It is easy to see that if $G$ is a tree then 
$\curvgrom(G) = 0$. 
Other examples of graph classes for which 
$\curvgrom(G)$ 
is a small constant include
{\em chordal graphs}, {\em cactus of cliques}, 
{\em AT-free} graphs, {\em link graphs of simple polygons}, 
and {\em any} class of graphs with a {\em fixed} diameter.
A small value of Gromov-hyperbolicity is often crucial for algorithmic designs; for example,
several routing-related problems or the diameter estimation problem become easier for
networks with small $\curvgrom$ values~\cite{CE07,CDEHV08,CDEHVX12,GL05}.
There are many well-known measures of curvature of a continuous surface or other similar spaces (\EG,  
curvature of a manifold) that are widely used in many branches of physics and mathematics.
It is possible to relate Gromov-hyperbolic curvature to such other curvature notions indirectly via 
its scaled version, \EG, see~\cite{JLB07,NS11,JLA11}.

\subsubsection{Gromov-hyperbolic curvature and real-world networks}
\label{sec-real-grom}

Recently, there has been a surge of empirical works measuring and analyzing the Gromov curvature $\curvgrom$ 
of networks, and many real-world networks 
(\EG, preferential attachment networks, networks of high power transceivers in 
a wireless sensor network,
communication networks at the IP layer and at other levels)
were observed to have a small constant value of $\curvgrom$~\cite{NS11,PKBV10,JL04,JLB07,ALJKZ08}.
The authors in~\cite{ADM14} 
analyzed $11$ well-known biological networks 
and $9$ well-known social networks
for their  $\curvgrom$ values and found all but one network had a \emph{statistically significant small} value
of $\curvgrom$.
These references also describe implications of range of $\curvgrom$
on the actual real-world applications of these networks.
As mentioned in the following subsection, the 
Gromov-hyperbolicity measure
is \emph{fundamentally different} from the commonly used topological properties for a graph; 
for example, it is \emph{neither} a hereditary \emph{nor} a monotone property, 
is \emph{not} the same as tree-width measure or 
other standard combinatorial properties that are commonly used in the computer science literature, and 
\emph{not} necessarily a measure of closeness to tree topology.

\subsubsection{Some clarifying remarks regarding Gromov-hyperbolicity measure}

As pointed out in details by the authors in~\cite[Section 1.2.1]{DKMY18}, the 
Gromov-hyperbolicity measure 
$\curvgrom$
enjoys many non-trivial topological characteristics.
In particular, the authors in~\cite[Section 1.2.1]{DKMY18} point out the following:
\begin{enumerate}[label=$\triangleright$,leftmargin=0.7cm]
\item
$\curvgrom$
is \emph{not} a hereditary or monotone property
since removal of nodes or edges may change the value of $\curvgrom$ sharply. 
\item
$\curvgrom$
is \emph{not} necessarily the same as tree-width measure (see also~\cite{MSV11,ADM14}), or
other standard combinatorial properties (\EG, betweenness centrality, clustering coefficient,
dense sub-graphs) that are commonly used in the computer science literature.
\item
``Close to hyperbolic topology'' is \emph{not} necessarily the same as ``close to tree topology''.
\end{enumerate}

\subsection{Geometric curvatures}
\label{sec-geom-defn}

In this section, we describe geometric curvatures of graphs 
by using correspondence with topological objects in higher dimension.
The approach of using associations of sub-graphs with 
with topological objects in higher dimension has also been used in some 
previous papers such as~\cite{WJS16} but our anomaly detection approach is quite 
different from them.

\subsubsection{Basic topological concepts}
\label{sec-app-topo}

We first review some basic concepts from topology; see introductory 
textbooks such as~\cite{H94,GG99} for further information.
Although not absolutely necessary, the reader may find it useful to think of 
the underlying metric space 
as the $r$-dimensional real space $\R^r$ be for some integer $r>1$.
\begin{enumerate}[label=$\blacktriangleright$,leftmargin=0.7cm]
\item
A subset $S\subseteq \R^r$ is \emph{convex} if and only if for any $x,y\in S$, 
the \emph{convex combination} of $x$ and $y$ 
is also in $S$.
\item
A set of $k+1$ points $x_0,\dots,x_k\in \R^r$ are called \emph{affinely independent} if and only if 
for all $\alpha_0,\dots,\alpha_k\in\R$ 
$\sum_{j=0}^k \alpha_j x_j=0$ 
and 
$\sum_{j=0}^k \alpha_j=0$ 
implies $\alpha_0=\dots=\alpha_k=0$.
\item
The $k$-\emph{simplex} generated by a set of $k+1$ affinely independent points 
$x_0,\dots,x_k\in \R^r$ is the subset 
$\cS\big(x_0,\dots,x_k\big)$
of $\R^r$
generated by \emph{all} convex combinations of 
$x_0,\dots,x_k$. 
\begin{enumerate}[label=$\triangleright$,leftmargin=0.7cm]
\item
Each $(\ell+1)$-subset 
$
\big\{x_{i_0},\dots,x_{i_\ell}\big\}
\subseteq
\big\{x_0,\dots,x_k\big\}
$
defines the $\ell$-simplex 
$\cS\big(x_{i_0},\dots,x_{i_\ell}\big)$
that is called a \emph{face} of dimension $\ell$ (or a $\ell$-\emph{face}) of 
$\cS\big(x_0,\dots,x_k\big)$.
A $(k-1)$-face, $1$-face and $0$-face is called a \emph{facet}, an \emph{edge} and a \emph{node}, respectively.
\end{enumerate}
\item
A (closed) \emph{halfspace} is a set of points satisfying 
$\sum_{j=1}^r a_j x_j \leq b$ for some $a_1,\dots,a_r,b\in\R$. 
The convex set obtained by a bounded non-empty intersection of a finite number of halfspaces is called 
a \emph{convex polytope}
(\emph{convex polygon} in two dimensions). 
\begin{enumerate}[label=$\triangleright$]
\item
If the intersection of a halfspace and a convex polytope is a subset of the halfspace then 
it is called a \emph{face} of the polytope. Of particular interests are faces of 
dimensions $r-1$, $1$ and $0$, which are called \emph{facets}, \emph{edges} and \emph{nodes} of the polytope, respectively.
\end{enumerate}
\item
A \emph{simplicial complex} (or just a complex) 
is a topological space constructed by the union of simplexes via topological associations.
\end{enumerate}

\subsubsection{Geometric curvature definitions}
\label{sec-gmeq1}

Informally, a \emph{complex} is ``glued'' from nodes, edges and polygons via topological identification.
We first define $k$-complex-based \emph{Forman's combinatorial Ricci curvature}
for elementary components (such as nodes, edges, triangles and higher-order cliques)  
as described in~\cite{Bl14,Fo03,WJS16,WSJ16}, and then obtain a scalar curvature that takes an appropriate 
linear combination of these values (via Gauss-Bonnet type theorems, see for example~\cite[Sections $4.1$--$4.3$]{WSJ16}  
and the references therein)
that correspond to the so-called 
\emph{Euler characteristic} of the complex 
that is 
topologically associated with the given graph.  
In this paper, we consider such Euler characteristics of a graph to define geometric curvature.

To begin the topological association, we (topologically) associate a $q$-simplex with a $(q+1)$-clique $\cK_{q+1}$; 
for example, $0$-simplexes, $1$-simplexes, $2$-simplexes and $3$-simplexes
are associated with nodes, edges, $3$-cycles (triangles) and $4$-cliques, respectively.
Next, we would also need the concept of an ``order'' of a simplex for more
non-trivial topological association. 
Consider a $p$-face $f^p$ of a $q$-simplex.
An order $d$ association of such a face, which we will denote by the notation $f_d^p$ 
with the additional subscript $d$,
is associated with a sub-graph of \emph{at most} $d$ nodes that is obtained by starting with 
$\cK_{p+1}$ and then \emph{optionally} replacing each edge by a path between the two nodes. 
For example, 
\begin{itemize}
\item 
$f_d^0$ is a node of $G$ for all $d\geq 1$.
\item 
$f_2^1$ is an edge, and 
$f_d^1$ for $d>2$ is a path having at most $d$ nodes between two nodes adjacent in $G$.
\item 
$f_3^2$ is a triangle (cycle of $3$ nodes or a $3$-cycle), and 
$f_d^2$ for $d>3$ is obtained from $3$ nodes by connecting every pair of nodes by a path such that 
the total number of nodes in the sub-graph is at most $d$.
\end{itemize}
%
Naturally, the higher the values of $p$ and $q$ are, the more complex are the topological associations.
Let $\cF_d^k$ be the set of all $f_d^k$'s that are topologically associated. 
With such associations via $p$-faces of order $d$, the
Euler characteristics of the graph $G=(V,E)$ and consequently the curvature can be defined as
\begin{gather}
\mathfrak{C}^p_d(G)\eqdef
\sum\limits_{k=0}^p (-1)^k\, \left| \cF_d^k \right|
\label{gmeq1}
\end{gather}
It is easy to see that both 
$\mathfrak{C}^0_d(G)$ 
and 
$\mathfrak{C}^1_d(G)$ 
are too simplistic to be of use in practice. Thus, we consider the next higher value
of $p$ in this paper, namely when $p=2$. 
Letting $\cC(G)$ denote the number of cycles of at most $d+1$ nodes in $G$, 
we get the measure 
\begin{gather*}
\mathfrak{C}^2_d(G) = |V|-|E|+ |\cC(G)|
\end{gather*}

\noindent
\textbf{Suitability of geometric curvature measures for real-world networks}: 
The usefulness of geometric curvatures for real-world networks
was demonstrated in publications such as~\cite{WJS16,WSJ16,SSG18}.

\section{Formalizations of two anomaly detection problems on networks}
\label{def-prob}

In this section, we formalize two versions of the anomaly detection problem on networks. 
An underlying assumption on the behind these formulations is that 
the graph adds/deletes \emph{edges} only while keeping the same set of nodes.

\subsection{Extremal anomaly detection for static networks}
\label{sec-static}

The problems in this subsection are motivated by a desire to quantify the extremal sensitivity 
of static networks. The basic decision 
question is: ``\emph{is there a subset among a set of prescribed edges whose deletion may change the network
curvature significantly}?''. This directly leads us to the following \emph{decision} problem:

\smallskip

\begin{flushleft}
\begin{tabular}{r l }
\toprule
$\!\!\!\!\!\!$
\textbf{Problem name}: & Extremal Anomaly Detection Problem 
\\
                & (\eadp$_{\mathfrak{C}}(G,\wte,\gamma)$)
\\
\textbf{Input}: & $\bullet$ A curvature measure $\mathfrak{C}:G\mapsto\R$ 
\\
                & $\bullet$ A connected graph $G=(V,E)$, 
\\
								& \hspace*{0.1in} an edge subset $\wte\subseteq E$ such that $G\setminus\wte$ is connected,
\\
                & \hspace*{0.1in} and a real number $\gamma<\mathfrak{C}(G)$ (\emph{resp}., $\gamma>\mathfrak{C}(G)$)
\\
\begin{tabular}{r}
\textbf{Decision}
\\
\textbf{question}
\end{tabular}
$\!\!\!\!\!$: & is there an edge subset $\whe\subseteq\wte$ such that 
                              $\mathfrak{C}(G\setminus \whe)\leq\gamma$
\\              
                & \hspace*{1.85in} (\emph{resp}., $\mathfrak{C}(G\setminus \whe)\geq\gamma$) ?
\\
\begin{tabular}{r}
\textbf{Optimization}
\\
\textbf{question}
\end{tabular}
$\!\!\!\!\!$: & 
							 $\!\!\!\!\!$
               \begin{tabular}{l}
                if the answer to the decision question is ``yes'' 
								\\
								\hspace*{0.3in} then minimize $|\whe|$
               \end{tabular}
\\
\textbf{Notation}: & if the answer to the decision question is ``yes'' then 
\\
                   & \hspace*{0.2in} 
									   the minimum possible value of $|\whe|$ 
\\
                   & \hspace*{0.2in} 
										 is denoted by 
										 $\mathsf{OPT}_{\mbox{\eadp}_{\mathfrak{C}}}(G,\wte,\gamma)$
\\
\bottomrule
\end{tabular}
\end{flushleft}

\smallskip

\noindent
The following comments regarding the above formulation should be noted:
\begin{enumerate}[label=$\triangleright$,leftmargin=0.7cm]
\item
For the case $\gamma<\mathfrak{C}(G)$ (\emph{resp}., $\gamma>\mathfrak{C}(G)$) 
we allow 
$\mathfrak{C}(G\setminus \wte)>\gamma$
(\emph{resp}., $\mathfrak{C}(G\setminus \wte)<\gamma$), thus $\whe=\wte$ 
need \emph{not} be a feasible solution at all.
\item
The curvature function is only defined for connected graphs, thus we require
$G\setminus\wte$ to be connected.
\item
The edges in $E\setminus\wte$ can be thought of as ``critical'' edges 
needed for the functionality of the network. For example, in the context of 
inference of minimal biological networks from indirect experimental evidences~\cite{ADDKSZW07,ADDS},
the set of critical edges represent direct biochemical interactions with concrete evidence.
\end{enumerate}

\subsection{Targeted anomaly detection for dynamic networks}
\label{sec-dynamic}

These problems are primarily motivated by change-point detections between two successive discrete time steps 
in dynamic networks~\cite{AC17,KS09}, but they can also be applied to static networks when a subset of the 
final desired network \emph{is} known.
\FI{ex1-fig} illustrates targeted anomaly detection for a dynamic biological network.

\bigskip
\begin{flushleft}
\begin{tabular}{r l}
\toprule
$\!\!\!\!\!\!$
\begin{tabular}{r}
\textbf{Problem}
\\
\textbf{name}
\end{tabular}
$\!\!\!\!\!$: & Targeted Anomaly Detection Problem (\badp$_{\mathfrak{C}}(G_1,G_2)$)
\\
\textbf{Input}: & $\bullet$ Two connected graphs $G_1=(V,E_1)$ and $G_2=(V,E_2)$ 
\\
                & \hspace*{0.2in} with $E_2\subset E_1$
\\
                & $\bullet$ A curvature measure $\mathfrak{C}:G\mapsto\R$ 
\\
\begin{tabular}{r}
\textbf{Valid}
\\
\textbf{solution}
\end{tabular}
$\!\!\!\!\!$: & 
an edge subset $E_3\subseteq E_1\setminus E_2$ such that 
$\mathfrak{C}(G_1\setminus E_3)=\mathfrak{C}(G_2)$. 
\\
\textbf{Objective}: & \emph{minimize} $|E_3|$.
\\
\textbf{Notation}: & 
							        $\!\!\!\!\!$
                      \begin{tabular}{l}
                         the minimum value of $|E_3|$ 
										     \\
                         \hspace*{0.2in} 
                         is denoted by $\mathsf{OPT}_{\mbox{\badp}_{\mathfrak{C}}}(G_1,G_2)$
                      \end{tabular}
\\
\bottomrule
\end{tabular}
\end{flushleft}

\section{Computational complexity of extremal anomaly detection problems}
\label{sec-proof-extreme}

\subsection{Geometric curvatures: computational complexity of \eadp$_{\mathfrak{C}^2_d}$} 
\label{sec-proof-extreme-geometric}

\begin{theorem}\label{ext-thm}~\\
\textbf{\emph{(\emph{a})}}
The following statements hold for \eadp$_{\mathfrak{C}^2_d}(G,\wte,\gamma)$
when $\gamma>\mathfrak{C}^2_d(G)$:
\medskip
\begin{adjustwidth}{0.1in}{}
\begin{description}
\item[(\emph{a}1)]
We can decide in polynomial time the answer to the 
decision question (\emph{\IE}, if there exists any feasible solution $\whe$ or not).
\item[(\emph{a}2)]
If a feasible solution exists then the following results hold: 
\begin{description}
\item[(\emph{a}2-1)]
Computing 
$\mathsf{OPT}_{\mbox{\eadp}_{\mathfrak{C}^2_d}}(G,\wte,\gamma)$ 
is $\NP$-hard for 
all $d$ that are multiple of $3$.
\item[(\emph{a}2-2)]
If $\gamma$ is sufficient larger than $\mathfrak{C}^2_d(G)$ 
then we can design an approximation algorithm that approximates both 
the cardinality of the minimal set of edges for deletion and the absolute difference between the
two curvature values. More precisely, 
if $\gamma \geq \mathfrak{C}^2_d(G) + \left(\frac{1}{2}+\eps\right)(2|\wte|-|E|)$ 
for some $\eps>0$,
then we can find in polynomial time
a subset of edges $E_1\subseteq\wte$ such that 
\[
|E_1|
\leq 2 \, \mathsf{OPT}_{\mbox{\eadp}_{\mathfrak{C}^2_d}}(G,\wte,\gamma)
\text{ and }
\frac {\mathfrak{C}^2_d(G\setminus E_1) - \mathfrak{C}^2_d(G) } 
{
\gamma-\mathfrak{C}^2_d(G)
}
\geq
\frac{4\eps}{1+2\eps} 
\]
\end{description}
\end{description}
\end{adjustwidth}
%
\textbf{\emph{(\emph{b})}}
The following statements hold for \eadp$_{\mathfrak{C}^2_d}(G,\wte,\gamma)$
when $\gamma<\mathfrak{C}^2_d(G)$:
\medskip
\begin{adjustwidth}{0.1in}{}
\begin{description}
\item[(\emph{b}1)]
We can decide in polynomial time the answer to the 
decision question (\emph{\IE}, if there exists any feasible solution $\whe$ or not).
\item[(\emph{b}2)]
If a feasible solution exists 
and $\gamma$ is not too far below $\mathfrak{C}^2_d(G)$ 
then we can design an approximation algorithm that approximates both 
the cardinality of the minimal set of edges for deletion and the absolute difference between the
two curvature values. More precisely, letting $\Delta$ denote the number of 
cycles of $G$ of at most $d+1$ nodes that contain at least one edge from $\wte$,
if $\gamma \geq \mathfrak{C}^2_d(G) - \frac{\Delta}{1+\eps}$ 
for some $\eps>0$
then we can find in polynomial time
a subset of edges $E_1\subseteq\wte$ such that 
\[
|E_1|
\leq 2 \, \mathsf{OPT}_{\mbox{\eadp}_{\mathfrak{C}^2_d}}(G,\wte,\gamma)
\text{ and }
\frac {\mathfrak{C}^2_d(G\setminus E_1) - \mathfrak{C}^2_d(G) } 
{
\gamma-\mathfrak{C}^2_d(G)
}
\leq
1 - \eps
\]
%
\item[(\emph{b}3)]
If $\gamma<\mathfrak{C}^2_d(G)$ then, even if 
$\gamma=\mathfrak{C}^2_d(G\setminus\wte)$ (\emph{\IE}, a trivial feasible solution exists), 
computing 
$\mathsf{OPT}_{\mbox{\eadp}_{\mathfrak{C}^2_d}}(G,\wte,\gamma)$
is at least as hard as computing 
\badp$_{\mathfrak{C}^2_d}(G_1,G_2)$ and therefore \textbf{all} the hardness results 
for \badp$_{\mathfrak{C}^2_d}(G_1,G_2)$
in Theorem~\ref{thm-badp} also apply to 
$\mathsf{OPT}_{\mbox{\eadp}_{\mathfrak{C}^2_d}}(G,\wte,\gamma)$.
\end{description}
\end{adjustwidth}
\end{theorem}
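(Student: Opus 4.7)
The plan is to begin from a single reformulation that unifies all six sub-claims. Since removing edges cannot create new short cycles, for every $\whe\subseteq\wte$ one has
\[
\mathfrak{C}^2_d(G\setminus\whe) \;=\; \mathfrak{C}^2_d(G) + |\whe| - D(\whe),
\]
where $D(\whe)$ denotes the number of cycles of length at most $d{+}1$ in $G$ that contain at least one edge of $\whe$. Hence every part of the theorem reduces to a question about the set functions $\whe\mapsto|\whe|-D(\whe)$ (in case~(a), needed to be $\geq\gamma-\mathfrak{C}^2_d(G)>0$) or $\whe\mapsto D(\whe)-|\whe|$ (in case~(b), needed to be $\geq\mathfrak{C}^2_d(G)-\gamma>0$).

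For the decision parts (a1) and (b1) I would use min-cut/max-flow machinery. In case~(a1) I build a DAG with a node of weight $+1$ for each short cycle meeting $\wte$, a node of weight $-1$ for each $e\in\wte$, and an arc from each cycle-node $C$ to every edge-node $e\in C\cap\wte$. A routine calculation using the substitution $S=\wte\setminus\whe$ shows that $\max_\whe(|\whe|-D(\whe))$ equals an explicit additive constant plus the maximum-weight closed subset of this DAG, so Picard's polynomial-time min-cut reduction settles feasibility. Case~(b1) is more delicate because the constraint is a disjunction (``\emph{some} edge of $C$ is picked'') rather than a conjunction, but by dualizing (working with $\wte\setminus\whe$) and exploiting that every short cycle has at most $d{+}1$ edges, an analogous flow-based characterization still yields polynomial-time decidability.

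For the hardness (a2-1), when $3\mid d$ I plan a reduction from \xtc: each $3$-element set becomes a cycle of length $d{+}1$ built by gluing three length-$(d/3)$ gadget paths into a triangle-like structure, so that a size-$k$ exact cover translates into a $\whe$ of size $k$ (plus additive gadget terms) achieving the prescribed curvature gain, and vice-versa; the modular-$3$ condition is intrinsic to building the cycle from a triple. For the bicriteria approximations (a2-2) and (b2) I would round the natural LP relaxation of the closure problem (or of its (b)-counterpart): the slack hypotheses on $\gamma$, namely $\gamma\geq\mathfrak{C}^2_d(G)+(\tfrac12{+}\eps)(2|\wte|{-}|E|)$ and $\gamma\geq\mathfrak{C}^2_d(G)-\tfrac{\Delta}{1+\eps}$, control the LP-integrality-gap loss, and the rounded solution yields both $|E_1|\leq 2\cdot\mathsf{OPT}_{\mbox{\eadp}_{\mathfrak{C}^2_d}}$ and the claimed $\tfrac{4\eps}{1+2\eps}$ (respectively $1-\eps$) ratio on the normalized curvature gap. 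For~(b3), I plan a Karp reduction \badp$\leq_P$\eadp: given a \badp instance $(G_1,G_2)$, take $G:=G_1$, $\wte:=E_1\setminus E_2$, $\gamma:=\mathfrak{C}^2_d(G_2)$, so that $G\setminus\wte=G_2$ and the trivial $\whe=\wte$ is feasible. The reduction is then completed by arguing that for this parameter choice $\mathsf{OPT}_{\mbox{\eadp}_{\mathfrak{C}^2_d}}=\mathsf{OPT}_{\mbox{\badp}_{\mathfrak{C}^2_d}}$: any $\whe^\star$ optimal for \eadp with $\mathfrak{C}^2_d(G\setminus\whe^\star)<\gamma$ can be transformed (by swapping out ``wasted'' edges of $\whe^\star$ for others in $\wte\setminus\whe^\star$ that raise the curvature back up to exactly $\gamma$) into a \badp-feasible solution of the same cardinality, so every hardness result from Theorem~\ref{thm-badp} transfers.

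The main obstacle will be (a2-2) and (b2): I must round so that the integrality-gap losses simultaneously respect the factor-$2$ cardinality bound \emph{and} the explicit $\eps$-dependent curvature-ratio bound, which requires matching the rounding analysis tightly against the precise slack quantities in the hypotheses rather than using a generic rounding scheme. A secondary delicate point is the equality $\mathsf{OPT}_{\mbox{\eadp}_{\mathfrak{C}^2_d}}=\mathsf{OPT}_{\mbox{\badp}_{\mathfrak{C}^2_d}}$ needed in (b3), since the non-monotonicity of $|\whe|-D(\whe)$ as a function of $\whe$ makes it non-obvious that an \eadp solution undershooting $\gamma$ can always be lifted back to curvature exactly $\gamma$ without increasing cardinality; handling this requires a case analysis based on how many short cycles each edge of $\whe^\star$ uniquely destroys.
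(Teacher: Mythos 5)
Your starting identity $\mathfrak{C}^2_d(G\setminus\whe)=\mathfrak{C}^2_d(G)+|\whe|-D(\whe)$ is exactly the relation the paper works with, and your treatment of (\emph{a}1) via maximum-weight closure/Picard is essentially the paper's proof in disguise: the paper's $s$-$t$ network (unit edge-arcs from $s$, unit cycle-arcs into $t$, infinite edge-to-cycle arcs) is precisely the min-cut form of that closure problem, so this part is fine. Your instance mapping for (\emph{b}3) ($G:=G_1$, $\wte:=E_1\setminus E_2$, $\gamma:=\mathfrak{C}^2_d(G_2)$) also coincides with the paper's, and for (\emph{a}2-2)/(\emph{b}2) the general plan (LP relaxation of the cut formulation plus rounding, with the $\gamma$-slack hypotheses absorbing the loss) is the right spirit.

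The remaining parts have genuine gaps. For (\emph{a}2-1), the \xtc\ sketch does not match the structure of the problem: a feasible $\whe$ in case (\emph{a}) must satisfy $|\whe|-D(\whe)\geq\gamma-\mathfrak{C}^2_d(G)>0$, i.e.\ one needs \emph{many} deleted edges that jointly destroy \emph{few} short cycles, which is a dense-subgraph phenomenon; the paper accordingly reduces from \dkst\ via the Feige--Seltser construction (nodes become triangles, adjacent nodes share a $\wte$-edge, and $d=3\mu$ is handled by subdividing every edge into a length-$\mu$ path), and you give no account of how exact covers --- which are about disjointly covering \emph{all} elements --- would correspond to optimal solutions here. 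For (\emph{b}1)/(\emph{b}2) you correctly identify the disjunction (``at least one edge of each cycle'') as the obstacle, but then simply assert that an analogous flow characterization exists; that assertion is exactly where the work lies, since coverage-minus-cost is not a closure objective, and the paper in fact changes the construction entirely (cycle-arcs leave the source, infinite arcs force \emph{all} $\wte$-edges of a selected cycle, and for (\emph{b}2) a path-based cut polytope replaces the node-potential one). For (\emph{a}2-2)/(\emph{b}2) the mechanism that actually yields $|E_1|\leq 2\,\mathsf{OPT}$ is absent: the paper guesses $\kappa=\mathsf{OPT}$, adds the cardinality constraint ($\sum_{e}p_{u_e}=\kappa$, resp.\ $\sum_{e}d_{(u_e,t)}=\kappa$) to the polytope, rounds at threshold $\nicefrac{1}{2}$, and then gets the curvature-gap bound from the rounded cut having capacity at most $2\Gamma$ combined with the slack hypothesis on $\gamma$; none of this is in your outline, and you flag it yourself as the main obstacle. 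Finally, for (\emph{b}3) your plan to prove $\mathsf{OPT}$-equality by swapping edges of an undershooting solution is both unnecessary and unsound in general: deleting a single edge changes $|\whe|-D(\whe)$ by $1$ minus the number of cycles it destroys, so the target value $\gamma$ can be skipped and no equal-cardinality exact solution need exist (e.g.\ one edge lying on several cycles already gives $\mathsf{OPT}$ of \eadp\ strictly below that of \badp). The paper needs no such argument because the specific instances produced in the proof of Theorem~\ref{thm-badp} satisfy $\mathfrak{C}^2_d(G_1\setminus E_3)\geq\gamma$ for \emph{every} $E_3\subseteq\wte$, so ``$\leq\gamma$'' and ``$=\gamma$'' coincide there; without that structural fact the hardness transfer does not go through.
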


\subsubsection{Proof techniques and relevant comments regarding Theorem~\ref{ext-thm}}
\label{informal-ext-thm}

\paragraph{\em\bf(on proofs of \textbf{(\emph{a}1)} and \textbf{(\emph{b}1)})}
After eliminating a few ``easy-to-solve'' sub-cases, 
we prove the remaining cases of \textbf{(\emph{a}1)} and \textbf{(\emph{b}1)}
by reducing the feasibility questions to suitable minimum-cut problems; the reductions and proofs are somewhat
different due to the nature of the objective function. It would of course be of interest if a single algorithm and proof
can be found that covers both instances and, more importantly, if a direct and more efficient 
greedy algorithm can be found that \emph{avoids} the maximum flow computation. 

\paragraph{\em\bf(on proofs of \textbf{(\emph{a}2-2)} and \textbf{(\emph{b}2)})}
Our general approach to 
prove \textbf{(\emph{a}2-2)} and \textbf{(\emph{b}2)} is to formulate 
these problems as a series of (provably $\NP$-hard and polynomially many) ``constrained'' minimum-cut 
problems.
We start out with \emph{two different} (but well-known) polytopes for the 
minimum cut problem (polytopes~\eqref{pto1} and \eqref{pto1}$'$). 
Even though the polytope~\eqref{pto1}$'$ is of exponential size for general graphs, 
it is of polynomial size for our particular minimum cut version and so we do \emph{not} need to 
appeal to separation oracles for its efficient solution.
We subsequently add extra constraints corresponding to a parameterized version of the minimization 
objective and solve the resulting augmented polytopes (polytopes~\eqref{pto2} and \eqref{pto2}$'$)
in polynomial time to get a fractional solution and use a simple deterministic rounding scheme to 
obtain the desired bounds. 
\begin{enumerate}[label=$\triangleright$,leftmargin=0.7cm]
\item
Our algorithmic approach uses a sequence of $\lceil \log_2 (1 + |\wte|) \rceil=O(\log |E|)$ linear-programming ($\LP$)
computations by using an obvious binary search over the relevant parameter range.
It would be interesting to see if we can do the same using $O(1)$ $\LP$ computations. 
\item
Is the factor $2$ in 
``$|E_1| \leq 2 \, \mathsf{OPT}_{\mbox{\eadp}_{\mathfrak{C}^2_d}}(G,\wte,\gamma)$''
an artifact of our specific rounding scheme around the threshold of $\nicefrac{1}{2}$ and perhaps
can be improved using a cleverer rounding scheme? 
This seems \emph{unlikely} for the case when 
$\gamma<\mathfrak{C}^2_d(G)$ since the inapproximability results in 
\textbf{(\emph{b}3)}
include a 
$(2-\eps)$-inapproximability 
assuming the unique games conjecture is true.
However, this possibility cannot be ruled out for the case when 
$\gamma>\mathfrak{C}^2_d(G)$ since we can only prove $\NP$-hardness for this case.
\item
There are subtle but crucial differences between the rounding schemes for 
\textbf{(\emph{a}2-2)} and \textbf{(\emph{b}2)} that is essential to proving the desired bounds.
To illustrate this, consider an edge $e$ with a fractional value of $\nicefrac{1}{2}$ for its corresponding variable.
In the rounding scheme~\eqref{rs1} of \textbf{(\emph{a}2-2)}
$e$ will only \emph{sometimes} be designated as a cut edge,  whereas 
in the rounding scheme~\eqref{rs1}$'$ of \textbf{(\emph{b}2)}
$e$ will \emph{always} be designated as a cut edge.
\end{enumerate}

\paragraph{\em\bf (on the bounds over $\pmb{\gamma}$ in (\emph{a}2-2)} 
If $|\wte| \lessapprox \frac{1}{2}|E|$ then the condition on $\gamma$ \emph{is} redundant (\IE, \emph{always} holds).
Thus indeed the $2$-approximation is likely to hold \emph{unconditionally} for \emph{practical} applications of this problem since 
anomaly is supposed to be caused by a \emph{large} change in curvature by a relatively 
\emph{small} number of elementary components (edges in our cases).

Furthermore, 
if $|E|\leq 2|V|$ then the condition on $\gamma$ always holds \emph{irrespective} of the value of $|\wte|$, 
and the \emph{smaller} is $|\wte|$ with respect to $|E|$ the \emph{better} is our approximation of the curvature 
difference. 
As a general illustration, 
when 
$\eps=\nicefrac{1}{5}$
the assumptions are 
$
\gamma \geq \mathfrak{C}^2_d(G) + \frac{7}{10}(2|\wte|-|E|)
$,
and the corresponding bounds are
$
|E_1| \leq 2 \, \mathsf{OPT}_{\mbox{\eadp}_{\mathfrak{C}^2_d}}(G,\wte,\gamma) 
\,\,\,\,
\text{and}
\,\,\,\,
\frac { \mathfrak{C}^2_d(G\setminus E_1) - \mathfrak{C}^2_d(G) }
{ \gamma-\mathfrak{C}^2_d(G) }
\geq
\frac{4}{7} 
$.

\paragraph{\em\bf(on the hardness proof in \textbf{(\emph{a}2-1)})}
Our reduction is from the \emph{densest-$k$-subgraph} (\dkst) problem. 
We use the reduction from the CLIQUE problem to \dkst 
detailed by Feige and Seltser in~\cite{FS97} which shows that 
\dkst is $\NP$-hard even if the degree of every node is at most $3$. 
For convenience in doing calculations, 
we use the reduction of Feige and Seltser
starting from the still $\NP$-hard version of the CLIQUE problem where the input instances are
$(n-4)$-regular $n$-node graphs. 
Pictorially, the reduction is illustrated in \FI{fig3}.
Note that \dkst is \emph{not} known to be $(1+\eps)$-inapproximable assuming P$\neq\NP$ (though it is
likely to be), and thus our particular reduction cannot be generalized to
$(1+\eps)$-inapproximability assuming P$\neq\NP$.

\begin{figure}[htbp]
\centerline{\includegraphics[scale=0.6]{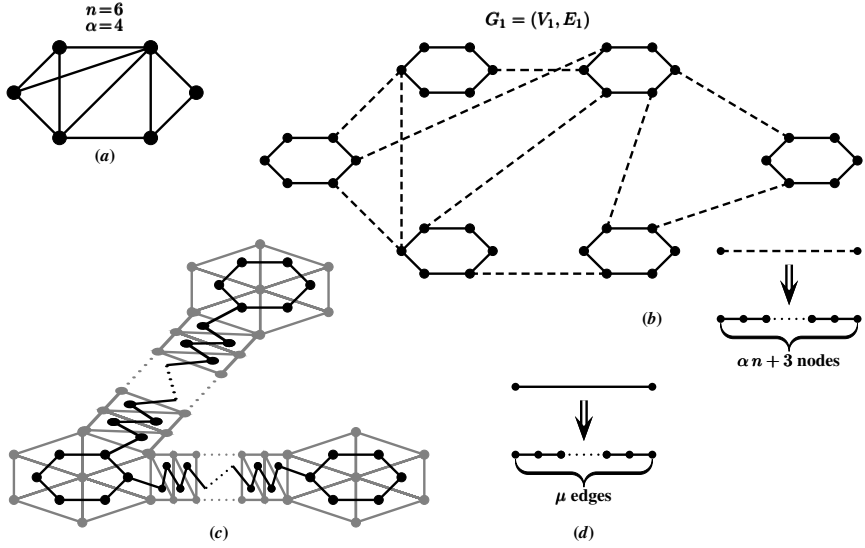}}
\caption{\label{fig3}Illustration of the reduction in the proof of Theorem~\ref{ext-thm}(\emph{a}2-1).
\textbf{(\emph{a})} 
The original instance of the $\alpha$-CLIQUE problem on an $n$-node graph.
\textbf{(\emph{b})} 
Illustration of the $\NP$-hardness reduction 
from $\alpha$-CLIQUE to \dkst by Feige and Seltser~\cite{FS97}.
\textbf{(\emph{c})} 
Illustration of associations of nodes with unique $3$-cycles such that an edge between two nodes are adjacent 
correspond to sharing an unique edge of 
their associated $3$-cycles. 
\textbf{(\emph{d})} 
Splitting of every edge of $G_1$ into a path of $\mu$ edges 
for the case when $d=3\mu$ for some integer $\mu>1$.
}
\end{figure}

\subsubsection{Proof of Theorem~\ref{ext-thm}}
\label{sec-app-proof-ext-thm}

\noindent
\textbf{Proof of (\emph{a}1)}

Let the notation $\cC(H)$ denote the set of cycles having at most $d+1$ nodes in a graph $H$. 
Assume $\Delta=|\cC(G)|$ and let $\cC(G)=\set{\cF_1,\cF_2,\dots,\cF_\Delta}$; 
thus $\mathfrak{C}^2_d(G)=n-m+\Delta$ where $|V|=n$ and $|E|=m$.
Since $d$ is fixed, $\Delta=O(n^d)$ and all the cycles in $\cC(G)$ 
can be explicitly enumerated in polynomial ($O(n^d)$) time. 
Let $\cC'(G)=\set{\cF_1,\cF_2,\dots,\cF_{\Delta'}}\subseteq\cC(G)$
be the set of $\Delta'\leq\Delta$ 
cycles in $\cC(G)$ that involve one of more edges from $\wte$.
An \emph{overview} of the main steps in our proof for (\emph{a}1) is as follows.
\begin{align*}
\text{\bf 1.} &
  \text{We identify sub-cases that are easy to solve.}
\\
\text{\bf 2.} &
  \text{For all remaining sub-cases, we reduce our problem to a standard}
\\
                   &
  \text{(directed) minimum $s$-$t$ cut problem such that the following}
\\
                   &
  \text{statements hold:}
\\
                   &
  \hspace*{0.2in}
  \text{$\triangleright$ The cut network can be constructed in polynomial time.}
\\
                   &
  \hspace*{0.2in}
  \text{$\triangleright$ There exists a feasible solution of \eadp$_{\mathfrak{C}^2_d}(G,\wte,\gamma)$ 
	        if and}
\\
                   &
  \hspace*{0.35in} 
	\text{only if the minimum cut value is at most $\Gamma\eqdef\mathfrak{C}^2_d(G)-\gamma + |\wte|$.
	}
\end{align*}

\noindent
\textbf{Step 1. Identifying sub-cases that are easy to solve}

\medskip
We first observe that the following sub-cases are easy to solve:
\begin{itemize}
\item
If $\gamma>n-(m-|\wte|)+\Delta$ 
then we can assert that there is no feasible solution. 
This is true because 
for any 
$E'\subseteq \wte$ it is true that
$\mathfrak{C}^2_d(G\setminus E')$ is 
at most $n-(m-|\wte|)+\Delta$.
\item
If 
$\gamma\leq n-(m-|\wte|)+\Delta$ 
and 
$\Delta'=0$ then 
there exists a \emph{trivial} optimal feasible solution of the following form:  
\begin{quote}
select any set of $m_1$ edges from $\wte$ 
where 
$m_1$ is the least positive integer satisfying 
$n-(m_1-|\wte|)+\Delta\geq\gamma$.  
\end{quote}
\end{itemize}

\noindent
\textbf{Step 2. Solving all remaining sub-cases}

\medskip
We assume that 
$\gamma\leq n-(m-|\wte|)+\Delta$ 
and $\Delta'>0$.
Consider a subset $E_1\subseteq \wte$ of $m_1=|E_1|\leq |\wte|$ edges for deletion and suppose that removal 
of the edges in $E_1$ removes $\Delta_1\leq\Delta'$ cycles from $\cC'(G)$ (\IE, $|\cC'(G\setminus E_1)|=\Delta'-\Delta_1$).
Then,
\begin{multline}
\mathfrak{C}^2_d(G\setminus E_1)
=
n-(m-m_1)+(\Delta-\Delta_1)
\\
=
n-m+\Delta+(m_1-\Delta_1)
=
\mathfrak{C}^2_d(G)+(m_1-\Delta_1)
\label{eq-minus-e1}
\end{multline}
and consequently one can observe that 
\begin{multline}
\mathfrak{C}^2_d(G\setminus E_1) \geq \gamma
\,\equiv\,
m_1-\Delta_1 \geq  \gamma-\mathfrak{C}^2_d(G)
\,\equiv\,
\Delta_1-m_1 \leq  \mathfrak{C}^2_d(G) - \gamma
\\
\equiv\,
\Delta_1 + (m-m_1)  \leq   \mathfrak{C}^2_d(G)-\gamma + m
\,\equiv\,
\Delta_1 + (|\wte|-m_1)  \leq   \mathfrak{C}^2_d(G)-\gamma + |\wte|
	 \eqdef \Gamma
\label{eq-mc}
\end{multline}
Note that 
$
\Gamma 
=
\mathfrak{C}^2_d(G)-\gamma + |\wte|
=
n-(m-|\wte|)+\Delta-\gamma
\geq 0
$ 
and $|\wte|-m_1$ is the number of edges in $\wte$ that 
are \emph{not} in $E_1$ and therefore not selected for deletion.
Also, note that 
$\Gamma$ is a quantity that depends on the problem instance only and 
\emph{does not change if one or more edges are deleted}.
Based on this interpretation, we
construct the following instance (digraph) 
$\cG=(\cV,\cE)$ of a (standard directed) minimum $s$-$t$ cut problem (where $\ccap(u,v)$ 
is the capacity of a directed edge $(u,v)$): 
\begin{itemize}
\item
The nodes in $\cV$ are as follows: 
a source node $s$, a sink node $t$, a node (an ``edge-node'') $u_e$ 
for every edge $e\in \wte$ and a node (a ``cycle-node'') $u_{\cF_i}$ for every cycle $\cF_i\in\cC'(G)$.
The total number of nodes is therefore $O(|\wte|+n^d)$, \IE, polynomial in $n$.
\item
The directed edges in $\cE$ and their corresponding capacities are as follows: 
\begin{itemize}
\item
For every edge $e\in \wte$, 
we have a directed edge $(s,u_e)$ (an ``edge-arc'') of capacity 
$\ccap(s,u_e)=1$.
\item
For every cycle $\cF_i\in\cC'(G)$, 
we have a directed edge (a ``cycle-arc'') $(u_{\cF_i},t)$ of capacity 
$\ccap(u_{\cF_i},t)=1$.
\item
For every cycle $\cF_i\in\cC'(G)$ and 
every edge $e\in \wte$ such that $e$ is an edge of $\cF_i$, we have a directed edge
(an ``ed-cy-arc'', ed-cy-arc for short) $(u_e,u_{\cF_i})$ of capacity $\ccap(u_e,u_{\cF_i})=\infty$.
\end{itemize}
\end{itemize}
For an $s$-$t$ cut $(\cS,\cV\setminus\cS)$ of $\cG$ (where $s\in\cS$ and $t\notin\cS$),
let 
$\cut(\cS,\cV\setminus\cS)=\set{(x,y) \,|\, x\in\cS,\,y\notin\cS}$
and 
$\ccap(\cut(\cS,\cV\setminus\cS))=\sum_{(x,y)\in\cut(\cS,\cV\setminus\cS)} \ccap(x,y)$
denote the edges in the cut and the capacity of the cut, respectively.
It is well-known how to compute a \emph{minimum} $s$-$t$ cut of value 
\\
$
\Phi \eqdef \min_{\emptyset\subset\cS\subset\cV,\, s\in\cS,\, t\notin\cS} \{\ccap(\cut(\cS,\cV\setminus\cS)) \}
$
in polynomial time~\cite{CCPS97}.
The following lemma proves part 
(\emph{a}1) of the theorem.

\begin{lemma}
There exists any feasible solution of 
\eadp$_{\mathfrak{C}^2_d}(G,\wte,\gamma)$
if and only if $\Phi\leq \Gamma$.
Moreover, if 
$(\cS,\cV\setminus\cS)$ 
is a minimum $s$-$t$ cut 
of $\cG$ of value 
$\Phi\leq\Gamma$ 
then $\whe=\set{e \,|\, u_e\in\cS}$
is a feasible solution for
\eadp$_{\mathfrak{C}^2_d}(G,\wte,\gamma)$.
\end{lemma}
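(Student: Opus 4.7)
The plan is to establish a bijection between feasible edge-deletion sets and finite $s$-$t$ cuts of $\cG$ whose value is exactly (or bounded below by) $\Delta_1 + (|\wte| - m_1)$, and then invoke the equivalence in~\eqref{eq-mc} to translate cut-capacity bounds into feasibility of \eadp$_{\mathfrak{C}^2_d}(G,\wte,\gamma)$.

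For the forward direction (feasibility $\Rightarrow$ $\Phi \leq \Gamma$), given any feasible $\whe \subseteq \wte$ with $m_1 = |\whe|$ and $\Delta_1 = |\cC'(G)| - |\cC'(G \setminus \whe)|$, I would construct the explicit cut
\[
\cS \;\eqdef\; \{s\} \,\cup\, \{u_e : e \in \whe\} \,\cup\, \{u_{\cF_i} : \cF_i \in \cC'(G),\ \cF_i \text{ contains an edge of } \whe\}.
\]
The construction is tailored so that no infinite-capacity ed-cy-arc is cut: if $u_e \in \cS$ and $(u_e, u_{\cF_i}) \in \cE$, then $e \in \whe$ is an edge of $\cF_i$, forcing $u_{\cF_i} \in \cS$. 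Counting the cut edge-arcs (one per $e \in \wte \setminus \whe$) and cycle-arcs (one per destroyed cycle), the cut capacity equals $(|\wte| - m_1) + \Delta_1$, which by~\eqref{eq-mc} is at most $\Gamma$. Hence $\Phi \leq \Gamma$.

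For the backward direction (minimum cut of value $\Phi \leq \Gamma$ $\Rightarrow$ feasibility), given such a cut $(\cS, \cV \setminus \cS)$ I would set $\whe = \{e : u_e \in \cS\}$ with $m_1 = |\whe|$, and let $\Delta_1$ be the number of cycles in $\cC'(G)$ that actually contain at least one edge of $\whe$. The infinite-capacity ed-cy-arcs guarantee that any finite cut satisfies $u_e \in \cS \Rightarrow u_{\cF_i} \in \cS$ for every cycle $\cF_i$ containing $e$; therefore every destroyed cycle contributes a cycle-node to $\cS$, which in turn contributes $1$ to the cut via its cycle-arc $(u_{\cF_i}, t)$. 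Consequently
\[
(|\wte| - m_1) + \Delta_1 \;\leq\; |\{u_{\cF_i} \in \cS\}| + (|\wte| - m_1) \;=\; \ccap(\cut(\cS, \cV \setminus \cS)) \;=\; \Phi \;\leq\; \Gamma,
\]
which by~\eqref{eq-mc} is precisely the condition $\mathfrak{C}^2_d(G \setminus \whe) \geq \gamma$, proving that $\whe$ is feasible.

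The main subtlety I anticipate is the slack in the backward inequality: a minimum cut could in principle place some cycle-node $u_{\cF_i}$ in $\cS$ even when no edge of $\cF_i$ lies in $\whe$, which would make the tally of ``destroyed cycles'' strictly smaller than the tally of cycle-nodes on the $s$-side. The key observation is that this slack only works in our favor — it inflates the cut value — so the inequality $\Delta_1 + (|\wte| - m_1) \leq \Phi$ still holds, and feasibility follows. (In fact, at a genuinely minimum cut such superfluous cycle-nodes never appear, since removing them from $\cS$ strictly decreases the cut, but the argument does not need this sharper fact.) Routine polynomial-time constructibility of $\cG$ from the enumerated $\cC'(G)$ and polynomial-time computability of $\Phi$ then deliver a polynomial-time decision procedure, completing the proof of part~(\emph{a}1).
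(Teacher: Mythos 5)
Your proposal is correct and follows essentially the same route as the paper's proof: the same flow network, the same explicit cut $\cS=\{s\}\cup\{u_e : e\in\whe\}\cup\{u_{\cF_i}:\cF_i \text{ meets } \whe\}$ in the forward direction, the same use of the infinite-capacity ed-cy-arcs to force $u_e\in\cS\Rightarrow u_{\cF_i}\in\cS$ in a finite cut, and the same accounting via~\eqref{eq-mc}. Your explicit remark that superfluous cycle-nodes in $\cS$ only inflate the cut value is a small but welcome refinement of a point the paper states as an exact equality $|E_2|=\Delta_1$ without elaboration.
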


\begin{proof}
Suppose that there exists a feasible solution 
$E_1\subseteq\wte$ with $m_1=|E_1|$ edges 
for 
\eadp$_{\mathfrak{C}^2_d}(G,\wte,\gamma)$, and 
suppose that removal 
of the edges in $E_1$ removes $\Delta_1$ cycles from $\cC'(G)$.
Consider the cut $(\cS,\cV\setminus\cS)$ where 
\[
\cS = \{s\} \bigcup \{ u_e \,|\, e\in E_1 \} 
\bigcup 
\left\{
u_{\cF_i} \,|\, \text{$\cF_i$ contains at least one edge from $E_1$}
\right\}
\]
Note that \emph{no} ed-cy-arc belongs to 
$\cut(\cS,\cV\setminus\cS)$
and therefore 
\begin{multline*}
\ccap(\cut(\cS,\cV\setminus\cS))
\\
=
| \set{(s,u_e) \,|\, e\notin E_1} | + 
| \set{(u_{\cF_i},t) \,|\, \text{$\cF_i$ contains at least one edge from $E_1$} } |
\\
=
(|\wte|-m_1)+\Delta_1
\end{multline*}
and thus by Inequality~\eqref{eq-mc} we can conclude that 
\begin{multline*}
\mathfrak{C}^2_d(G\setminus E_1) \geq \gamma
\\
\equiv\,
\Delta_1 + (|\wte|-m_1)  \leq \Gamma
\,\equiv\,
\ccap(\cut(\cS,\cV\setminus\cS)) \leq\Gamma
\,\Rightarrow 
\Phi \leq \ccap(\cut(\cS,\cV\setminus\cS)) \leq\Gamma
\end{multline*}
For the other direction, consider a minimum $s$-$t$ cut 
$(\cS,\cV\setminus\cS)$ of $\cG$ of value $\Phi\leq\Gamma$.
Consider the solution 
$E_1 = \{ e \,|\, u_e\in\cS \} \subseteq\wte$
for 
\eadp$_{\mathfrak{C}^2_d}(G,\wte,\gamma)$, 
and suppose that removal 
of the edges in $E_1$ removes $\Delta_1$ cycles from $\cC'(G)$.
Since $\cG$ admits a trivial $s$-$t$ cut $(\{s\},\cV\setminus\{s\})$ of capacity $m_1<\infty$, 
\emph{no} ed-cy-arc can be an edge of \emph{any} minimum $s$-$t$ cut of $\cG$, \IE, 
$\cut(\cS,\cV\setminus\cS)$ contains only edge-arcs or cycle-arcs.
Let $E_2 = \{ \cF_j \,|\, u_{\cF_j}\in\cS \}$.
Consider an edge $e\in E_1$ and let 
$\cF_j$ be a cycle in $\cC'(G)$ containing $e$. Since 
$\cut(\cS,\cV\setminus\cS)$ contains 
no ed-cy-arc, it does \emph{not} contain the arc 
$(u_e,u_{\cF_j})$. It thus follows that 
the cycle-node $u_{\cF_j}$ must also belong to $\cS$ and thus 
$|E_2| = \Delta_1$. Now note that 
\begin{multline*}
\Phi 
=
| \set{ u_e \,|\, u_e\notin\cS  } |
+
| \set{ u_{\cF_j} \,|\, u_{\cF_j} \in \cS  } |
\\
= (|\wte| - |E_1|) + \Delta_1 
\leq 
\Gamma = \mathfrak{C}^2_d(G)-\gamma + |\wte|
\\
\equiv \,
\mathfrak{C}^2_d(G\setminus E_1) = \mathfrak{C}^2_d(G) + |E_1| - \Delta_1 \geq \gamma 
\end{multline*}
\end{proof}

\noindent
This completes a proof for (\emph{a}1).

\bigskip
\noindent
\textbf{Proof of (\emph{a}2-2)}
We will reuse the proof of (\emph{a}1) as appropriate.
Let 
$\whe\subseteq\wte$ be an optimal solution of the optimization version of 
\eadp$_{\mathfrak{C}^2_d}(G,\wte,\gamma)$ having 
$\mathsf{OPT}_{\mbox{\eadp}_{\mathfrak{C}^2_d}}(G,\wte,\gamma)$ nodes.
Note that $\mathsf{OPT}_{\mbox{\eadp}_{\mathfrak{C}^2_d}}(G,\wte,\gamma)\in\set{1,2,\dots,|\wte|}$ and thus in polynomial time 
we can ``guess'' every possible value of 
\\
$\mathsf{OPT}_{\mbox{\eadp}_{\mathfrak{C}^2_d}}(G,\wte,\gamma)$, 
solve the corresponding optimization problem 
with this additional constraint, and take the best of these solutions. 
In other words, it suffices 
if we can find, under the assumption that 
$\mathsf{OPT}_{\mbox{\eadp}_{\mathfrak{C}^2_d}}(G,\wte,\gamma)=\kappa$ for some $\kappa\in\set{1,2,\dots,|\wte|}$,  
find a solution $E_1\subseteq\wte$ 
satisfying the claims in (\emph{a}2-2).
An \emph{overview} of the main steps in our proof for (\emph{a}2-2) is as 
follows (where the comments are enclosed within a pair of $(*$ and $*)$)\footnote{For faster implementation, 
in the loop of Step 2 we can do binary search for the least possible $\kappa$ 
over the range $\set{1,2,\dots,|\wte|}$ for which the polytope's optimal solution value is at most $\Gamma$, 
requiring $\lceil \log_2 (1 + |\wte|) \rceil$ iterations instead of $|\wte|$ iterations. For clarity, we omit 
such obvious improvements.}.
\begin{align*}
\text{\bf 1.} &
  \text{(* \emph{same as in \emph{(\emph{a}1)}} *)}
\\
                   &
  \text{We identify sub-cases whose optimal solutions are easy to find.} 
\\
                   &
  \text{Following steps apply only to all remaining sub-cases.} 
\\
\text{\bf 2.} &
  \text{\ffor\ $\kappa=1,2,\dots,|\wte|$ \ddo\ \hspace*{0.2in} (* assume 
	     $\mathsf{OPT}_{\mbox{\eadp}_{\mathfrak{C}^2_d}}(G,\wte,\gamma)=\kappa$ *)} 
\\
\text{\bf 2\emph{a}.} &
  \hspace*{0.2in}
  \text{(* \emph{as in \emph{(}a\emph{1}}) \emph{but with an additional constraint} *) $\,\,$}
\\
                   &
  \hspace*{0.2in}
  \text{we reduce our problem to a (directed) minimum $s$-$t$ cut problem} 
\\
                   &
  \hspace*{0.3in}
  \text{with the following additional constraint} 
\\
                   &
  \hspace*{0.4in}
  \text{$\blacktriangleright$ the number of edges to be deleted from $\wte$ is $\kappa$}
\\
                   &
  \hspace*{0.6in}
	\text{such that the following statements hold:}
\\
                   &
  \hspace*{0.6in}
  \text{$\triangleright$ The cut network can be constructed in polynomial time.}
\\
                   &
  \hspace*{0.6in}
  \text{$\triangleright$ There exists a feasible solution of \eadp$_{\mathfrak{C}^2_d}(G,\wte,\gamma)$}
\\
                   &
  \hspace*{0.75in} 
	\text{if and only if the minimum cut value is} 
\\
                   &
  \hspace*{0.75in} 
	\text{at most $\Gamma\eqdef\mathfrak{C}^2_d(G)-\gamma + |\wte|$.}
\\
\text{\bf 2\emph{b}.} &
  \hspace*{0.2in}
	\text{find an extreme-point optimal solution for an appropriate} 
\\
                          &
  \hspace*{0.35in}
	\text{polytope for the constrained minimum cut problem}
\\
                          &
  \hspace*{0.35in}
	\text{in polynomial time.}
\\
\text{\bf 2\emph{c}.} &
  \hspace*{0.2in}
	\text{\iif\ the optimal objective value is at most $\Gamma$ \tthen}
\\
\text{\bf 2\emph{c}(\emph{i}).} &
  \hspace*{0.4in}
	\text{carefully convert relevant fractional values in the solution} 
\\
                          &
  \hspace*{0.55in}
	\text{to integral values to get a solution in polynomial time.}
\\
\text{\bf 3.} &
   \text{Return the best among all solutions found in Step \textbf{2}}
\\
                          &
	 \hspace*{0.2in}
	 \text{as the desired solution.}
\end{align*}
%

\noindent
{\bf Step 2\emph{b}. Formulating an appropriate polytope for the constrained minimum cut problem}

\medskip
We showed in the proof of 
part (\emph{a}1)
that the feasibility problem can be reduced to finding a minimum $s$-$t$ cut of the directed graph $\cG=(\cV,\cE)$.
Notice that $\cG$ is acyclic, and every path between $s$ and $t$ has \emph{exactly three} directed edges, namely 
an edge-arc followed by a ed-cy-arc followed by a cycle-arc.
The minimum $s$-$t$ cut problem for a graph has a well-known associated convex polytope 
of \emph{polynomial} size (\EG, see~\cite[pp.\ 98-99]{V01}). 
Letting $p_\beta$ to be the variable corresponding to each node $\beta\in\cV$,
and $d_\alpha$ to be the variable associated with the edge $\alpha\in\cE$, 
this minimum $s$-$t$ cut polytope for the graph $\cG$ is as follows: 
\begin{gather}
\hspace*{-0.2in}
\text{
\begin{tabular}{r l}
{minimize} & $\sum_{\alpha\in\cE} \ccap(\alpha) d_\alpha$
\\
[3pt]
&
\hspace*{0.1in}$= \sum_{{\alpha\in\cE,\text{\footnotesize $\alpha$ is not ed-cy-arc} }} d_\alpha
+
\sum_{{\alpha\in\cE,\text{\footnotesize $\alpha$ is ed-cy-arc} }} \infty \times d_\alpha
$
\\
[4pt]
{subject to} & $d_\alpha \geq p_\beta-p_\xi$ \hspace*{0.15in} for every edge $\alpha=(\beta,\xi)\in\cE$ 
\\
[2pt]
                  & $p_s - p_t \geq 1$ 
\\
[2pt]
                  & $0\leq p_\beta \leq 1$ \hspace*{0.3in} for every node $\beta\in\cV$ 
\\
[2pt]
                  & $0\leq d_\alpha \leq 1$ \hspace*{0.3in} for every edge $\alpha\in\cE$ 
\end{tabular}
}
\label{pto1}
\end{gather}
It is well-known that all extreme-point solutions of~\eqref{pto1} are integral.
An integral solution of~\eqref{pto1} generates a $s$-$t$ cut $(\cS,\cV\setminus\cS)$ 
by letting $\cS=\set{\beta \,|\, p_\beta=1}$ and 
$\cut(\cS,\cV\setminus\cS)=\set{\alpha \,|\, d_\alpha=1}$.
For our case, we have an additional constraint in that 
the number of edges to be deleted from $\wte$ is $\kappa$,
which motivates us to formulate
the following polytope for our problem:
\begin{gather}
\hspace*{-0.2in}
\text{
\begin{tabular}{r l}
{minimize} & $\sum_{\alpha\in\cE} \ccap(\alpha) d_\alpha$
\\
[4pt]
&
\hspace*{0.1in}
$= \sum_{{\alpha\in\cE,\,\text{\footnotesize $\alpha$ is not ed-cy-arc} }} d_\alpha
+
\sum_{{\alpha\in\cE,\, \text{\footnotesize $\alpha$ is ed-cy-arc} }} \infty \times d_\alpha
$
\\
[4pt]
{subject to} & $d_\alpha \geq p_\beta-p_\xi$ \hspace*{0.15in} for every edge $\alpha=(\beta,\xi)\in\cE$ 
\\
[2pt]
                  & $p_s - p_t \geq 1$ 
\\
[2pt]
                  & $0\leq p_\beta \leq 1$ \hspace*{0.3in} for every node $\beta\in\cV$ 
\\
[2pt]
                  & $0\leq d_\alpha \leq 1$ \hspace*{0.3in} for every edge $\alpha\in\cE$ 
\\
[2pt]
             & $\sum_{u_e\in\cV} p_{u_e}=\kappa$ 
\end{tabular}
}
\label{pto2}
\end{gather}
Let $\mathsf{OPT}_{\eqref{pto2}}$ 
denote the optimal objective value of~\eqref{pto2}. 

\begin{lemma}\label{pto-obj}
$\mathsf{OPT}_{\eqref{pto2}} \leq \Gamma$. 
\end{lemma}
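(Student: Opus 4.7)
The plan is to exhibit an integral feasible solution of polytope~\eqref{pto2} whose objective value is at most $\Gamma$; since the LP optimum is no larger than the value of any feasible point, this suffices. The natural candidate is the integral point induced by the optimal combinatorial solution of \eadp$_{\mathfrak{C}^2_d}(G,\wte,\gamma)$ under the hypothesis that $\mathsf{OPT}_{\mbox{\eadp}_{\mathfrak{C}^2_d}}(G,\wte,\gamma)=\kappa$. Let $\whe\subseteq\wte$ be such an optimum with $|\whe|=\kappa$, and let $\Delta_1$ denote the number of cycles in $\cC'(G)$ that contain at least one edge of $\whe$.

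First I would define the candidate solution by mirroring the cut constructed in the forward direction of the lemma proved in part~(\emph{a}1). Concretely, set $p_s=1$, $p_t=0$, $p_{u_e}=1$ for $e\in\whe$ and $p_{u_e}=0$ for $e\in\wte\setminus\whe$, and $p_{u_{\cF_i}}=1$ iff $\cF_i\in\cC'(G)$ contains at least one edge of $\whe$ (and $0$ otherwise). Then set each $d_\alpha$ to be the induced cut indicator: $d_\alpha=\max\{0,p_\beta-p_\xi\}$ for $\alpha=(\beta,\xi)\in\cE$. By construction $0\le p_\beta, d_\alpha\le 1$, the edge inequalities $d_\alpha\ge p_\beta-p_\xi$ hold, and $p_s-p_t=1$. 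The extra constraint $\sum_{u_e\in\cV}p_{u_e}=\kappa$ holds because $|\whe|=\kappa$, so the point lies in~\eqref{pto2}.

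Next I would evaluate the objective. Because $\whe$ is a valid deletion set, every ed-cy-arc $(u_e,u_{\cF_i})$ with $e\in\whe$ has its head $u_{\cF_i}$ in $\cS$ by definition of our $p$-assignment, so no ed-cy-arc becomes a cut arc and the $\infty$-capacity contributions all vanish. The only edges with $d_\alpha=1$ are therefore the edge-arcs $(s,u_e)$ with $e\in\wte\setminus\whe$ (contributing $|\wte|-\kappa$) and the cycle-arcs $(u_{\cF_i},t)$ with $u_{\cF_i}\in\cS$ (contributing $\Delta_1$). Hence the objective value of our feasible point is exactly $(|\wte|-\kappa)+\Delta_1$.

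Finally I would invoke the feasibility of $\whe$ for \eadp$_{\mathfrak{C}^2_d}(G,\wte,\gamma)$ together with the equivalence chain~\eqref{eq-mc}: since $\mathfrak{C}^2_d(G\setminus\whe)\ge\gamma$, we have $\Delta_1+(|\wte|-\kappa)\le\mathfrak{C}^2_d(G)-\gamma+|\wte|=\Gamma$. Combining with the previous paragraph yields $\mathsf{OPT}_{\eqref{pto2}}\le(|\wte|-\kappa)+\Delta_1\le\Gamma$, as claimed. The only subtle point — and the one I would double-check — is verifying that no ed-cy-arc contributes to the cut; this follows transparently from how the cycle-node variables are set, but it is the step where a sloppy rounding would ruin the bound by introducing an $\infty$-capacity term.
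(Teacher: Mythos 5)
Your proposal is correct and matches the paper's own argument essentially step for step: both construct the integral point of~\eqref{pto2} induced by the cut $\cS=\{s\}\cup\{u_e \,|\, e\in\whe\}\cup\{u_{\cF_i}\,|\,\cF_i \text{ meets } \whe\}$, check feasibility (including the cardinality constraint and that no $\infty$-capacity ed-cy-arc is cut), and bound the objective $(|\wte|-\kappa)+\Delta_1\leq\Gamma$ via~\eqref{eq-mc}. No gaps; the step you flag as subtle is handled exactly as in the paper.
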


\begin{proof}
Suppose that removal 
of the edges in the optimal solution $\whe$ removes $\widehat{\Delta}\leq\Delta'$ cycles from $\cC'(G)$.
We construct the following solution of~\eqref{pto2} with respect to the optimal solution $\whe$ of 
\eadp$_{\mathfrak{C}^2_d}(G,\wte,\gamma)$ having $|\whe|=\kappa$ nodes: 
\begin{gather*}
\cS = \{s\} \bigcup \{ u_e \,|\, e\in \whe \} 
\bigcup 
\left\{
u_{\cF_i} \,|\, \text{$\cF_i$ contains at least one edge from $\whe$}
\right\}
\\
p_\beta = \left\{
\begin{array}{r l}
1, & \mbox{if $\beta\in\cS$}
\\
0, & \mbox{otherwise}
\end{array}
\right.
\,\,\,
\,\,\,
\,\,\,
d_\alpha = \left\{
\begin{array}{r l}
1, & \mbox{if $\alpha\in\cut(\cS,\cV\setminus\cS)$}
\\
0, & \mbox{otherwise}
\end{array}
\right.
\end{gather*}
It can be verified as follows that this is indeed a feasible solution
of~\eqref{pto2}:
\begin{itemize}
\item
Since $p_s=1$ and $p_t=0$, it follows that $p_s-p_t\geq 1$ is satisfied.
\item
No ed-cy-arc belongs to $\cut(\cS,\cV\setminus\cS)$. Thus, if $\alpha=(\beta,\xi)$ is an 
ed-cy-arc then $d_\alpha=0$ and it is not the case that $p_\beta=1$ and $p_\xi=0$.  
Thus for every ed-cy-arc $\alpha$ the constraint 
$d_\alpha \geq p_\beta-p_\xi$ is satisfied.
\item
Consider an edge-arc $\alpha=(s,u_e)$; note that $p_s=1$. If $u_e\in\cS$ that 
$p_{u_e}=1$ and $d_{\alpha}=0$, otherwise
$p_{u_e}=0$ and $d_{\alpha}=1$. In both cases, 
the constraint $d_\alpha \geq p_\beta-p_\xi$ is satisfied.
The case of a cycle-arc is similar.
\item
The constraint $\sum_{u_e\in\cV} p_{u_e}=\kappa$ 
is \emph{trivially} satisfied since $|\wte|=\kappa$ by our assumption.
\end{itemize}
Note that $\cut(\cS,\cV\setminus\cS)$ does \emph{not} contain any ed-cy-arcs.
Thus, the objective value of this solution is
\[
\sum_{a\in\cE} d_a
=
\sum_{a\in\cut(\cS,\cV\setminus\cS)} \hspace*{-0.2in} d_a
=
| \set{ u_e \,|\, u_e\notin\cS  } |
+
| \set{ u_{\cF_j} \,|\, u_{\cF_j} \in \cS  } |
=
(|\wte| - |\whe|) + \widehat{\Delta} \leq \Gamma
\]
%
where the last inequality follows by~\eqref{eq-mc} since 
$\mathfrak{C}^2_d(G\setminus \whe) \geq \gamma$.
\end{proof}

\medskip
\noindent
\text{\bf Step 2\emph{c}. Post-processing fractional values in the polytope solution}

\medskip
Given a polynomial-time obtainable optimal solution values
$\big\{d_\alpha^\ast,p_\beta^\ast \,|\, \alpha\in\cE, \, \beta\in\cV \big\}$ 
of the variables in~\eqref{pto2},
consider the following simple rounding procedure, 
the corresponding cut $(\cS,\cV\setminus\cS)$ of $\cG$,
and 
the corresponding solution $E_1\subseteq\wte$ of 
\eadp$_{\mathfrak{C}^2_d}(G,\wte,\gamma)$:
\begin{gather}
\hat{p}_\beta=\left\{
\begin{array}{r l}
1, & \mbox{if $p_\beta^\ast\geq\nicefrac{1}{2}$}
\\
0, & \mbox{otherwise}
\end{array}
\right.
\,\,\,\,\,\,\,\,\,
\cS=\set{\beta\in\cV \,|\, \hat{p}_\beta=1}
\,\,\,\,\,\,\,\,\,
E_1= \set{e \,|\, u_e\in\cS}
\label{rs1}
\end{gather}
Note that in inequalities 
$p_s - p_t \geq 1$, 
$0\leq p_s\leq 1$  
and 
$0\leq p_t\leq 1$  
ensures that $p_s^\ast=1$ and $p_t^\ast=0$.

\begin{lemma}
$|E_1|\leq 2\,\kappa$.
\end{lemma}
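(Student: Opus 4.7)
The plan is to use a simple Markov-type counting argument that leverages the extra constraint $\sum_{u_e\in\cV} p_{u_e}=\kappa$ that was added to polytope~\eqref{pto2} precisely for this purpose. First I would unwind the definitions to identify exactly which edges end up in $E_1$: by the rounding rule~\eqref{rs1}, a variable $p^*_{u_e}$ is rounded up to $1$ (so that $u_e\in\cS$, and hence $e\in E_1$) if and only if $p^*_{u_e}\geq \nicefrac{1}{2}$. Hence
\[
E_1 \;=\; \bigl\{\, e\in\wte \,:\, p^*_{u_e}\geq \tfrac{1}{2} \,\bigr\}.
\]

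The second step is to bound the cardinality of this set using the polytope constraint. Every term $p^*_{u_e}$ contributing to $E_1$ is at least $\nicefrac{1}{2}$, and all other terms $p^*_{u_e}$ are nonnegative (by the box constraints $0\leq p_\beta\leq 1$), so
\[
\tfrac{1}{2}\,|E_1|
\;\leq\; \sum_{\substack{u_e\in\cV\\ p^*_{u_e}\geq 1/2}} p^*_{u_e}
\;\leq\; \sum_{u_e\in\cV} p^*_{u_e}
\;=\; \kappa,
\]
where the last equality is the guessed-$\kappa$ constraint of~\eqref{pto2}. Rearranging yields $|E_1|\leq 2\kappa=2\mathsf{OPT}_{\mbox{\eadp}_{\mathfrak{C}^2_d}}(G,\wte,\gamma)$, as desired.

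There is no real obstacle in this lemma: the factor of $2$ arises directly from the $\nicefrac{1}{2}$ threshold, and the ``guess'' of $\kappa$ in~\eqref{pto2} together with Lemma~\ref{pto-obj} is exactly what makes the counting go through. The more substantive work is postponed to the companion bound on the curvature change, where the rounding of the cycle-node variables and edge-arc/cycle-arc capacities must be analyzed to show that the cut value induced by $E_1$ is still at most $\Gamma$ up to the stated slack; here we only needed the edge-node sum.
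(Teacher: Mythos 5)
Your proposal is correct and is essentially the same argument as the paper's one-line proof: $|E_1|=|\set{u_e \,|\, p_{u_e}^\ast\geq\nicefrac{1}{2}}|\leq 2\sum_{u_e\in\cV} p_{u_e}^\ast=2\kappa$, i.e., the Markov-style counting against the constraint $\sum_{u_e\in\cV}p_{u_e}=\kappa$ of polytope~\eqref{pto2} with the factor $2$ coming from the $\nicefrac{1}{2}$ rounding threshold. No gap; your write-up just spells out the steps the paper compresses.
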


\begin{proof}
$
|E_1| = 
| \set{u_e \,|\, p_{u_e}^\ast\geq\nicefrac{1}{2} } |
\leq 
2\,\sum_{u_e\in\cV} p_{u_e}^\ast=2\,\kappa
$.
\end{proof}

\begin{lemma}\label{lemma-opt3}
$\ccap(\cut(\cS,\cV\setminus\cS)) \leq 2\, \mathsf{OPT}_{\eqref{pto2}} \leq 2\,\Gamma$.
\end{lemma}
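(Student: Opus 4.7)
The plan is to prove both inequalities by exploiting the LP relaxation~\eqref{pto2} together with the $\nicefrac{1}{2}$-threshold rounding~\eqref{rs1}, in a manner analogous to standard primal rounding for $s$-$t$ cut LPs, but with extra care to show that no infinite-capacity ed-cy-arc ends up being cut.

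First, I would establish a structural property of the optimal fractional solution $\{d^*_\alpha, p^*_\beta\}$: since $\mathsf{OPT}_{\eqref{pto2}} \leq \Gamma < \infty$ by Lemma~\ref{pto-obj}, and the coefficient of $d_\alpha$ in the objective is $\infty$ whenever $\alpha$ is an ed-cy-arc, one must have $d^*_\alpha = 0$ for every ed-cy-arc $\alpha = (u_e, u_{\cF_i})$. Combined with the constraint $d^*_\alpha \geq p^*_{u_e} - p^*_{u_{\cF_i}}$, this forces $p^*_{u_e} \leq p^*_{u_{\cF_i}}$ for every ed-cy-arc. So in the rounded cut $(\cS,\cV\setminus\cS)$, whenever $u_e \in \cS$ (i.e. $p^*_{u_e} \geq \nicefrac{1}{2}$), we also have $p^*_{u_{\cF_i}} \geq \nicefrac{1}{2}$, so $u_{\cF_i} \in \cS$. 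Therefore no ed-cy-arc lies in $\cut(\cS,\cV\setminus\cS)$, which guarantees $\ccap(\cut(\cS,\cV\setminus\cS))$ is finite.

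Next I would argue that each edge-arc or cycle-arc in the cut contributes at most twice its LP value to the capacity. Note that the constraints $p_s - p_t \geq 1$, $0 \leq p_s \leq 1$, $0 \leq p_t \leq 1$ force $p^*_s = 1$ and $p^*_t = 0$. For an edge-arc $\alpha = (s, u_e) \in \cut(\cS, \cV \setminus \cS)$, the definition of $\cS$ gives $p^*_{u_e} < \nicefrac{1}{2}$, whence $d^*_\alpha \geq p^*_s - p^*_{u_e} > \nicefrac{1}{2}$. For a cycle-arc $\alpha = (u_{\cF_i}, t) \in \cut(\cS, \cV \setminus \cS)$, we have $u_{\cF_i} \in \cS$, so $p^*_{u_{\cF_i}} \geq \nicefrac{1}{2}$, and hence $d^*_\alpha \geq p^*_{u_{\cF_i}} - p^*_t \geq \nicefrac{1}{2}$. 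In either case $\ccap(\alpha) = 1 \leq 2\,d^*_\alpha$.

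Summing over the cut and using that only edge-arcs and cycle-arcs appear in it,
\[
\ccap(\cut(\cS,\cV\setminus\cS)) \;=\; \sum_{\alpha \in \cut(\cS,\cV\setminus\cS)} \ccap(\alpha) \;\leq\; 2 \sum_{\alpha \in \cut(\cS,\cV\setminus\cS)} d^*_\alpha \;\leq\; 2 \sum_{\alpha \in \cE} \ccap(\alpha) \, d^*_\alpha \;=\; 2\,\mathsf{OPT}_{\eqref{pto2}},
\]
where the last inequality uses $d^*_\alpha \geq 0$ and $\ccap(\alpha) \geq 1$ for non-ed-cy-arcs. Invoking Lemma~\ref{pto-obj} then yields $\ccap(\cut(\cS,\cV\setminus\cS)) \leq 2\,\mathsf{OPT}_{\eqref{pto2}} \leq 2\,\Gamma$. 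The only non-routine step is the observation that the $\infty$-capacity ed-cy-arcs automatically enforce a monotonicity between adjacent $p^*$-values, which in turn guarantees the rounding never cuts them — everything else is a textbook LP-rounding computation.
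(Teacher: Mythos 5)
Your proposal is correct and follows essentially the same route as the paper's proof: the finiteness of $\mathsf{OPT}_{\eqref{pto2}}\leq\Gamma$ forces $d^\ast_\alpha=0$ on the infinite-capacity ed-cy-arcs, hence $p^\ast_{u_e}\leq p^\ast_{u_{\cF_j}}$ and no ed-cy-arc is cut by the $\nicefrac{1}{2}$-threshold rounding, after which each cut edge-arc or cycle-arc is charged $2d^\ast_\alpha$ using the LP constraints together with $p^\ast_s=1$, $p^\ast_t=0$, and Lemma~\ref{pto-obj} finishes the bound. The only difference is cosmetic: the paper writes the cut capacity as a count of rounded arcs and bounds the count, whereas you state the per-arc inequality $\ccap(\alpha)=1\leq 2d^\ast_\alpha$; the content is identical.
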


\begin{proof}
Since $\ccap(\alpha)=\infty$ and 
$\mathsf{OPT}_{\eqref{pto2}}\leq\Gamma<\infty$, 
$d_\alpha^\ast=0$ for any ed-cy-arc $\alpha=(u_e,u_{\cF_j})$,
and thus $p_{u_e}^\ast \leq p_{u_{\cF_j}}^\ast$ for such an edge.
It therefore follows that 
\[
p_{u_e}^\ast \geq\nicefrac{1}{2}
\,\Rightarrow\,
p_{u_{\cF_j}}^\ast \geq\nicefrac{1}{2}
\,\,\,\,\,\equiv\,\,\,\,\,
\hat{p}_{u_e} =1 
\,\Rightarrow\,
\hat{p}_{u_{\cF_j}}=1
\]
Thus, no ed-cy-arc belongs to $\cut(\cS,\cV\setminus\cS)$.
Thus using Lemma~\ref{pto-obj} it follows that  
\begin{multline*}
\ccap(\cut(\cS,\cV\setminus\cS)) 
=
| \set{ (s,u_e) \,|\, \hat{p}_{u_e}=0  } |
+
| \set{ (u_{\cF_j},t) \,|\, \hat{p}_{u_{\cF_j}} =1   } |
\\
=
| \set{ (s,u_e) \,|\, p_{u_e}^\ast < \nicefrac{1}{2}  } |
+
| \set{ (u_{\cF_j},t) \,|\, p_{u_{\cF_j}}^\ast \geq \nicefrac{1}{2}   } |
\\
\leq
2\, \sum_{p_{u_e}^\ast < \nicefrac{1}{2}} (p_s^\ast - p_{u_e}^\ast) 
+
2\, \sum_{p_{\cF_j}^\ast \geq \nicefrac{1}{2}} (p_{\cF_j}^\ast-p_t^\ast) 
\\
\leq
2\, \sum_{p_{u_e}^\ast < \nicefrac{1}{2}} d_{s,p_{u_e}}^\ast
+
2\, \sum_{p_{\cF_j}^\ast \geq \nicefrac{1}{2}} d_{s,p_{\cF_j}}^\ast
<
2\,\sum_{\alpha\in\cE} \ccap(\alpha) d_\alpha^\ast
\leq 2\,\Gamma
\end{multline*}
\end{proof}

Since no ed-cy-arc belongs to $\cut(\cS,\cV\setminus\cS)$, if an edge $e\in E_1$ 
is involved in a cycle $\cF_j\in \cC'(G)$ then it must be the case 
that $(u_e,u_{\cF_j})\notin\cut(\cS,\cV\setminus\cS)$.
Thus, letting 
$m_1 = |E_1|$ and 
$\Delta_1 = |\,\set{ \cF_j\in \cC'(G) \,|\, u_{\cF_j}\in\cS}\,|$, 
the claimed bound on 
$\mathfrak{C}^2_d(G\setminus E_1)$ can be shown as follows using Lemma~\ref{lemma-opt3}:
\begin{gather*}
\begin{array}{r l}
& \ccap(\cut(\cS,\cV\setminus\cS))
=
(m-m_1)+\Delta_1
\leq 2\,\Gamma
= 2 \, \mathfrak{C}^2_d(G) -2\, \gamma + 2\,|\wte|
\\
[3pt]
\Rightarrow & 
\mathfrak{C}^2_d(G\setminus E_1) = 
\mathfrak{C}^2_d(G) +m_1 - \Delta_1
\geq
2\,\gamma -\mathfrak{C}^2_d(G) 
-(2\,|\wte|-m)
, \,\,\,\,\text{by~\eqref{eq-minus-e1}}
\\
[3pt]
\Rightarrow & 
\dfrac{\mathfrak{C}^2_d(G\setminus E_1) - \mathfrak{C}^2_d(G) }{ \gamma -\mathfrak{C}^2_d(G) }
\geq 
2 -  \dfrac{ 2\,|\wte|-m } { \gamma -\mathfrak{C}^2_d(G) }
\geq
2- \dfrac{1}{\frac{1}{2}+\eps}
=
\dfrac{4\eps}{1+2\eps}
\end{array}
\end{gather*}

\noindent
\textbf{(\emph{a}2-1)}
The decision version of computing 
$\mathsf{OPT}_{\mbox{\eadp}_{\mathfrak{C}^2_d}}(G,\wte,\gamma)$
is as follows: 
``given an instance \eadp$_{\mathfrak{C}^2_d}(G,\wte,\gamma)$
and an integer $\kappa>0$, is there a solution $\whe\subseteq\wte$ 
satisfying $|\whe|\leq\kappa$ ?''. 
We first consider the case of $d=3$.
We will reduce from the decision version of the
\dkst problem which is defined as follows.

\begin{definition}[\dkst problem]
Given an undirected graph $G_1=(V_1,E_1)$ where the degree of every node is either $2$ or $3$ 
and two integers $k$ and $t$, is there a (node-induced) subgraph of $G_1$ 
that has $k$ nodes and at least $t$ edges? 
\end{definition}

Assuming that their reduction is done from the clique problem on a $(n-4)$-regular $n$-node graph 
(which is $\NP$-hard~\cite{CC06}), 
the proof of Feige and Seltser in~\cite{FS97} shows that \dkst is $\NP$-complete for 
the following parameter values (for some integer $\sqrt{n}<\alpha\leq n-4$): 
\begin{gather*}
|V_1|=n^2+ (\alpha\,n+1) \left( \frac{n\,(n-4)}{2} \right),
\,\,\,\,
\,\,\,\,
|E_1|= |V_1|+ \frac{n \,(n-4)}{2}
\\
k = \alpha \,n + \binom{\alpha}{2}(\alpha\,n+1), 
\,\,\,\,
\,\,\,\,
t = \alpha \,n + \binom{\alpha}{2}(\alpha\,n+2)
\end{gather*}
We briefly review the reduction of Feige and Seltser in~\cite{FS97} 
as needed from our purpose. Their reduction is from the $\alpha$-CLIQUE problem which is defined 
as follows.

\begin{definition}[$\alpha$-CLIQUE problem]
Given a graph of $n$ nodes, does there exist a clique (complete subgraph) of size $\alpha$?
\end{definition}

Given an instance of $\alpha$-CLIQUE, they create an instance $G_1=(V_1,E_1)$ 
of \dkst (with the parameter values shown above) 
in which every node is replaced by a cycle of $n$ edges and an edge between two nodes is replaces 
by a path of length $\alpha\,n+3$ between two unique nodes of the two cycles corresponding to the two 
nodes (see \FI{fig3}(\emph{a})--(\emph{b}) for an illustration).
Given such an instance of \dkst with $V_1=\set{u_1,\dots,u_{|V_1|}}$ and 
$E_1=\set{a_1,\dots,a_{|E_1|}}$, we create an instance of 
\eadp$_{\mathfrak{C}^2_3}(G,\wte,\gamma)$ as follows: 
\begin{itemize}
\item
We associate each node $u_i\in V_1$ with a triangle  (the ``node triangle'') 
$\cL_i$ of $3$ nodes in $V$ such that every edge $\set{u_i,u_j}\in E_1$
is mapped to a \emph{unique} edge (the ``shared edge'') $e_{u_i,u_j}\in E$ that is shared by $\cL_i$ and $\cL_j$ 
(see \FI{fig3}(\emph{c})). 
Since in the reduction of 
Feige and Seltser~\cite{FS97}
all nodes have degree $2$ or $3$ and two degree $3$ nodes do not share more than one edge
such a node-triangle association is possible.
We set $\wte$ to be the set of \emph{all} shared edges; note that $|\wte|=|E_1|$.  
Let $\cL=\set{v_1,v_2,\dots\dots}$ be the set of all nodes in the that appear in any node triangle; note that 
$|\cL|< 3\,|V_1|$. 
\item
To maintain connectivity after all edges in $\wte$ are deleted, 
we introduce $3|\cL|+1$ new nodes 
$\set{w_0} \,\cup\, \big\{ w_{i,j} \,|\, i\in\{1,2,\dots,|\cL|\},\, j\in\{1,2,3\} \big\}$
and 
$4\,|\cL|$ new edges
\[
\Big\{ \set{w_0,w_{j,1}},\, \set{w_{j,1},w_{j,2}},\, \set{w_{j,2},w_{j,3}},\, \set{w_{j,3},v_j} \,|\, j\in\{1,2,\dots,|\cL|\} \Big\}
\]
\item
We set $\gamma=\mathfrak{C}^2_3(G)+(t-k)=\mathfrak{C}^2_3(G)+\binom{\alpha}{2}$.
\end{itemize}
First, we show that 
\eadp$_{\mathfrak{C}^2_3}(G,\wte,\gamma)$ indeed \emph{has} a trivial feasible solution, namely 
a solution that contains all the edges from $\wte$.
The number of triangles $\Delta'$ 
that include one or more edges from $\wte$ is 
precisely $|V_1|$ and thus using~\eqref{eq-minus-e1} we get: 
\begin{multline*}
\mathfrak{C}^2_3(G\setminus\wte)
=
\mathfrak{C}^2_3(G) + |\wte|-|\Delta'|
=
\mathfrak{C}^2_3(G) + |E_1|-|V_1|
\\
=
\mathfrak{C}^2_3(G) + \frac{n(n-4)}{2}
>
\mathfrak{C}^2_3(G)+\binom{\alpha}{2} =
\gamma
\end{multline*}
where the last inequality follows since $\alpha\leq n-4$.
The following lemma completes our proof.

\begin{lemma}
$G_1$ has a subgraph of $k$ nodes and at least $t$ edges
if and only if 
the instance of \eadp$_{\mathfrak{C}^2_3}(G,\wte,\gamma)$
constructed above
has a solution $\whe\subseteq\wte$ satisfying $|\whe|\leq t$.
\end{lemma}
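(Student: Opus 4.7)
My plan is to translate the curvature feasibility condition on $\whe$ into a purely combinatorial condition on an edge-subset $T\subseteq E_1$ via the shared-edge bijection, and then verify the claimed equivalence through two short counting arguments.

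The first step will be to invoke Equation~\eqref{eq-minus-e1}, which rewrites feasibility of $\whe$ as $|\whe|-\Delta_1\geq \binom{\alpha}{2}$, where $\Delta_1$ counts the triangles of $G$ destroyed by the deletion. A preliminary structural observation that I must pin down is that the only triangles in $G$ are the node-triangles $\cL_i$: the auxiliary subgraph connecting $w_0$ to each triangle vertex $v_j$ through the length-four path $w_0, w_{j,1}, w_{j,2}, w_{j,3}, v_j$ cannot host any triangle, because the neighbours $\{w_{j,1}\}_j$ of $w_0$ are mutually non-adjacent and each interior $w$-vertex has only two neighbours which themselves are non-adjacent. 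Because $\cL_i$ is destroyed iff $u_i$ is incident in $G_1$ to some edge whose corresponding shared edge lies in $\whe$, identifying $\whe$ with $T\subseteq E_1$ gives $\Delta_1=|V(T)|$, where $V(T)$ denotes the endpoint set of $T$. The lemma then reduces to the statement: $G_1$ contains a subgraph on $k$ vertices with at least $t$ edges if and only if there is $T\subseteq E_1$ with $|T|\leq t$ and $|T|-|V(T)|\geq \binom{\alpha}{2}$.

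For the forward direction, given $(V_H,E_H)$ with $|V_H|=k$ and $|E_H|\geq t$, I will take $T$ to be any $t$ edges of $E_H$; since $V(T)\subseteq V_H$ we immediately obtain $|V(T)|\leq k$ and hence $|T|-|V(T)|\geq t-k=\binom{\alpha}{2}$. For the reverse direction, given $T$ satisfying the combinatorial condition, the bound $|V(T)|\leq |T|-\binom{\alpha}{2}\leq k$ lets me grow $V(T)$ to a vertex set $V_H$ of size exactly $k$ by iteratively attaching a vertex of $V_1\setminus V_H$ that has at least one neighbour in the current $V_H$; the connectivity of $G_1$ ensures such a vertex always exists before the target size is reached, and each attachment introduces at least one new edge into the induced subgraph. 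The resulting $V_H$ then has $|V_H|=k$ and its induced subgraph carries at least $|T|+(k-|V(T)|)\geq k+\binom{\alpha}{2}=t$ edges, as required.

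The main obstacle, in my view, is the triangle enumeration step: everything else is essentially bookkeeping around the identity $t-k=\binom{\alpha}{2}$ together with the connectivity of $G_1$. Once we are certain that only the $\cL_i$'s are triangles of $G$ (in particular, that the $w$-gadget contributes no new triangles and that no triangle is formed by pairs of node-triangles sharing a single edge), the identity $\Delta_1=|V(T)|$ follows immediately and both directions become the one-line calculations above.
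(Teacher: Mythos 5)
Your proof is correct and follows essentially the same route as the paper's: feasibility is rewritten via Equation~\eqref{eq-minus-e1} as $|\whe|-\Delta_1\geq t-k$ with $\Delta_1$ equal to the number of endpoints in $G_1$ of the edge set corresponding to $\whe$, the forward direction takes $t$ edges of the dense subgraph, and the reverse direction grows the endpoint set to exactly $k$ vertices, gaining at least one induced edge per added vertex. The triangle-enumeration caveat you flag (that the node triangles $\cL_i$ are the only triangles of $G$ meeting $\wte$) is likewise asserted without detailed proof in the paper, so your treatment is, if anything, slightly more explicit on that point.
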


\begin{proof}
Suppose that 
$G_1$ has $k$ nodes $u_1,u_2,\dots,u_k$ such that the subgraph $H_1$ induced by these nodes has $t'\geq t$ edges.
Remove an arbitrary set of $t'-t$ edges from $H_1$ to obtain a subgraph $H_1'=(V_1',E_1')$, 
and let $\whe=\{ e_{u_i,u_j} \,|\, i,j\in \{1,2,\dots,k\}, \, \{u_i,u_j\}\in E_1' \}$.
Obviously, $|\whe|=t$.
Consider the triangle $\cL_i$ corresponding to a node $u_i\in\{u_1,u_k,\dots,u_k\}$, and let 
$I(\cL_i)$ be the $0$-$1$ \emph{indicator variable} denoting if $\cL_i$ is eliminated by removing the edges in 
$\whe$, \IE, $I(\cL_i)=1$ (resp., $I(\cL_i)=0$) if and only if  
$\cL_i$ \emph{is} eliminated (resp., is \emph{not} eliminated) by removing the edges in $\whe$.
Note that the triangle $\cL_i$ gets removed if and only if there exists another node $u_j\in\{u_1,u_k,\dots,u_k\}$
such that $\{u_i,u_j\}\in E_1$. Thus, the total number of triangles
eliminated by removing the edges in $\whe$ is at most 
$\sum_{i=1}^k I(\cL_i)\leq k$ and consequently 
\[
\mathfrak{C}^2_3(G\setminus E')
=
\mathfrak{C}^2_3(G) + |\whe|-\sum_{i=1}^k I(\cL_i)
\geq
\mathfrak{C}^2_3(G) + t-k
=
\gamma
\]
Conversely, suppose that the instance of \eadp$_{\mathfrak{C}^2_3}(G,\wte,\gamma)$
has a solution $\whe\subseteq\wte$ satisfying $|\whe|=\widehat{t}\leq t$.
Let 
\\
$V_1'=\{ u_j \,|\, \cL_j \text{ is removed by removing one of more edges from } \whe \}$.
Using~\eqref{eq-minus-e1} we get
\begin{gather}
\mathfrak{C}^2_3(G\setminus \whe) \geq \gamma = \mathfrak{C}^2_3(G) + t-k
\,\Rightarrow\,
\widehat{t} - |V_1'| \geq t-k
\label{ggg1}
\end{gather}
Let $H_1'=(V_1',E_1')$ be the subgraph of $G_1$ induced by the nodes in $V_1'$.  
Clearly, $|E_1'|\geq \widehat{t}$.
If $|V_1|<k$ then 
we use the following procedure to add $k-|V_1'|$ nodes: 

\begin{center}
\begin{tabular}{l}
$V_1''\leftarrow V_1'$ 
\\
\wwhile\ $|V_1''|\neq k$ \ddo
\\
\hspace*{0.2in}
select a node $u_j\notin V_1''$ connected to one or more nodes in $V_1''$, 
\\
\hspace*{0.3in}
and add $u_j$ to $V_1''$
\end{tabular}
\end{center}

\noindent
Let $H_1''=(V_1'',E_1'')$ be the subgraph of $G_1$ induced by the nodes in $V_1''$.  
Note that $|V_1''|=k$ and $|E_1''|\geq |E_1'|+(k-|V_1'|)$, and thus using~\eqref{ggg1} we get
\[
|E_1''|
\geq 
|E_1'| + (k-|V_1'|)
\geq 
\widehat{t} + (k-|V_1'|)
\geq t
\]
\end{proof}

This concludes the proof for $d=3$. For the case when $d=3\mu$ for some integer $\mu>1$,
the same reduction can be used provide we split \emph{every} edge of $G_1$ into a path of length $\mu$ 
by using new $\mu-1$ nodes 
(see \FI{fig3}(\emph{d})).

\bigskip
\noindent
\textbf{(\emph{b}1) and (\emph{b}2)}
We \emph{will} reuse the notations used in the proof of (\emph{a}).
We modify the proof and the proof technique in 
\textbf{(\emph{a}1)} 
for the proof of 
\textbf{(\emph{b}1)}.
We now observe that the following sub-cases are easy to solve:
\begin{itemize}
\item
If $\gamma<n-m+1+\Delta-\Delta'$ 
then we can assert that there is no feasible solution. 
This is true because 
for any 
$E'\subseteq \wte$ it is true that
$\mathfrak{C}^2_d(G\setminus E')$ is 
at least $n-(m-1)+(\Delta-\Delta')$.
\item
If 
$\gamma\geq n-m+1+\Delta-\Delta'$ 
and 
$\Delta'=0$ then 
there exists a \emph{trivial} optimal feasible solution of the following form:  
{\em 
select any set of $m_1$ edges from $\wte$ 
where 
$m_1$ is the largest positive integer satisfying 
$n-m_1+1+\Delta\leq\gamma$.  
}
\end{itemize}
%
Thus, we assume that 
$\gamma\geq n-m+1+\Delta-\Delta'$ 
and $\Delta'>0$.
\eqref{eq-minus-e1} still holds, but 
\eqref{eq-mc} is now rewritten as (note that $\Gamma>0$):
\begin{multline}
\mathfrak{C}^2_d(G\setminus E_1) \leq \gamma
\,\equiv\,
m_1-\Delta_1 \leq  \gamma-\mathfrak{C}^2_d(G)
\\
\equiv\,
m_1+ (\Delta'-\Delta_1) \leq  \gamma-\mathfrak{C}^2_d(G) +\Delta'
	 \eqdef \Gamma
\tag*{\eqref{eq-mc}$'$}
\end{multline}
The nodes in the di-graph $\cG=(\cV,\cE)$ are same as before, but 
the directed edges are modified as follows:
\begin{itemize}
\item
For every edge $e\in \wte$, 
we have an edge $(u_e,t)$ (an ``edge-arc'') of capacity 
$\ccap(u_e,t)=1$.
\item
For every cycle $\cF_i\in\cC'(G)$, 
we have an edge (a ``cycle-arc'') $(s,u_{\cF_i})$ of capacity 
$\ccap(s,u_{\cF_i})=1$.
\item
For every cycle $\cF_i\in\cC'(G)$ and 
every edge $e\in \wte$ such that $e$ is an edge of $\cF_i$, we have a directed edge
(an ``cycle-edge-arc'', cy-ed-arc for short) $(u_{\cF_i},u_e)$ of capacity $\ccap(u_{\cF_i},u_e)=\infty$.
\end{itemize}
Corresponding to a feasible solution 
$E_1$ of $m_1$ edges 
for 
\eadp$_{\mathfrak{C}^2_d}(G,\wte,\gamma)$
that removes
$\Delta_1$ cycles, 
exactly the same cut 
$(\cS,\cV\setminus\cS)$ described before 
includes \emph{no} cy-ed-arcs
and has a capacity of 
\begin{eqnarray*}
\ccap(\cut(\cS,\cV\setminus\cS))
& = & 
| \set{(s,u_{\cF_i}) \,|\, \text{$\cF_i$ does not contain one or mores edges from $E_1$} } |
\\
& & 
\,\,\,+\,
| \set{(u_e,t) \,|\, e\in E_1} |
\\
& = &
(\Delta'-\Delta_1)+ m_1 
\end{eqnarray*}
Therefore 
$\mathfrak{C}^2_d(G\setminus E_1) \leq \gamma$ implies
$\ccap(\cut(\cS,\cV\setminus\cS)) \leq \Gamma$,
as desired. 
Conversely,
given a minimum $s$-$t$ cut 
$(\cS,\cV\setminus\cS)$ of $\cG$ of value $\Phi\leq\Gamma$, 
we consider the solution 
$E_1 = \{ e \,|\, u_e\in\cS \}$
for 
\eadp$_{\mathfrak{C}^2_d}(G,\wte,\gamma)$.
Let 
$\Psi = \{ \cF_j \,|\, u_{\cF_j}\in\cS \}$
and let 
$\Upsilon$ be the 
cycles from $\cC'(G)$ that are removed by deletion of the edges in $E_1$.
Since \emph{no} cy-ed-arc (of infinite capacity) can be an edge of the minimum $s$-$t$ cut 
$(\cS,\cV\setminus\cS)$, 
$\Psi$ is a subset of $\Upsilon$. 
We therefore have 
\begin{multline*}
\Phi 
=
| \set{ u_{\cF_j} \,|\, u_{\cF_j} \notin \cS  } |
+
| \set{ u_e \,|\, u_e\in\cS  } |
\\
= 
(\Delta'-|\Psi|) + |E_1|
\leq 
\Gamma 
\,\Rightarrow\,
(\Delta'-|\Upsilon|) + |E_1|
\leq 
\Gamma 
\end{multline*}
and the last inequality implies $\mathfrak{C}^2_d(G\setminus E_1) \leq \gamma$.

This completes a proof for (\emph{b}1). We now prove (\emph{b}2).
We use an approach similar to that in 
(\emph{a}2) but with a \emph{different polytope} for the minimum $s$-$t$ cut of $\cG$.
Let $\cP$ be the set of all possible $s$-$t$ paths in $\cG$. Then, an alternate polytope
for the minimum $s$-$t$ cut is as follows (cf.\ see (20.2) in~\cite[p. 168]{V01}): 
\begin{gather}
\text{
\begin{tabular}{r l}
{minimize} & $\sum_{\alpha\in\cE} \ccap(\alpha) d_\alpha$
\\
[4pt]
{subject to} & $\sum_{\alpha\in p} d_\alpha \geq 1$ \hspace*{0.15in} for every $s$-$t$ path $p\in\cP$ 
\\
[2pt]
                  & $0\leq d_\alpha \leq 1$ \hspace*{0.3in} for every edge $\alpha\in\cE$ 
\end{tabular}
}
\tag*{\eqref{pto1}$'$}
\end{gather}
An integral solution of~\eqref{pto1}$'$ generates a $s$-$t$ cut $(\cS,\cV\setminus\cS)$ 
by letting $\cut(\cS,\cV\setminus\cS)=\set{\alpha \,|\, d_\alpha=1}$.
Since the capacity of any cy-ed-arc in $\infty$, 
$\cut(\cS,\cV\setminus\cS)$
contains only cycle-arcs or edge-arcs, and the number of edge-arcs in 
$\cut(\cS,\cV\setminus\cS)$
for an integral solution is precisely 
the number of edge-nodes in $\cS$.  
This motivates us to formulate
the following polytope for our problem to ensure that 
integral solutions constrain 
the number of edges to be deleted from $\wte$ to be $\kappa$: 
\begin{gather}
\text{
\begin{tabular}{r l}
{minimize} & $\sum_{\alpha\in\cE} \ccap(\alpha) d_\alpha$
\\
[4pt]
{subject to} & $\sum_{\alpha\in p} d_\alpha \geq 1$ \hspace*{0.15in} for every $s$-$t$ path $p\in\cP$ 
\\
[2pt]
                  & $0\leq d_\alpha \leq 1$ \hspace*{0.3in} for every edge $\alpha\in\cE$ 
\\
[2pt]
                  & $\sum_{e\in\wte}d_{(u_e,t)}=\kappa$ 
\end{tabular}
}
\tag*{\eqref{pto2}$'$}
\end{gather}
For our problem, $|\cP|<\binom{|\cV|}{3}$ and 
thus~\eqref{pto2}$'$ can be solved in polynomial time. 
Let $\mathsf{OPT}_{\eqref{pto2}'}$ 
denote the optimal objective value of~\eqref{pto2}$'$.
It is very easy to see that 
$\mathsf{OPT}_{\eqref{pto2}'} \leq \Gamma$: 
assuming that deletion 
of the $\kappa$ edges in the optimal solution $\whe$ removes $\widehat{\Delta}$ cycles from $\cC'(G)$, 
we set
$
d_\alpha = \left\{
\begin{array}{r l}
1, & \mbox{$\alpha=(u_{\cF_j},u_e),\,e\in\whe$}
\\
0, & \mbox{otherwise}
\end{array}
\right.
$
to construct
a feasible solution of~\eqref{pto2}$'$
of objective value 
\[
\sum_{\alpha\in\cE} d_\alpha
=
| \set{ u_{\cF_j} \,|\, d_{(s,u_{\cF_j})} = 1  } |
+
| \set{ u_e \,|\, d_{(u_e,t)}=1  } |
=
(\Delta' - \widehat{\Delta}) + |\wte| \leq \Gamma
\]
where the last inequality follows by~\eqref{eq-mc}$'$ since 
$\mathfrak{C}^2_d(G\setminus \whe) \leq \gamma$.
Note that the constraint 
$\sum_{e\in\wte}d_{(u_e,t)}=\kappa$ 
is satisfied since 
$\sum_{e\in\wte}d_{(u_e,t)}
=
| \set{d_{(u_e,t)} \,|\, d_{(u_e,t)}=1} |
=
| \set{e \,|\, e\in\whe } |
=\kappa$. 

Given a polynomial-time obtainable optimal solution values
$\big\{d_\alpha^\ast \,|\, \alpha\in\cE \big\}$ 
of the variables in~\eqref{pto2}$'$,
consider the following simple rounding procedure, 
the corresponding cut $(\cS,\cV\setminus\cS)$ of $\cG$,
and 
the corresponding solution $E_1\subseteq\wte$ of 
\eadp$_{\mathfrak{C}^2_d}(G,\wte,\gamma)$:
\begin{gather}
\hat{d}_\alpha=\left\{
\begin{array}{r l}
1, & \mbox{if $d_\alpha^\ast\geq\nicefrac{1}{2}$}
\\
0, & \mbox{otherwise}
\end{array}
\right.
\,\,\,\,\,\,\,\,\,
E'= \set{\alpha \,|\, \hat{d}_\alpha=1} 
\,\,\,\,\,\,\,\,\,
E_1= \set{e \,|\, (u_e,t)\in E'}
\tag*{\eqref{rs1}$'$}
\end{gather}

\begin{lemma}
$E'$ is indeed a $s$-$t$ cut of $\cG$ and $E'$ does not contain any cy-ed-arc.
\end{lemma}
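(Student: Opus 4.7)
The plan is to verify the two claims in turn, using the fact established earlier that every $s$-$t$ path in $\cG$ has exactly three directed edges: a cycle-arc $(s, u_{\cF_i})$, followed by a cy-ed-arc $(u_{\cF_i}, u_e)$, followed by an edge-arc $(u_e, t)$.

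\emph{First, I show no cy-ed-arc belongs to $E'$.} Since the capacity of every cy-ed-arc is $\infty$, and we have already established $\mathsf{OPT}_{\eqref{pto2}'} \leq \Gamma < \infty$, any optimal solution $\{d_\alpha^\ast\}$ must satisfy $d_\alpha^\ast = 0$ for every cy-ed-arc $\alpha$ (else the objective would be infinite). Hence $\hat{d}_\alpha = 0$ for every such $\alpha$ under the rounding~\eqref{rs1}$'$, and therefore $E'$ contains no cy-ed-arc.

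\emph{Second, I show $E'$ is an $s$-$t$ cut.} It suffices to show that every $s$-$t$ path of $\cG$ contains at least one edge of $E'$. Fix an arbitrary $s$-$t$ path $p \in \cP$, consisting of a cycle-arc $\alpha_1 = (s, u_{\cF_i})$, a cy-ed-arc $\alpha_2 = (u_{\cF_i}, u_e)$, and an edge-arc $\alpha_3 = (u_e, t)$. The path constraint in~\eqref{pto2}$'$ gives $d_{\alpha_1}^\ast + d_{\alpha_2}^\ast + d_{\alpha_3}^\ast \geq 1$. By the first part, $d_{\alpha_2}^\ast = 0$, and so $d_{\alpha_1}^\ast + d_{\alpha_3}^\ast \geq 1$. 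Consequently at least one of $d_{\alpha_1}^\ast, d_{\alpha_3}^\ast$ is $\geq \nicefrac{1}{2}$, and the corresponding edge (either $\alpha_1$ or $\alpha_3$) satisfies $\hat{d}_\alpha = 1$ and hence belongs to $E'$. Since $p$ was arbitrary, every $s$-$t$ path of $\cG$ meets $E'$, so $E'$ disconnects $s$ from $t$ and is therefore an $s$-$t$ cut.

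The only potentially subtle point is checking that $E' = \{\alpha : \hat{d}_\alpha = 1\}$ really corresponds to the cut $\cut(\cS, \cV \setminus \cS)$ of some vertex partition $(\cS, \cV \setminus \cS)$ with $s \in \cS$ and $t \notin \cS$, rather than merely being an edge set separating $s$ from $t$. This however follows standardly: take $\cS$ to be the set of nodes reachable from $s$ in the subgraph $(\cV, \cE \setminus E')$; then $t \notin \cS$ (by the cut property just proved) and $\cut(\cS, \cV \setminus \cS) \subseteq E'$, so $E'$ is indeed a superset of a genuine $s$-$t$ cut of $\cG$, which suffices for the subsequent capacity bound. Beyond this verification, the argument is essentially a one-line consequence of the two facts that (i) infinite-capacity arcs have fractional value zero and (ii) the path-cover constraint forces the remaining two arcs on every $s$-$t$ path to carry a combined fractional value of at least one.
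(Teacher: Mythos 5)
Your proof is correct and follows essentially the same argument as the paper: the infinite capacity of cy-ed-arcs forces $d_\alpha^\ast=0$ on them, and then the path constraint $\sum_{\alpha\in p} d_\alpha^\ast\geq 1$ on each three-arc $s$-$t$ path forces one of the remaining two arcs to have fractional value at least $\nicefrac{1}{2}$ and hence be rounded into $E'$. Your closing remark that $E'$ may strictly contain the edges of the cut $\cut(\cS,\cV\setminus\cS)$ induced by the reachable set is a small extra care the paper glosses over, and it does no harm since only an upper bound on the cut capacity is needed later.
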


\begin{proof}
Since the capacity of any cy-ed-arc $\alpha$ in $\infty$, 
$d_\alpha^\ast=0$ and therefore $\alpha\notin E'$.
To see that $E'$ is indeed a $s$-$t$ cut,
consider any $s$-$t$ path $(s,u_{\cF_j})$,$(u_{\cF_j},u_e)$, $(u_e,t)$.
Since $d_{(u_{\cF_j},u_e)}^\ast=$, we have 
$
d_{(s,u_{\cF_j})}+d_{(u_{\cF_j},u_e)}+d_{(u_e,t)}=
d_{(s,u_{\cF_j})}+d_{(u_e,t)}\geq 1
$, which implies $\max\{d_{(s,u_{\cF_j})},d_{(u_e,t)}\}\geq\nicefrac{1}{2}$, putting
at least one edge of the path in $E'$ for deletion.
\end{proof}

Note that
$
|E_1|=
| \set{e \,|\, d_{(u_e,t)}^\ast\geq\nicefrac{1}{2} } |
\leq
2\,\sum_{e\in\wte}d_{(u_e,t)}^\ast=2\,\kappa
$, as desired. 
Let $(\cS,\cV\setminus\cS)$ be the $s$-$t$ cut such that $\cut(\cS,\cV\setminus\cS)=E'$.
It thus follows that 
\begin{multline}
\ccap(\cut(\cS,\cV\setminus\cS)) 
=
|E'|
=
|\set{\alpha \,|\, d_\alpha^\ast \geq \nicefrac{1}{2} }|
\\
\leq 
2\, \sum_{\alpha\in\cE} \ccap(\alpha) d_\alpha^\ast
=
2\, \mathsf{OPT}_{\eqref{pto2}'}
\leq
2\,\Gamma
\label{eq-less-gamma}
\end{multline}
Let 
$\Psi = \{ \cF_j \,|\, u_{\cF_j}\in\cS \}$
and let 
$\Upsilon$ be the 
cycles from $\cC'(G)$ that are removed by deletion of the edges in $E_1$.
Since \emph{no} cy-ed-arc (of infinite capacity) can be an edge of the minimum $s$-$t$ cut 
$(\cS,\cV\setminus\cS)$, 
$\Psi$ is a subset of $\Upsilon$.
The claimed bound on 
$\mathfrak{C}^2_d(G\setminus E_1)$ can now be shown as follows using \eqref{eq-less-gamma}:
\begin{gather*}
\begin{array}{r l}
& \ccap(\cut(\cS,\cV\setminus\cS))
\\
&
\, =
| \set{ u_{\cF_j} \,|\, u_{\cF_j} \notin \cS  } |
+
| \set{ u_e \,|\, u_e\in\cS  } |
\\
&
\, = 
(\Delta'-|\Psi|) + |E_1| \leq 2\,\Gamma 
\\
[3pt]
\Rightarrow & 
(\Delta'-|\Upsilon|) + |E_1|
\leq 
2\,\Gamma
=
2\,\gamma-2\,\mathfrak{C}^2_d(G) +2\,\Delta'
\\
[3pt]
\Rightarrow & 
\mathfrak{C}^2_d(G\setminus E_1) - \mathfrak{C}^2_d(G) 
= |E_1| - |\Upsilon|
\leq
2\,\gamma-2\,\mathfrak{C}^2_d(G) +\Delta'
\\
[3pt]
\Rightarrow & 
\frac { \mathfrak{C}^2_d(G\setminus E_1) - \mathfrak{C}^2_d(G) }
      { \gamma-\mathfrak{C}^2_d(G) }
\leq 
2 + \frac{ \Delta' } { \gamma-\mathfrak{C}^2_d(G) }
\leq 
1-\eps
\end{array}
\end{gather*}

\noindent
\textbf{(\emph{b}3)}
In the proof of Theorem~\ref{thm-badp}, 
set 
$\wte=E_2\setminus E_1$ and 
$\gamma=\mathfrak{C}^2_d(G_2)$. 
Note that the proof shows that 
$\mathfrak{C}^2_d(G_2)<\mathfrak{C}^2_d(G_1)<$ 
The proof also shows that 
$\mathfrak{C}^2_d(G_1\setminus E_3)\geq \gamma$ 
for any proper subset of edges $E_3\subset\wte$, which ensures that 
for any subset of edges $E_4\subseteq\wte$ 
$\mathfrak{C}^2_d(G_1\setminus E_4)\leq \gamma$ is equivalent to stating 
$\mathfrak{C}^2_d(G_1\setminus E_4)=\gamma$.


\subsection{Gromov-hyperbolic curvature: computational complexity of \eadp$_{\curvgrom}$} 
\label{sec-proof-extreme-gromov}

\begin{theorem}\label{thm-eadp-curvgrom}
The following statements hold for \eadp$_{\curvgrom}(G,\wte,\gamma)$ 
when $\gamma>\curvgrom(G)$:
\medskip
\begin{adjustwidth}{0.1in}{}
\begin{description}
\item[\emph{(}a\emph{)}]
Deciding if there exists a feasible solution is $\NP$-hard.
\item[\emph{(}b\emph{)}]
Even if a trivial feasible solution exists,
it is $\NP$-hard to design a polynomial-time 
algorithm to approximate 
$\mathsf{OPT}_{\mbox{\eadp}_{\curvgrom}}(G,\wte,\gamma)$
within a factor of $c\,n$ 
for some constants $c>0$, 
where $n$ is the number of nodes in $G$.
\end{description}
\end{adjustwidth}
%
\end{theorem}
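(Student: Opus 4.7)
The plan is to reduce from the Hamiltonian Path problem on cubic graphs (\ham), following the pattern suggested by the paper for the analogous targeted result in Theorem~\ref{thm-grom-hard} (see Section~\ref{sec-informal-badp-gromov}). A single polynomial-time reduction will yield both parts (a) and (b) once we amplify the soundness/completeness gap to $\Omega(n)$, which is precisely what the $cn$-inapproximability of part (b) demands.

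First I would construct, from a cubic graph $H=(V_H,E_H)$ with $|V_H|=n$, a graph $G$ built around a \emph{witness gadget} consisting of three long paths of length $\Theta(n)$ glued at three common corner vertices $u,v,w$. Taken in isolation, the gadget would be a geodesic triangle of fatness $\Theta(n)$, i.e.\ $\curvgrom=\Theta(n)$. On top of this gadget I would superimpose a family of \emph{shortcut} edges (or short paths) linking pairs of vertices on distinct sides of the triangle; each shortcut encodes a single structural element (a vertex, an edge, or a position along a candidate Hamiltonian path) of the instance $H$. The set $\wte$ consists exactly of these shortcut edges, and the construction is arranged so that with all shortcuts present $\curvgrom(G)=O(1)$. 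We set the threshold $\gamma=\Theta(n)$, so any feasible deletion set $\whe$ must destroy enough shortcuts for the witness triangle to re-emerge as the minimizing triangle of the Gromov measure.

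Next I would verify the two directions of the reduction. For soundness, if $H$ has a Hamiltonian path, then the path selects a canonical bottleneck of shortcuts that can be simultaneously neutralized by deleting an $O(1)$-sized subset of $\wte$ (for example, a single edge of a shared ``spine'' through which every selected shortcut is forced to pass), so YES instances satisfy $\mathsf{OPT}_{\mbox{\eadp}_{\curvgrom}}(G,\wte,\gamma)=O(1)$. For completeness, in the absence of a Hamiltonian path in $H$ one argues that the surviving shortcuts always provide an alternative short route thinning the witness triangle unless at least $\Omega(n)$ independent shortcuts are broken, forcing $\mathsf{OPT}_{\mbox{\eadp}_{\curvgrom}}(G,\wte,\gamma)=\Omega(n)$. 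The multiplicative gap of $\Omega(n)$ between YES and NO instances then yields part (a) (NP-hardness of the decision question) and part (b) (the $cn$-inapproximability).

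The main obstacle is the inherently global character of Gromov-hyperbolic curvature: $\curvgrom$ is the minimum over all geodesic triangles in $G$, so we must rule out two types of spoiler configurations. First, no unintended fat triangle should appear in $G$ or in any $G\setminus\whe$ outside the designed witness, or else $\gamma$ would be trivially achievable; this typically requires \emph{stiffening} the rest of $G$ by padding it with many short parallel paths so that no triple of vertices outside the witness gadget can generate large fatness. Second, we must verify that no small collection of surviving shortcuts can accidentally cooperate to thin the witness triangle even when the combinatorial obstruction in $H$ is present. Both issues require delicate distance accounting --- essentially a case analysis showing that every geodesic between two far-apart vertices on the witness must traverse at least one active shortcut --- and this accounting, while simultaneously preserving the polynomial YES/NO gap and the connectivity requirement on $G\setminus\wte$, is where the bulk of the technical work will lie.
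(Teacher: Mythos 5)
Your high-level plan (reduce from \ham and amplify the presence/absence of a Hamiltonian path into a large gap in the fatness of the extremal geodesic triangle) is the same engine the paper uses, but the construction you describe has two genuine gaps. First, it cannot give part (a). In your graph the fat witness triangle is hard-wired into $G$: its three long sides are not in $\wte$, so deleting \emph{all} shortcuts always leaves a $\Theta(n)$-fat triangle, i.e.\ a feasible solution $\whe\subseteq\wte$ with $\curvgrom(G\setminus\whe)\geq\gamma$ exists for \emph{every} instance, YES or NO. A multiplicative gap in $\mathsf{OPT}$ says nothing about the decision question ``does any feasible $\whe$ exist?'', which is what (a) asserts is $\NP$-hard. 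The paper avoids this by \emph{not} prebuilding the fat triangle: for (a) it takes the augmented cubic graph plus a hub $u$ joined to every node by a path of length $\nicefrac{n}{2}+1$, sets $\wte=E'$ and $\gamma=\nicefrac{n}{2}+1$, and then shows (via the diameter/longest-cycle argument of Lemma~\ref{lem-simpler}) that $\curvgrom(G\setminus E')\leq \nicefrac{n}{2}$ for every deletion set unless a Hamiltonian path exists, so that feasibility itself is equivalent to Hamiltonicity.

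Second, your quantitative mechanism for (b) is unsupported and, as stated, self-defeating. You claim YES instances admit $\mathsf{OPT}=O(1)$ because a Hamiltonian path ``selects a canonical bottleneck'', e.g.\ one spine edge through which every relevant shortcut passes; but if all shortcuts can be neutralized by deleting one shared edge of $\wte$, then NO instances can delete that same edge, and the gap collapses. More generally, turning an $O(1)$-hyperbolic graph into a $\Theta(n)$-hyperbolic one by deleting $O(1)$ edges, in a way that is possible exactly when $H$ is Hamiltonian, is precisely the hard technical content, and nothing in the proposal delivers it. The paper's gap is engineered differently: it uses the full Theorem~\ref{thm-grom-hard} construction, in which a Hamiltonian path lets you reach curvature $\nicefrac{n}{2}+1$ by deleting only $\nicefrac{n}{2}+1$ edges of the cubic part, whereas without a Hamiltonian path the required fat cycle must be carved out of a planted clique $K_{|V''|}$, and a counting argument (no clique node may retain four neighbors on the long geodesic) forces $\Omega(n^3)$ deletions; since $|V'''|=\Theta(n^2)$, this yields the $c\,n$-factor inapproximability with a trivial feasible solution ($\whe=\wte$, giving $G_2$) guaranteed to exist. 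If you want to salvage your variant, you would need (i) a version in which the fat triangle can only arise from the Hamiltonian structure itself (for (a)), and (ii) an explicit size-amplification device playing the role of the clique (for (b)); the distance accounting you defer is exactly where these are needed.
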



\subsubsection{Proof of Theorem~\ref{thm-eadp-curvgrom}}
\label{sec-app-proof-thm-eadp-curvgrom}

From a high level point of view, Theorem~\ref{thm-eadp-curvgrom} is proved by suitably modifying the reductions used
in the proof of Theorem~\ref{thm-grom-hard}.

To prove
\textbf{(\emph{a})}
we will use a simpler version of the proof of Theorem~\ref{thm-grom-hard}
reusing the same notations. 
Our graph $G$ will be the same as the graph $G_1$ in that proof, except that 
we do \emph{not} add the  
complete graph $K_{|V''|}$ on the nodes $w_0,w_1,\dots,w_{|V''|-1}$
and consequently we also do \emph{not} have the edge $\set{u,w_0}$.
We set $\wte=E'$ and $\gamma=\frac{n}{2}+1$. 
The proof of Theorem~\ref{thm-grom-hard} shows that 
$\curvgrom(G)<\gamma$,
$\curvgrom(G\setminus E')\leq\gamma$ 
for any subset of edges $E'\subseteq\wte$, 
and 
$\curvgrom(G\setminus E')=\gamma$ 
for a subset of edges $\emptyset\subset E'\subset \wte$ 
if and only if the given cubic graph has a Hamiltonian path between the two specified nodes, 
thereby showing $\NP$-hardness of the feasibility problem.

To prove \textbf{(\emph{b})} the same construction in the proof of Theorem~\ref{thm-grom-hard}
works: $G$ is the same as the graph $G_1$ in that proof, 
$\gamma=\frac{n}{2}+1$, $\wte$ is the set of edges whose deletion produced $G_2$ from $G_1$,
and the trivial feasible solution is $G_2$.
Note that the proof of Theorem~\ref{thm-grom-hard} shows 
$\curvgrom(G)<\gamma$,
$\curvgrom(G\setminus E')\leq\gamma$ 
for any subset of edges 
$\emptyset\subset E'\subseteq\wte$ and 
$\curvgrom(G_2)=\gamma$.


\section{Computational complexity of targeted anomaly detection problems}
\label{sec-proof-target}

\subsection{Geometric curvatures: computational hardness of \badp$_{\mathfrak{C}^2_d}(G_1,G_2)$} 
\label{sec-proof-target-geom}

For two functions $f(n)$ and $g(n)$ of $n$, we say 
$f(n)=O^\ast(g(n))$ if $f(n)=O(g(n)\,n^c)$ for some positive constant $c$.
In the sequel we will use the following two complexity-theoretic assumptions:
the \emph{unique games conjecture} (\UGC)~\cite{K02,T12}, 
and the \emph{exponential time hypothesis} (\ETH)~\cite{IP01,IPZ01,W03}.

\begin{theorem}\label{thm-badp}~\\
\textbf{\emph{(\emph{a})}}
Computing $\mathsf{OPT}_{\mbox{\badp}_{\mathfrak{C}^2_3}}(G_1,G_2)$ is $\NP$-hard. 

\noindent
\textbf{\emph{(\emph{b})}}
There are no algorithms of the following type for 
\badp$_{\mathfrak{C}^2_d}(G_1,G_2)$ 
for $4\leq d\leq o(n)$
when $G_1$ and $G_2$ are $n$-node graphs: 
\medskip
\begin{adjustwidth}{0.1in}{}
\begin{description}
\item[\emph{(\emph{b}1)}]
a polynomial time $(2-\eps)$-approximation algorithm
for any constant $\eps>0$ assuming \UGC\ is true, 
\item[\emph{(\emph{b}2)}]
a polynomial time $(10\sqrt{5}-21-\eps)\approx 1.36$-approximation algorithm 
for any constant $\eps>0$
assuming P$\neq\NP$, 
\item[\emph{(\emph{b}3)}]
a $O^\ast\big(2^{o(n)}\big)$-time exact computation algorithm
assuming \ETH\ is true, and 
\item[\emph{(\emph{b}4)}]
a $O^\ast\big(n^{o(\kappa)}\big)$-time exact computation algorithm
if $\mathsf{OPT}_{\mbox{\badp}_{\mathfrak{C^2_3}}}(G_1,G_2)\leq \kappa$
assuming \ETH\ is true.
\end{description}
\end{adjustwidth}
\end{theorem}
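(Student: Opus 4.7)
The plan is to reduce from the minimum node cover problem \mnc\ for parts (\emph{a}) and (\emph{b1})--(\emph{b3}), and to use a parameterized reduction from $k$-Clique for part (\emph{b4}).

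For part (\emph{a}), given an \mnc\ instance $H=(V_H,E_H)$, I would build $G_1$ as follows. For each $v\in V_H$ introduce three fresh nodes $a_v,b_v,c_v$, put $e_v=\{a_v,b_v\}$ into $\wte$, and put $\{a_v,c_v\},\{b_v,c_v\}$ into $E_2$ to form a \emph{private triangle} through $e_v$. For each edge $\{u,v\}\in E_H$ add the $E_2$-edges $\{a_u,b_v\}$ and $\{b_u,a_v\}$, producing a \emph{common $4$-cycle} $(a_u,b_u,a_v,b_v)$ that contains both $e_u$ and $e_v$. Set $G_2=G_1\setminus\wte$. Since $\mathfrak{C}^2_3$ counts cycles of length at most four, a short counting verifies that, for $E_3=\{e_v:v\in S\}$, the equality $\mathfrak{C}^2_3(G_1\setminus E_3)=\mathfrak{C}^2_3(G_2)$ is equivalent to ``$S$ is a vertex cover of $H$'', giving $\mathsf{OPT}_{\mbox{\badp}_{\mathfrak{C}^2_3}}(G_1,G_2)=\mathsf{OPT}_{\mbox{\mnc}}(H)$. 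To prevent spurious short cycles in $G_1$ coming from triangles or $4$-cycles of $H$ itself, I would first apply a standard edge-subdivision preprocessing to put $H$ into girth at least $5$, which keeps \mnc\ $\NP$-hard on, e.g., cubic instances.

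For parts (\emph{b1})--(\emph{b3}) with $4\le d\le o(n)$, I would lift this gadget to each $d$ by subdividing every edge of $G_1$ into a path of appropriate length, as in \FI{fig3}(\emph{d}), so that each private triangle and each common $4$-cycle becomes a cycle of length at most $d+1$, while no shorter unintended cycle is created. The optima remain equal, the blow-up is linear, and the approximation ratio is preserved, so the UGC-based $(2-\eps)$-inapproximability of Khot--Regev for \mnc\ yields (\emph{b1}), the Dinur--Safra $(10\sqrt{5}-21-\eps)$-inapproximability yields (\emph{b2}), and the standard $2^{\Omega(n)}$ \ETH\ lower bound for \mnc\ yields (\emph{b3}). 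For (\emph{b4}) a reduction from \mnc\ is insufficient because \mnc\ is fixed-parameter tractable in its optimum; instead, I would give a parameterized reduction from $k$-Clique to \badp\ producing an instance with $\mathsf{OPT}_{\mbox{\badp}}=\Theta(k)$. The Chen--Huang--Kanj--Xia \ETH\ lower bound forbidding $f(k)\cdot n^{o(k)}$-time algorithms for $k$-Clique then rules out $O^\ast(n^{o(\kappa)})$-time algorithms for \badp.

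The main technical obstacle is that $\mathfrak{C}^2_d(G_1\setminus E_3)=\mathfrak{C}^2_d(G_2)$ is an \emph{exact} numerical equality, so any incidental cycle of length at most $d+1$ produced by interactions between different gadget pieces (for example, short cycles forced by triangles or $4$-cycles of $H$, or those introduced by the subdivision used for $d\ge 4$) would shift the curvature count and break the correspondence with the intended combinatorial certificate. Handling this will require starting from a sufficiently high-girth restriction of the source problem (or attaching compensating book-keeping gadgets whose curvature contribution is independent of $E_3$) while verifying that the hardness and inapproximability bounds survive the preprocessing. I expect the $k$-Clique reduction underlying (\emph{b4}) to be the most delicate, since the parameter $\kappa$ of \badp\ must scale linearly with $k$ while the gadget size stays polynomial in $|V_H|$.
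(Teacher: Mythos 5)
Your plan for (\emph{b}1)--(\emph{b}3) is essentially the paper's own: it also reduces from \mnc, builds for each node of the \mnc\ instance a length-$d$ ``node cycle'' containing one special deletable edge and for each \mnc-edge a length-$d$ ``edge cycle'' threading the two corresponding special edges, sets $E_2$ to be $E_1$ minus the special edges, proves $\optmnc(G)=\mathsf{OPT}_{\mbox{\badp}_{\mathfrak{C}^2_d}}(G_1,G_2)$, and then quotes Khot--Regev, Dinur--Safra and (for (\emph{b}3)) the sparsification lemma. Two details of your lift are weaker than the paper's direct construction, though. First, a \emph{uniform} subdivision of the $d=3$ gadget does not work for every $d$: the intended cycles acquire lengths $3\ell$ and $4\ell$ while the shortest unintended cycle through a special edge (the one using $e_u$, both cross edges of a single $H$-edge $\{u,w\}$ and the two $c_w$-edges) acquires length $5\ell$, and for several admissible $d$ (e.g.\ $d\in\{5,6,7\}$) no integer $\ell$ puts $4\ell$ below the length cutoff and $5\ell$ above it; the paper sidesteps this by building node and edge cycles of length exactly $d$ on two separate ``tracks'' per node. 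Second, your girth-$5$ preprocessing is needed because a triangle of $H$ forces a spurious counted cycle through a special edge (e.g.\ $a_u\leftrightarrow b_v\leftrightarrow a_w\leftrightarrow b_u\leftrightarrow a_u$ for suitable cross-edge orientations), but subdividing $H$ inflates $\optmnc$ additively by $\Theta(|E_H|)$ and hence does \emph{not} preserve the $(2-\eps)$ and $(10\sqrt{5}-21-\eps)$ gaps you need for (\emph{b}1)--(\emph{b}2); the paper's track-separated gadget needs no girth assumption because its spurious short cycles avoid all special edges and therefore cancel on both sides of the curvature equality.

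Part (\emph{a}) has a genuine gap: your reduction stands or falls with $\mathfrak{C}^2_3$ counting $4$-cycles. Although one sentence in Section~\ref{sec-gmeq1} says ``at most $d+1$ nodes,'' the measure as actually used in the paper counts cycles on at most $d$ nodes: the worked example of Figure~\ref{ex2-fig} computes $\mathfrak{C}^2_3$ from triangles only, and the paper's own counting in this theorem ignores $4$-cycles (and would otherwise be incorrect). Under that operative convention your common $4$-cycles contribute nothing, deleting $e_v$ removes one edge and one private triangle, so $\mathfrak{C}^2_3(G_1\setminus E_3)=\mathfrak{C}^2_3(G_2)$ for \emph{every} $E_3\subseteq\wte$, the optimum is $0$, and the reduction proves nothing. (Indeed, if $4$-cycles did count at $d=3$, your gadget would already give $(2-\eps)$-hardness at $d=3$, contradicting the theorem's own split between $d=3$ and $d\geq 4$.) The paper instead proves (\emph{a}) by a reduction from the triangle deletion problem \tdp\ of Yannakakis, attaching one auxiliary triangle per original edge plus connectivity paths and showing $\opttdp(G)=\mathsf{OPT}_{\mbox{\badp}_{\mathfrak{C}^2_3}}(G_1,G_2)$; some triangles-only argument of this kind is what you need, since at $d=3$ no counted cycle can pass through two disjoint special edges.

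For (\emph{b}4) your diagnosis is correct: an \mnc-based reduction whose parameter equals $\optmnc$ cannot rule out $O^\ast\big(n^{o(\kappa)}\big)$ time, because vertex cover is fixed-parameter tractable and a $2^{O(\kappa)}\,\mathrm{poly}(n)$ algorithm already runs in time $n^{O(\kappa/\log\kappa)}\,\mathrm{poly}(n)$ when $\kappa\leq n$; note that the paper's own proof of (\emph{b}4) rests on a parameterized lower bound stated for \mnc\ whose cited source concerns Clique/Independent Set, so your objection applies there as well. Your proposed fix --- a parameter-preserving reduction from $k$-Clique combined with the $n^{o(k)}$ \ETH\ lower bound --- is the right direction, but it is only announced, not carried out: since a valid \badp\ solution must hit the target curvature \emph{exactly}, one must design $G_1,G_2$ so that a $k$-clique yields a feasible deletion set of size $O(k)$ and no feasible set of that size exists otherwise, and this construction --- which you yourself flag as the delicate step --- is precisely what is missing from the proposal.
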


\subsubsection{Proof techniques and relevant comments regarding Theorem~\ref{thm-badp}}
\label{sec-informal-badp}

\paragraph{\em\bf(on proof of \textbf{{(\emph{a})}})}
We prove the results by reducing the triangle deletion problem (\tdp) to that of solving 
\badp$_{\mathfrak{C}^2_3}$.
\tdp\ was shown to be $\NP$-hard by Yannakakis in~\cite{Yannakakis}. 

\paragraph{\em\bf(on proof of \textbf{{(\emph{b})}})} 
We provide suitable approximation-preserving reductions from \mnc.

\paragraph{\em\bf{(on proofs of \textbf{(\emph{b}3)} and \textbf{(\emph{b}4)})}}
For these proofs, the idea is to start 
with an instance of \tSAT, use 
``sparsification lemma'' in~\cite{IPZ01} 
to generate a family of Boolean formulae, reduce each of these formula 
to \mnc, and finally reduce each such instance of \mnc\
to a corresponding instance of 
\badp$_{\mathfrak{C}^2_d}$.

\subsubsection{Proof of Theorem~\ref{thm-badp}}
\label{sec-app-proof-thm-badp}

The \emph{minimum node cover} problem (\mnc) is defined as follows.

\begin{definition}[minimum node cover problem (\mnc)]
Given a graph $G$,
select a subset of nodes of \emph{minimum} cardinality such that at least one end-point of \emph{every} edge 
has been selected.
\end{definition}

Let $\optmnc(G)$ denote the cardinality of the subset of nodes that is an optimal solution of \mnc. 
The (standard) Boolean satisfiability problem is denoted by \SAT, and 
its restricted case when every clause has exactly $k$ literals
will be denoted by \kSAT~\cite{GJ79}.
Consider \SAT or \kSAT and let $\Phi$ be an input instance (\IE, a 
Boolean formula in conjunctive normal form) of it. 
The following inapproximability results are known for \mnc:
\begin{description}
\item[($\star_{\text{\mnc}}$)]
There exists a polynomial time algorithm that transforms a given instance $\Phi$ of \SAT 
to an input instance graph $G=(V,E)$ of \mnc\ such that
the following holds
for any constant $0<\eps<\frac{1}{4}$, assuming \UGC to be true~\cite{KR08}:
\[
\begin{array}{r l}
\text{if $\Phi$ is satisfiable then} & \optmnc(G)\leq \left( \frac{1}{2}+\eps\right) |V|
\\
\\
[-10pt]
\text{if $\Phi$ is not satisfiable then} & \optmnc(G)\geq \left( 1 - \eps\right) |V|
\end{array}
\]
\item[($\star\star_{\text{\mnc}}$)]
There exists a polynomial time algorithm that transforms a given instance $\Phi$ of \SAT 
to an input instance graph $G=(V,E)$ of \mnc\ such that
the following holds for any constant $0<\eps<16-8\sqrt{5}$ and 
for some $0<\alpha<2|V|$, assuming P$\neq\!\!\NP$~\cite{DS05}:
\[
\begin{array}{r l}
\text{if $\Phi$ is satisfiable then} & \optmnc(G)\leq \left(\frac{\sqrt{5}-1}{2}+\eps\right)\alpha
\\
\\
[-10pt]
\text{if $\Phi$ is not satisfiable then} & \optmnc(G)\geq \left( \frac{71-31\sqrt{5}}{2}-\eps\right)\alpha
\end{array}
\]
(note that $\left( { \frac{71-31\sqrt{5}}{2} } \right) \,/\, \left( { \frac{\sqrt{5}-1}{2} } \right) = 10\sqrt{5}-21\approx 1.36$).
\item[($\star\!\star\!\star_{\text{\mnc}}$)]
There exists a polynomial time algorithm (\EG, see~\cite[page 54]{GJ79})
that transforms a given instance $\Phi$ of \tSAT of $n$ variable and $m$ clauses to 
to an input instance graph $G=(V,E)$ of \mnc\ with $|V|=3n+2m$ nodes and $|E|=n+m$ edges such that
such that $\Phi$ is satisfiable if and only if $\optmnc(G)=n+2m$.
\end{description}
%

\medskip
\noindent
\textbf{Proof of (\emph{a})}
We will prove the results by reducing the triangle deletion problem to that of computing 
\badp$_{\mathfrak{C}^2_3}$.
The \emph{triangle deletion problem} (\tdp) can be stated as follows: 
\emph{Given $G$ find the minimum number of edges (which we will denote by $\opttdp(G)$) 
to be deleted from $G$ to make it triangle-free}. 
\tdp\ was shown to be $\NP$-hard by Yannakakis in~\cite{Yannakakis}. 

Consider an instance $G=(V,E)$ of \tdp\ 
where $V=\set{u_1,\dots,u_n}$ and $E=\set{e_1,\dots,e_m}$.
We create an instance 
$G_1=(V',E_1)$ and $G_2=(V',E_2)$ (with $\emptyset\subset E_2\subset E_1$)
of 
\badp$_{\mathfrak{C}^2_3}$
in the following manner:
\begin{enumerate}[label=$\triangleright$,leftmargin=0.7cm]
\item 
For each $u_i\in V$, we create a node $v_i\in V'$.
There are $n$ such nodes in $V'$.
\item 
If $\set{u_i,u_j}\in E$, then we add the edge $\set{v_i,v_j}$ to $E_1$. We
call these edges as 
``original''
edges. Let $E_d$ be the set of all original edges; note that $|E_d|=m$.
\item 
To ensure that $G_2$ is a connected graph, we add two new nodes $w_i^1,w_i^2$ in $V'$ 
corresponding to each node $v_i\in V'$ for $i=1,2,\dots,n-1$, 
and add three new edges 
$\set{v_i, w_i^1}$, $\set{w_i^1, w_i^2}$ and $\set{w_i^2, v_{i+1}}$ in $E_1$. 
This step adds $2n-2$ new nodes and $3n-3$ new edges to $V_1$ and $E_1$, respectively.
We call the new edges added in this step as 
``connectivity''
edges.
\item 
For each $\set{u_i,u_j})\in E$, we create a new node $v_{i,j}$ in $V'$ 
and add two new edges $\set{u_i,v_{ij}}$ and $\set{v_{ij}, u_j}$ to $E_1$. 
This step creates a new triangle corresponding to each original edge. We call the new edges added in this step as 
``triangle-creation''
edges.
This step adds $m$ new nodes and $2m$ new edges to $V_1$ and $E_1$, respectively, and exactly $m$ new triangles.
\end{enumerate}
Define $E_2=E_1\setminus E_d$. Thus, we have 
$|V'|=3n+m-2$, 
$|E_1|=3n+3m-3$, $|E_2|=3n+2m-3$, and $G_2$ contains \emph{no} triangles.
Let $\Delta$ is the number of triangles in $G_1$ created using only original edges (the ``original triangles''); 
note that $\Delta$ is also equal to the number of triangles in $G$. 
Then, $\mathfrak{C^2_3}(G_1) = |V'|-(3m+3n-3)+(\Delta+m)$ and $\mathfrak{C^2_3}(G_2) = |V'|-(2m+3n-3)$.
The following lemma completes our $\NP$-hardness proof.

\begin{lemma}\label{jj1}
$\opttdp(G)= \mathsf{OPT}_{\mbox{\badp}_{\mathfrak{C}^2_3}}(G_1,G_2)$.
\end{lemma}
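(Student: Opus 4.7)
The plan is to reduce the feasibility equation $\mathfrak{C}^2_3(G_1 \setminus E_3) = \mathfrak{C}^2_3(G_2)$ to a purely combinatorial condition on $E_3$. Since $E_1 \setminus E_2 = E_d$ by construction, any candidate $E_3$ is a subset of original edges and hence corresponds bijectively to a subset of edges of $G$.

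First I would classify the triangles of $G_1$. Each connectivity node $w_i^j$ has degree exactly $2$ in $G_1$, so no triangle of $G_1$ passes through any $w$-node. Each triangle-creation node $v_{ij}$ also has degree exactly $2$, with neighbours $v_i$ and $v_j$, so any triangle through $v_{ij}$ must be the \emph{triangle-creation triangle} $\{v_i,v_j,v_{ij}\}$, using one original edge and two triangle-creation edges. The only remaining triangles lie on three $v$-nodes and use three original edges; these \emph{original triangles} are in bijection with the triangles of $G$. Thus $G_1$ has exactly $\Delta+m$ triangles, consistent with the expression already stated for $\mathfrak{C}^2_3(G_1)$.

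Next, for $E_3\subseteq E_d$ let $t(E_3)$ denote the number of original triangles containing at least one edge of $E_3$. Deleting $E_3$ from $G_1$ destroys exactly $t(E_3)$ original triangles and exactly $|E_3|$ triangle-creation triangles, since each original edge is the unique original edge of its own triangle-creation triangle. Substituting into $\mathfrak{C}^2_3(H) = |V(H)|-|E(H)|+|\cC(H)|$, the $+|E_3|$ gained from the edge count cancels the $-|E_3|$ lost from the destroyed triangle-creation triangles, and a short computation yields
\[
\mathfrak{C}^2_3(G_1 \setminus E_3) - \mathfrak{C}^2_3(G_2) \;=\; \Delta - t(E_3).
\]

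Consequently $E_3$ is feasible for \badp$_{\mathfrak{C}^2_3}(G_1,G_2)$ if and only if $t(E_3)=\Delta$, i.e.\ if and only if the image of $E_3$ under the bijection $E_d \leftrightarrow E$ hits every triangle of $G$; minimising $|E_3|$ over such sets gives $\opttdp(G)$ by definition, establishing the claimed equality. The one step that requires real care is the triangle classification above---in particular, the degree-$2$ arguments ruling out any ``hybrid'' triangle passing through a $w$-node or joining two distinct $v_{ij}$-nodes---because without a clean classification the cancellation in the identity above could fail and the equivalence with \tdp would not go through.
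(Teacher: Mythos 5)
Your proof is correct and rests on essentially the same counting as the paper's: both arguments hinge on the facts that the only triangles of $G_1$ are the $\Delta$ original triangles plus one triangle-creation triangle per original edge, and that deleting $E_3\subseteq E_1\setminus E_2$ destroys exactly $|E_3|$ of the latter, so the edge-count gain cancels and only surviving original triangles matter. Your single identity $\mathfrak{C}^2_3(G_1\setminus E_3)-\mathfrak{C}^2_3(G_2)=\Delta-t(E_3)$ simply packages into one equivalence what the paper proves as two separate inequalities (the harder direction there using an edge-ordering bookkeeping with the quantities $\delta_i'$), and your explicit degree-$2$ classification of triangles makes precise a step the paper leaves implicit.
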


\begin{proof}

\medskip
\noindent
\emph{Proof of} $\opttdp(G)\geq\mathsf{OPT}_{\mbox{\badp}_{\mathfrak{C}^2_3}}(G_1,G_2)$.

\medskip
Let $\eopt\subset E$ be an optimum solution of \tdp\ on $G$, 
\\
let 
$\eopt'=\set{\{v_i,v_j\}\,|\,\{u_i,u_j\}\in E}\subseteq E_d$, 
and consider the graph 
$G_3=(V',E_1\setminus\eopt')$.
Note that 
$G_3$ has \emph{no} original triangles and has exactly 
$m-|\eopt'|$ triangles involving triangle-creation edges, and thus 
\[
\mathfrak{C^2_3}(G_3)=
|V'| - (3n+3m-3-|\eopt'|) + (m-|\eopt'|)
=
\mathfrak{C^2_3}(G_2)
\]
and therefore 
$\mathsf{OPT}_{\mbox{\badp}_{\mathfrak{C}^2_3}}(G_1,G_2) \leq |\eopt'|= |\eopt|=\opttdp(G)$.

\medskip
\noindent
\emph{Proof of} $\opttdp(G)\leq\mathsf{OPT}_{\mbox{\badp}_{\mathfrak{C}^2_3}}(G_1,G_2)$.

\medskip
Suppose that  
$\eopt'\subset E_d$ is an optimum solution of $q$ edges of \badp$_{\mathfrak{C}^2_3}$ on $G_1$ and $G_2$,
let $G_3=(V',E_1\setminus\eopt')$ be the graph obtained from $G_1$ by removing the edges in $\eopt'$,  
and let $E'=\set{\{u_i,u_j\}\,|\,\{v_i,v_j\}\in \eopt'}\subseteq E$. 
Let $q=|\eopt'|$, $e_1',e_2',\dots,e_q'$ be an arbitrary ordering of the edges in $\eopt'$ 
and
$\delta_i'$ (for $i=1,2,\dots,q$)
is the number of triangles in $G_1$ that contains the edge $e_i'$ but \emph{none}
of the edges $e_1',\dots,e_{i-1}'$. 
Note that, for each $i$, exactly $\delta_i'-1$ triangles out of the $\delta_i'$ triangles 
are original triangles.
Let $\Delta'\leq\Delta$ be the number of original triangles removed by removing the edges in 
$\eopt'$; thus, 
$\Delta'=\sum_{i=1}^q \left(\delta_i'-1\right)$.
Simple calculations now show that 
\begin{multline*}
\mathfrak{C^2_3}(G_3) = 
|V'| - (3n+3m-3-|\eopt'|) + \left(\Delta+m-\sum_{i=1}^q \delta_i\right)
\\
=
|V'| - (3n+3m-3-|\eopt'|) + \left(\Delta+m-q-\sum_{i=1}^q \left( \delta_i-1 \right)\right)
\\
=
|V'| - (3n+3m-3-|\eopt'|) + \left(\Delta+m-|\eopt'|-\Delta'\right)
\\
=
|V'| - (3n+2m-3) + (\Delta-\Delta')
\end{multline*}
Consequently, $\mathfrak{C^2_3}(G_3) = \mathfrak{C^2_3}(G_2)$ implies $\Delta'=\Delta$ 
and $E'$ is a valid solution of \tdp\ on $G$.
This implies 
$\opttdp(G)\leq |E'|=|\eopt'|= \mathsf{OPT}_{\mbox{\badp}_{\mathfrak{C}^2_3}}(G_1,G_2)$.
\end{proof}

\bigskip
\noindent
\textbf{Proofs of (\emph{b}1) and (\emph{b}2)}

\medskip
Consider an instance graph $G=(V,E)$ of \mnc\ 
with $n$ nodes and $m$ edges 
where $V=\set{v_1,v_2,\dots,v_n}$ and $E=\set{e_1,e_2,\ldots,e_m}$. 
Let $\emptyset\subset\vmnc\subset V$ be an optimal solution 
of $\optmnc(G)=|\vmnc|$ nodes 
for this instance of \mnc.
We then create an instance
$G_1=(V',E_1)$ and $G_2=(V',E_2)$ (with $\emptyset\subset E_2\subset E_1$)
of 
\badp$_{\mathfrak{C}^2_d}$ for a given 
$d\geq 4$
in the following manner: 
\begin{itemize}
\item 
For each $v_i\in V$, we create $d$ new nodes $\set{v_i^1,v_i^2,\dots,v_i^d}$ in $V'$, 
and a $d$-cycle containing the edges
$\set{v_i^1,v_i^2}$,$\set{v_i^2,v_i^3}$,$\dots$,$\set{v_i^{d-1},v_i^d},\set{v_i^d,v_i^1}$ in $E_1$. 
We call the cycles generated in this step as the ``node cycles''.
This creates a total of $dn$ nodes in $V'$ and $dn$ edges in $E_1$.
\item 
For each edge $\set{v_i,v_j}\in E$, we do the following:
\begin{itemize}
\item
Create $d-4$ new nodes 
$u_{i,j,1}^1$,$u_{i,j,2}^1$,$\dots$,$u_{i,j,\left\lceil\frac{d-4}{2}\right\rceil}^1$ 
and 
\\
$u_{i,j,1}^2$,$u_{i,j,2}^2$,$\dots$,$u_{i,j,\left\lfloor\frac{d-4}{2}\right\rfloor}^2$ 
in $V'$.
\item
Add $\left\lceil\frac{d-2}{2}\right\rceil$ new edges 
\\
$\set{v_i^1,u_{i,j,1}^1}$,
$\set{u_{i,j,1}^1$,$u_{i,j,2}^1}$,
$\dots$,
$\set{u_{i,j,\left\lceil\frac{d-4}{2}\right\rceil-1}^1,u_{i,j,\left\lceil\frac{d-4}{2}\right\rceil}^1}$, 
$\set{u_{i,j,\left\lceil\frac{d-4}{2}\right\rceil}^1,v_j^1}$
and 
$\left\lfloor\frac{d-2}{2}\right\rfloor$ new edges 
\\
$\set{v_i^2,u_{i,j,1}^2}$,
$\set{u_{i,j,1}^2$,$u_{i,j,2}^2}$,
$\dots$, 
$\set{u_{i,j,\left\lfloor\frac{d-4}{2}\right\rfloor-1}^2,u_{i,j,\left\lfloor\frac{d-4}{2}\right\rfloor}^2}$, 
$\set{u_{i,j,\left\lfloor\frac{d-4}{2}\right\rfloor}^2,v_j^2}$
in $E_1$.
Note that these edges create a $d$-cycle involving the two edges 
$\set{v_i^1,v_i^2}$
and 
$\set{v_j^1,v_j^2}$; we refer to this cycle as an ``edge cycle''.
\end{itemize}
These steps create a total of $(d-4)m$ additional nodes in $V'$
and 
$(d-2)m$ additional edges in $E_1$.
\item
Let 
$E_2=E_1\setminus \set{\set{v_i^1,v_i^2} \,|\, 1\leq i\leq n }$.
\end{itemize}
Thus, 
$|V'|=dn+(d-4)m$, 
$|E_1|=dn+(d-2)m$
and 
$|E_2|=(d-1)n+(d-2)m$.
To verify that the reduction is possible for any $d$ in the range of values as claimed in the theorem, 
note that 
\[
d \leq o(|V'|)
\,\equiv\,
{d}/{|V'|} \leq o(1)
\,\Leftarrow\,
n^{-1} \leq o(1)
\]
and the last inequality is trivially true.
By \textbf{($\star_{\text{\mnc}}$)} and \textbf{($\star\star_{\text{\mnc}}$)},
the proof is complete once we prove the following lemma.

\begin{lemma}\label{ll11}
$\optmnc(G)=\mathsf{OPT}_{\mbox{\badp}_{\mathfrak{C}^2_d}}(G_1,G_2)$.
\end{lemma}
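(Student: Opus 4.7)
The plan is to show that the curvature equation $\mathfrak{C}^2_d(G_1\setminus E_3) = \mathfrak{C}^2_d(G_2)$ holds if and only if the set $S = \set{v_i : \set{v_i^1,v_i^2}\in E_3}\subseteq V$ is a vertex cover of $G$, so that minimizing $|E_3| = |S|$ yields $\optmnc(G)$.

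To enumerate the short cycles of $G_1$ I would collapse them to short closed walks in a weighted multigraph $H$ on the $2n$ ``endpoints'' $\set{v_i^1,v_i^2 : 1\le i \le n}$: every interior node along a node cycle (the $v_i^3,\dots,v_i^d$) or along an edge cycle (the $u_{i,j,k}^1, u_{i,j,k}^2$) has degree two in $G_1$, so any cycle meeting such a node must traverse the entire corresponding path. In $H$, each pair $v_i^1,v_i^2$ is joined by two parallel multi-edges of weights $1$ (the ``direct'' edge) and $d-1$ (the ``back path''), and each pair $v_i^\ell, v_j^\ell$ by a single edge of weight $\lceil(d-2)/2\rceil$ for $\ell=1$ or $\lfloor(d-2)/2\rfloor$ for $\ell=2$ whenever $\set{v_i,v_j}\in E$. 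A case analysis on the number and type of multi-edges used in a closed walk of weight at most $d+1$ sorts the contributors into three families: the $n$ node cycles and $m$ edge cycles (both of weight exactly $d$), the ``same-level polygons'' on level 1 or level 2 that trace a short cycle of $G$ (which can fit under $d+1$ only for small $d$), and the ``hybrid walks'' through two or more $v^1$--$v^2$ pairs combining direct or back-path edges with inter-pair edges (again only for small $d$).

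Given this classification, a direct count of $|V'|, |E_1|, |E_2|$ together with the cycle list gives $\mathfrak{C}^2_d(G_1) - \mathfrak{C}^2_d(G_2) = m$, and for any $E_3 \subseteq E_1 \setminus E_2$ with corresponding $S \subseteq V$ the identity $\mathfrak{C}^2_d(G_1 \setminus E_3) - \mathfrak{C}^2_d(G_2) = (m - m_S) + h(S)$ falls out, where $m_S$ counts $G$-edges covered by $S$ and $h(S)$ counts hybrid walks that survive in $G_1 \setminus E_3$; the same-level polygons contribute equally to $G_2$ and $G_1\setminus E_3$ (both retain every non-direct edge) and cancel. The crucial observation is that every hybrid walk is rooted in a short cycle of $G$ and uses direct edges at two or more of its vertices, so any vertex cover $S$ of $G$ (which hits at least two vertices of every triangle, and more generally enough vertices of every short $G$-cycle to destroy every hybrid walk supported on it) automatically forces $h(S) = 0$. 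Since $m_S = m$ is itself exactly the vertex-cover condition, the equation $\mathfrak{C}^2_d(G_1 \setminus E_3) = \mathfrak{C}^2_d(G_2)$ becomes equivalent to $S$ being a vertex cover of $G$, and minimizing $|E_3| = |S|$ gives $\optmnc(G)$.

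The hard part will be the precise classification of short closed walks in $H$ for the borderline values $d \in \set{4,5,6}$, where the asymmetric inter-pair weights $\lceil(d-2)/2\rceil$ and $\lfloor(d-2)/2\rfloor$ permit several same-level polygons and hybrid walks to squeeze under the $d+1$ threshold: one must enumerate them all and verify the decomposition $\mathfrak{C}^2_d(G_1 \setminus E_3) - \mathfrak{C}^2_d(G_2) = (m - m_S) + h(S)$ with $h(S) = 0$ whenever $S$ is a vertex cover. Once this classification is settled, the lemma reduces to a one-line counting identity.
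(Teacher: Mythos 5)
Your overall skeleton is the same as the paper's: delete only ``direct'' edges $\set{v_i^1,v_i^2}$, count which short cycles disappear, and show the curvature equation holds precisely when the selected vertices form a node cover, so that $|E_3|=|S|$ ties the two optima together. Your accounting identity $\mathfrak{C}^2_d(G_1\setminus E_3)-\mathfrak{C}^2_d(G_2)=(m-m_S)+h(S)$, with both terms nonnegative, is a correct way to organize this. The problem is that the decisive step is exactly the part you defer: the classification of all short cycles meeting a direct edge, and in particular the claim that every surviving ``hybrid walk'' is killed by every node cover ($h(S)=0$). As you state it, this is an assertion supported only by the triangle case; for it to carry the lemma you would have to enumerate, for each admissible $d$, all cycles under the length threshold that use direct edges, check that each one is supported on a short cycle of $G$ and uses direct edges at two or more of its vertices, and check that a cover of that $G$-cycle must hit one of those specific vertices (a cover of a $4$-cycle, for instance, can miss two nonadjacent vertices, so ``hits enough vertices'' is not automatic --- one must argue such configurations cannot fit under the threshold). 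Until that enumeration is done, the completeness direction (a minimum node cover yields a feasible \badp\ solution) is unproven, so the proposal is a plan rather than a proof.

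The paper's proof shows this whole case analysis is unnecessary, because it works with the cycle count actually used in the construction (cycles with at most $d$ edges) rather than your ``weight at most $d+1$'' threshold, which is what creates your borderline cases $d\in\set{4,5,6}$. A single length bound does all the work: any cycle containing a direct edge that is neither a node cycle nor an edge cycle must use at least two level paths and hence has at least $2+2\left\lfloor\frac{d-2}{2}\right\rfloor+\left\lceil\frac{d-2}{2}\right\rceil>d$ edges, so \emph{no} hybrid cycle is ever counted and $h(S)$ is identically zero, not merely zero for node covers; same-level polygons never contain deleted edges and cancel automatically. With that one observation the counting gives $\mathfrak{C}^2_d(G_2)=\mathfrak{C}^2_d(G_1)-m$ and the two inequalities $\mathsf{OPT}_{\mbox{\badp}_{\mathfrak{C}^2_d}}(G_1,G_2)\leq\optmnc(G)$ and $\optmnc(G)\leq\mathsf{OPT}_{\mbox{\badp}_{\mathfrak{C}^2_d}}(G_1,G_2)$ follow immediately. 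So either adopt the at-most-$d$-edges count and replace your case analysis with this length bound, or, if you insist on the $d+1$ threshold, you must actually carry out the enumeration you postponed --- it is the crux, not a technicality.
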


\begin{proof}
Let $E_d=E_1\setminus E_2$.
Let $f$ be the total number of cycles of at most $d$ edges in $G_1$; thus 
\[
\mathfrak{C}^2_d(G_1)=
|V'|-|E_1|+f = 
-2m+f
\]
Note that any cycle of at most $d$ edges containing an edge from $E_d$ \emph{must} be either a node cycle
or an edge cycle since a cycle 
containing an edge from $E_d$ that is \emph{neither} a node cycle
\emph{nor} an edge cycle has a number of edges that is at least 
$
2+2\times\left\lfloor\frac{d-2}{2}\right\rfloor
+\left\lceil\frac{d-2}{2}\right\rceil
=
d + \left\lfloor\frac{d-2}{2}\right\rfloor
>d$ since $d\geq 4$.
Since removing all the edges in $E_d$ removes \emph{every} node and \emph{every} edge cycle, 
\[
\mathfrak{C}^2_d(G_2)=
|V'|-|E_2|+ (f-n-m) = 
\big( |V'|-|E_1|+f \big) - m
=
\mathfrak{C}^2_d(G_1) - m
\]
Given an optimal solution 
$\vmnc\subset V$ of \mnc\ on $G$ 
of $\optmnc(G)$ nodes,  
consider the graph $G_3=(V',E_3)$ 
where 
$E_3=E_1\setminus E_d'$ and 
$E_d'=\set{ \set{v_i^1,v_i^2} \,|\, v_i\in \vmnc}\subseteq E_d$.
Since every edge of $G$ is incident on one or more nodes in $\vmnc$, 
\emph{every} edge cycle and \emph{exactly} $|E_d'|=\optmnc(G)$ node cycles of $G_1$ are removed in 
$G_3$, and thus 
\begin{multline*}
\mathfrak{C}^2_d(G_3)=
|V'|-(|E_1|-\optmnc(G) )+ (f-\optmnc(G)-m) 
\\
= 
\mathfrak{C}^2_d(G_1) - m
=
\mathfrak{C}^2_d(G_2)
\end{multline*}
This shows that 
$\mathsf{OPT}_{\mbox{\badp}_{\mathfrak{C}^2_d}}(G_1,G_2) \leq \optmnc(G)$.
Conversely, 
consider an optimal solution 
$E_d'\subseteq E_d$ of 
\badp$_{\mathfrak{C}^2_d}$ for $G_1$ and $G_2$, and 
let $G_3=(V',E_3)$ 
where 
$E_3=E_1\setminus E_d'$. 
Note that \emph{exactly} $|E_d'|=\mathsf{OPT}_{\mbox{\badp}_{\mathfrak{C}^2_d}}(G_1,G_2)$
node cycles of $G_1$ are removed in $G_3$. 
Let $m'$ be the number of edge cycles 
of $G_1$ removed in $G_3$. Then, 
\[
\mathfrak{C}^2_d(G_3)=
|V'|-(|E_1|-|E_d'| )+ (f-|E_d'|-m') = 
\mathfrak{C}^2_d(G_1) - m'
\]
and consequently $m'$ must be equal to $m$ to satisfy the constraint 
$\mathfrak{C}^2_d(G_3)=\mathfrak{C}^2_d(G_1) - m$, which implies that $G_3$ contains 
\emph{no} edge cycles. This implies that, for every edge cycle involving the two edges 
$\set{v_i^1,v_i^2}$ and $\set{v_j^1,v_j^2}$ in $G_1$, at least 
one of these two edges must be in $E_d''$, which in turn implies that 
the set of nodes 
$V''= \set{v_i \,|\, \set{v_i^1,v_i^2} \in E_d' }$ in $G$ contains 
at least one of the nodes $v_i$ or $v_j$ for every edge $\set{v_i,v_j}\in E$. 
Thus, $V''$ is a valid solution of \mnc\ on $G$ and 
$\optmnc(G) \leq |V''|=|E_d''|=\mathsf{OPT}_{\mbox{\badp}_{\mathfrak{C}^2_d}}(G_1,G_2)$.
\end{proof}

\bigskip
\noindent
\textbf{Proof of (\emph{b}3)}

\medskip
We describe the proof for $d=4$ only; the proof for $d>4$ is very similar.
Suppose, for the sake of contradiction, that one \emph{can} in fact compute 
$\mathsf{OPT}_{\mbox{\badp}_{\mathfrak{C}^2_4}}(G_1,G_2)$
in $O^\ast\big(2^{o(n)}\big)$ time where each of $G_1$ and $G_2$ has $n$ nodes.
We start with an instance $\Phi$ of \tSAT having $n$ variables and $m$ clauses.
The ``sparsification lemma'' in~\cite{IPZ01} proves the following result: 
\begin{quote}
\em
for every constant $\eps>0$, there is a constant $c>0$ such that 
there exists a $O\big(2^{\,\eps n}\big)$-time algorithm that produces 
from $\Phi$ 
a set of $t$ instances $\Phi_1,\dots,\Phi_t$ of \tSAT on these $n$ variables 
with the following properties:
\begin{itemize}
\item
$t\leq 2^{\,\eps n}$,
\item
each $\Phi_j$ is an instance of \tSAT with $n_j\leq n$ variables and $m_j\leq cn$ clauses, and 
\item
$\Phi$ is satisfiable if and only if at least one of 
$\Phi_1,\dots,\Phi_t$ is satisfiable.
\end{itemize}
\end{quote}

For each such above-produced \tSAT instance $\Phi_j$, 
we now use the reduction mentioned in 
($\star\!\star\!\star_{\text{\mnc}}$)
to produce an instance 
$G_j=(V_j,E_j)$ of \mnc\ of $|V_j|=3n_j+2m_j\leq (3+2c)\,n$ nodes and 
$|E_j|=n_j+m_j\leq (1+c)\,n$ edges
such that $\Phi_j$ is satisfiable if and only if 
$\optmnc(G_j)=n_j+2m_j$.
Now, using the reduction as described in the proof of 
parts \textbf{(\emph{b}1)} and \textbf{(\emph{b}2)} of this theorem and Lemma~\ref{ll11} thereof,
we obtain an instance 
$G_{1,j}=(V_j',E_{1,j})$ and $G_{2,j}=(V_j',E_{2,j})$
of \badp$_{\mathfrak{C}^2_3}$
such that $|V_j'|=4|V_j|<(12+8c)n$.
By assumption, we can compute 
$\mathsf{OPT}_{\mbox{\badp}_{\mathfrak{C}^2_3}}(G_{1,j},G_{2,j})$
in $O^\ast\big(2^{o(n)}\big)$,
and consequently $\optmnc(G_j)$
in $O^\ast\big(2^{o(n)}\big)$ time, 
which in turn leads us to decide in 
$O^\ast\big(2^{o(n)}\big)$ time 
if $\Phi_j$ is satisfiable for every $j$.
Since 
$t\leq 2^{\,\eps n}$
for every constant $\eps>0$,
this provides a 
$O^\ast\big(2^{o(n)}\big)$-time 
algorithm for \tSAT, contradicting \ETH.

\bigskip
\noindent
\textbf{Proof of (\emph{b}4)}

\medskip
The proof is very similar to that in \textbf{(b3)} except that now 
we start with the following lower bound result on parameterized complexity (\EG, see~\cite[Theorem 14.21]{CygFKLMPPS2015}): 
\begin{quote}
{\em 
assuming \ETH to be true, if $\optmnc(G)\leq k$
then there is \emph{no} $O^\ast\big(n^{o(k)}\big)$-time algorithm for exactly computing $\optmnc(G)$.
}
\end{quote}


\subsection{Gromov-hyperbolic curvature: computational hardness of \badp$_{\curvgrom}$} 
\label{sec-badp-all}

\begin{theorem}\label{thm-grom-hard}
It is $\NP$-hard to design a polynomial-time 
algorithm to approximate 
\badp$_{\curvgrom}(G_1,G_2)$ 
within a factor of $c\,{n}$ for some constant $c>0$, 
where $n$ is the number of nodes in $G_1$ or $G_2$.
\end{theorem}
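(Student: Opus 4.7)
The plan is to give a gap-preserving reduction from the Hamiltonian path problem on cubic graphs with two specified endpoints (\ham), which is $\NP$-hard. Given a cubic instance $H$ on $n$ vertices with specified endpoints $s,t$, I will construct graphs $G_1$ and $G_2 = G_1 \setminus \wte$ on the same node set (for a designated set $\wte$ of shortcut edges with $|\wte| = \Theta(n)$) so that $\mathsf{OPT}_{\mbox{\badp}_{\curvgrom}}(G_1, G_2)$ is bounded by a constant if $H$ has a Hamiltonian $s$-$t$ path and is $\Omega(n)$ otherwise. Combined, these two cases yield an $\Omega(n)$ multiplicative gap that any polynomial-time $(c\,n)$-approximation with sufficiently small $c>0$ would decide, contradicting P$\neq\NP$.

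The construction is the clique-attached gadget already referenced in the proof of Theorem~\ref{thm-eadp-curvgrom}. I take $V' = V(H) \cup V''$ where $V'' = \{w_0, w_1, \ldots, w_{|V''|-1}\}$ is a fresh set of $\Theta(n)$ helper nodes, and form $G_1$ from $H$ together with the complete graph on $V''$, a single bridging edge $\{s, w_0\}$, and a carefully tuned shortcut family $\wte$ whose edges connect various $w_i \in V''$ to specified nodes of $V(H)$. Let $G_2 = G_1 \setminus \wte$. A direct distance computation in $G_2$ shows that every path from $w_0$ into $V(H)$ must traverse the lone bridge through $s$, so that a geodesic triangle on $\{w_0, s, t\}$ witnesses $\curvgrom(G_2) = n/2 + 1$, while routine case analysis rules out other triangles being thicker. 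In $G_1$ the shortcut edges contract distances enough that $\curvgrom(G_1) < n/2+1$, and for every $\emptyset \subset E_3 \subseteq \wte$ one has $\curvgrom(G_1 \setminus E_3) \le n/2+1$, so any feasible solution of \badp$_{\curvgrom}(G_1,G_2)$ yields hyperbolicity at most the target.

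The crux is the following separating claim. (a) If $H$ has a Hamiltonian $s$-$t$ path, then a single well-chosen shortcut deletion $e^\star \in \wte$ suffices to make $\curvgrom(G_1 \setminus \{e^\star\}) = n/2 + 1$, because the Hamiltonian path supplies the long forced $s$-$t$ detour needed to witness a geodesic triangle of thinness exactly $n/2 + 1$. (b) If $H$ has no Hamiltonian $s$-$t$ path, then every surviving shortcut in $\wte$ can be combined with some alternative $s$-$t$ route of length strictly less than $n-1$ to furnish a geodesic triangle of thinness strictly less than $n/2 + 1$, so realizing $\curvgrom(G_1 \setminus E_3) = \curvgrom(G_2) = n/2+1$ forces $E_3 = \wte$ and hence $|E_3| = \Theta(n)$. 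Combining (a) and (b) gives $\mathsf{OPT}_{\mbox{\badp}_{\curvgrom}}(G_1, G_2) = O(1)$ in the yes-case and $\Omega(n)$ in the no-case, which is enough for the claimed $c\,n$-inapproximability.

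The main obstacle will be rigorously verifying step (b), namely showing that \emph{no} proper subset of $\wte$ suffices in the absence of a Hamiltonian $s$-$t$ path. This requires a careful enumeration of which geodesic triangles realize the maximum hyperbolicity in each intermediate graph $G_1 \setminus E_3$, and in particular ruling out ``accidental'' thin-triangle configurations of value $n/2 + 1$ that do not rely on the full Hamiltonian detour. The arithmetic controlling $|V''|$ and the precise placement of the endpoints of the shortcut edges in $\wte$ must be tuned so that the hyperbolicity lands on exactly $n/2 + 1$ in $G_2$ and drops strictly below this value whenever any shortcut of $\wte$ survives without a Hamiltonian $s$-$t$ path in $H$; this delicate bookkeeping will mirror the argument already carried out for Theorem~\ref{thm-eadp-curvgrom}(a), extended to the clique-augmented graph so that the gap in the number of deletions is forced to be linear rather than merely additive.
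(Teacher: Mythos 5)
Your high-level plan --- a gap reduction from \ham --- is the same as the paper's, but the construction you sketch cannot work, and the obstacle is structural rather than the ``delicate bookkeeping'' you defer. In your gadget the only deletable edges are the clique-to-$H$ shortcuts $\wte$; the cubic graph $H$ keeps every one of its edges in $G_1$, in $G_2$, and in every $G_1\setminus E_3$ with $E_3\subseteq\wte$. Gromov hyperbolicity is a purely metric quantity, and the metric restricted to $H$ is never altered, so whether $H$ has a Hamiltonian $s$-$t$ path has no bearing whatsoever on $\curvgrom(G_1\setminus E_3)$: there is no ``long forced $s$-$t$ detour'' in the yes-case, because $\dist(s,t)=\dist_H(s,t)$ is whatever the cubic graph dictates (typically far smaller than $n-1$) no matter which shortcuts survive, and a Hamiltonian path that is not a geodesic is metrically invisible. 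Your target value is also unattainable: by Fact~\ref{fact1} a thinness of $\frac{n}{2}+1$ needs two nodes at distance at least $n+2$, while your $G_2$ (a cubic graph plus a diameter-one clique hung on the single bridge $\set{s,w_0}$) has diameter at most $\diam(H)+2\leq n+1$; moreover the triangle on $\{w_0,s,t\}$ you offer as a witness straddles the cut edge $\set{s,w_0}$, so all $w_0$--$t$ geodesics pass through $s$ and the triangle is essentially degenerate and cannot certify a large thinness.

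The paper's proof supplies exactly the two devices you are missing. First, Hamiltonicity is made \emph{metric}: the deletable edges include the cubic edges themselves, and a hub $u$ is joined to every $v_j$ by a spoke $\cP_j$ of length $\frac{n}{2}+1$; deleting the $\frac{n}{2}+1$ cubic edges not on a Hamiltonian path turns the non-clique side into a wheel whose rim is the length-$(n+2)$ path $v_0\leftrightarrow\dots\leftrightarrow v_{n+2}$, and this wheel has hyperbolicity exactly $\frac{n}{2}+1$ (Lemma~\ref{cycle-hyp}); a sparse copy $\cA$ of the same wheel is pre-installed inside the attached clique precisely so that $\curvgrom(G_2)=\frac{n}{2}+1$. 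Second, the no-case lower bound is not ``all of $\wte$'' but $\Omega(n^3)$: without a Hamiltonian path the non-clique side has diameter at most $n+1$, hence hyperbolicity at most $\frac{n}{2}$, so the witness triangle must be carved out of the clique, and creating a geodesic of length $n+2$ there forces each of the $\Theta(n^2)$ off-path clique nodes to retain at most three neighbours on the path, i.e.\ at least $\frac{n^3+3n^2+2n}{2}$ deletions. The $O(n)$-versus-$\Omega(n^3)$ gap on $\Theta(n^2)$ nodes is what yields the factor-$c\,n$ inapproximability; your claimed $O(1)$-versus-$\Omega(n)$ separation is both unproved (you concede step (b) is open) and, for the reason above, unobtainable from any construction in which the cubic graph itself is never perturbed.
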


\subsubsection{Proof techniques and relevant comments regarding Theorem~\ref{thm-grom-hard}}
\label{sec-informal-badp-gromov}

The reduction is from the \emph{Hamiltonian path problem for cubic graphs} (\ham), and shown schematically in 
\FI{fig2}. 
Conceptually, the idea is to amplify the difference between Hamiltonian and non-Hamiltonian paths 
to a large size difference of ``geodesic'' triangles (\emph{cf}.\ Definition~\ref{def-hyperbolic-1})
such that application of results such as~\cite[Lemma~2.1]{RT04}
can lead to a large difference of the corresponding Gromov-hyperbolicity values.
To get the maximum possible amplification (maximum gap in lower bound) 
we need to make very careful and precise arguments regarding the Gromov-hyperbolicities of classes of graphs.
Readers should note that Gromov-hyperbolicity value is not necessarily related to the circumference of a graph,
and thus the reduction cannot rely simply on presence or absence of long paths or long cycles in the constructed graph.

The inapproximability reduction necessarily requires some nodes with large (close to linear) degrees 
even though with start with \ham in which every node has degree exactly $3$.
We conjecture that our large inapproximability bounds do \emph{not} hold when the given graphs have nodes of 
bounded degree, but have been unable to prove it so far.

\begin{figure}[htbp]
\centerline{\includegraphics[scale=0.66]{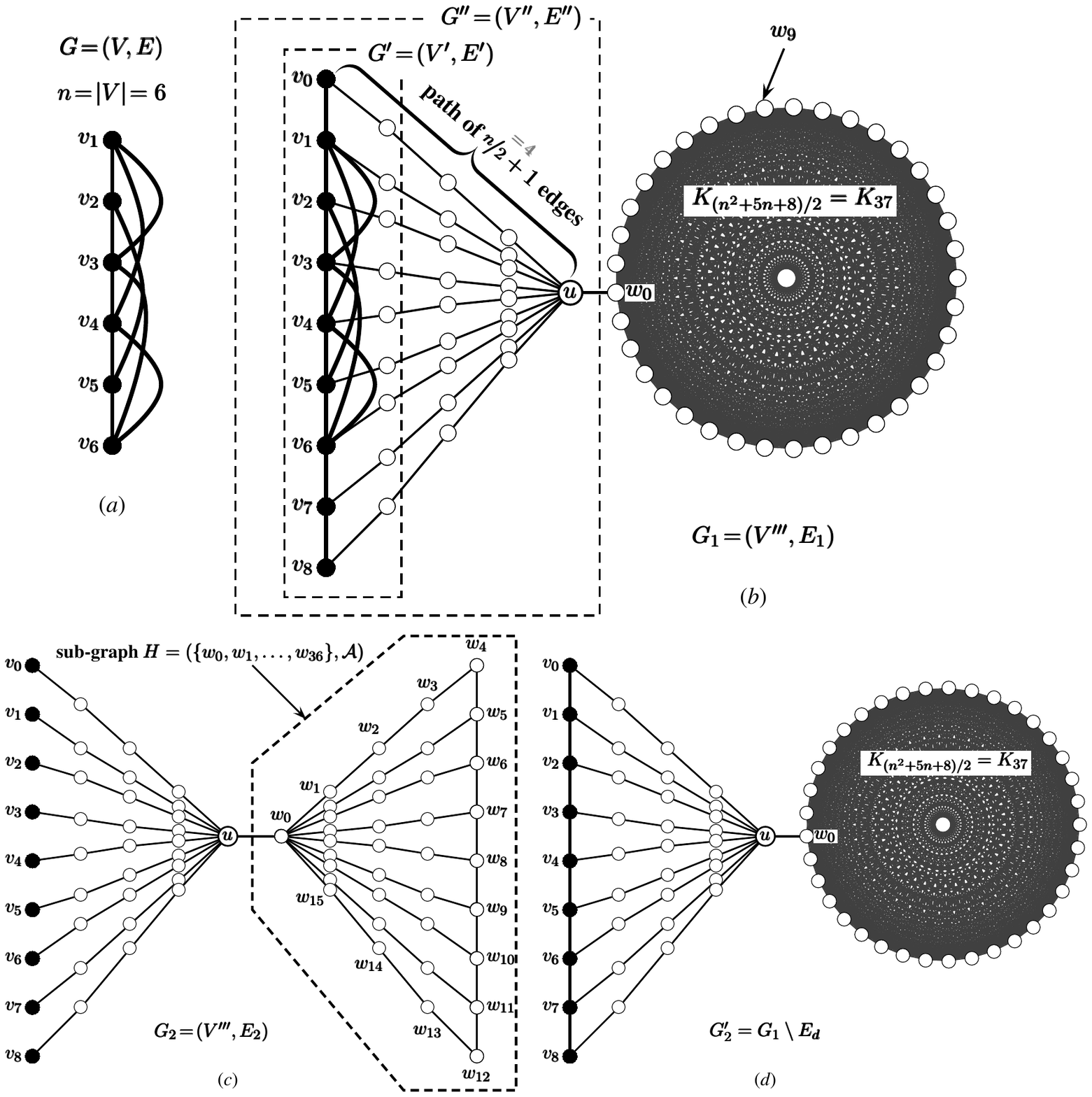}}
\caption{\label{fig2}Illustration of the reduction in Theorem~\ref{thm-grom-hard}.
(\emph{a}) 
The input graph $G=(V,E)$ for the 
Hamiltonian path problem for cubic graphs (\ham).
(\emph{b}) and (\emph{c}) The graphs $G_1=(V'',E_1)$ and $G_2=(V'',E_2)$ for the generated instance of \badp$_{\curvgrom}(G_1,G_2)$.
The graph $G'=(V',E')$ obtained from the given graph $G'$ by adding three extra nodes and three extra edges.
(\emph{d}) An optimal solution $G_2'$ for \badp$_{\curvgrom}(G_1,G_2)$
if $G$ contains a Hamiltonian path between $v_1$ and $v_n$.
}
\end{figure}


\subsubsection{Proof of Theorem~\ref{thm-grom-hard}}
\label{sec-app-proof-thm-grom-hard}

We will prove our inapproximability result via a reduction from the 
\emph{Hamiltonian path} problem for cubic graphs (\ham)
which is defined as follows.

\begin{definition}[Hamiltonian path problem for cubic graphs (\ham)]
Given a cubic (\IE, a $3$-regular) graph $G=(V,E)$ and two specified nodes $u,v\in V$, does $G$ 
contain a Hamiltonian path between $u$ and $v$, \IE, a path between $u$ and $v$ that visits every node
of $G$ exactly once?
\end{definition}

\ham is known to be $\NP$-complete~\cite{GJT76}. 
Consider an instance $G=(V,E)$ and $v_1,v_n\in V$ 
of \ham of $n$ nodes and $m=3n/2$ edges 
where $V=\set{v_1,v_2,\dots, v_n}$, $E=\set{e_1,e_2,\dots,e_m}$ and the goal is to determine if there
is a Hamiltonian path between $v_1$ and $v_n$
(see \FI{fig2}(\emph{a})).
We first introduce three new nodes $v_0$, $v_{n+1}$ and $v_{n+2}$, and connect them to the nodes in $G$
by adding three new edges 
$\set{v_0,v_1}$, 
$\set{v_n,v_{n+1}}$ and 
$\set{v_{n+1},v_{n+2}}$,
resulting in the graph 
$G'=(V',E')$
(see \FI{fig2}(\emph{b})).
It is then trivial to observe the following:
\begin{itemize}
\item
$G$ has a Hamiltonian path between $v_1$ and $v_n$ 
if and only if 
$G'$ has a Hamiltonian path between $v_0$ and $v_{n+2}$.  
\item
If $G'$ does have a Hamiltonian path then such a path must be between the two nodes 
$v_0$ and $v_{n+2}$.  
\end{itemize}
Note that 
$|V'|=n+3$ and $|E'|=(3n/2)+3$.
We next create the graph $G''=(V'',E'')$ from $G'$ in the following manner
(see \FI{fig2}(\emph{b})):
\begin{itemize}
\item
We add a set of $1+{(n^2+3n)}/{2}$ new nodes 
$u,v_{0,1}$,$\dots$,$v_{0,{n}/{2}}$,
$v_{1,1}$,$\dots$,$v_{1,{n}/{2}}$,
$\dots$,
$v_{n+2,1}$,$\dots$,$v_{n+2,{n}/{2}}
$.
For notational convenience, we set 
$u\eqdef v_{i,0}$ for all $i\in\{0,1,\dots,n+2\}$
and 
$v_j\eqdef v_{j,({n}/{2})+1}$ for all $j\in\{0,1,\dots,n+2\}$.
\item
We add a set of $n+3$ disjoint paths (each of length $\frac{n}{2}+1$) 
$\cP_0,\cP_1,\dots,\cP_{n+2}$ where 
$
\cP_j \eqdef 
v_{j,0} \leftrightarrow v_{j,1} \leftrightarrow v_{j,2} \leftrightarrow \dots \leftrightarrow v_{j,\frac{n}{2}+1}$.
\end{itemize}
Note that 
$|V''|=n+4+\frac{n^2+3n}{2}=\frac{n^2+5n}{2}+4$ and $|E''|=\frac{3n}{2}+3+(n+3)\left(\frac{n}{2}+1\right)=\frac{n^2}{2}+4n+6$.
We now create an instance
$G_1=(V,E_1)$ and $G_2=(V,E_2)$ (with $\emptyset\subset E_2\subset E_1$)
of \badp$_{\curvgrom}(G_1,G_2)$ from $G''$ in the following manner
(see \FI{fig2}(\emph{b})--(\emph{c})): 
\begin{itemize}
\item 
The graph $G_1=(V''',E_1)$ is obtained by modifying $G''$ as follows: 
\begin{itemize}
\item 
Add a complete graph 
$K_{|V''|}$ on $|V''|=\frac{n^2+5n}{2}+4$ nodes $w_0$,$w_1$,$\dots$,$w_{|V''|-1}$
and the edge 
$\set{u,w_0}$.
This step adds 
$|V''|$ new nodes 
and 
$\binom{|V''|}{2}+1$ new edges. 
\end{itemize}
Thus, we have 
$|V'''|=2\,|V''|=n^2+5n+8$,
and 
\[
|E_1|=|E''|+\binom{|V''|}{2}+1
=\frac
{ n^4 + 10n^3 + 43n^2+102n+104 }
{8}
\]
\item 
The graph $G_2=(V''',E_2)$ is obtained from $G_1$ as follows. Let 
$\cA$ be the set of edges of a sub-graph of the graph
$K_{|V''|}$ (added in the previous step)
that is isomorphic to the graph 
$(V'',\widehat{E})$ where 
\[
\widehat{E}= \big( E'' \setminus E \big) \bigcup \big\{ \set{v_j,v_{j+1}} \,|\,j\in \{0,1,\dots,n+1\} \big\} 
\]
and the node $w_0$ is mapped to the node $u$ in the isomorphism.
Such a sub-graph can be trivially found in polynomial time. 
For notational convenience we number the nodes in this sub-graph such that the order of 
the nodes in the largest cycle (having $2n+4$ edges) of this sub-graph 
is $w_0,w_1,\dots,w_{2n+3}$
(see \FI{fig2}(\emph{c})).
We then set $E_2=E''\cup A \cup \set{u,w_0}$.
Thus,
\begin{multline*}
|E_2|=|E''|+|\cA|+1
=
|E''|+|\widehat{E}|+1
\\
=
|E''|+ \big( |E''| - |E| + n + 2 \big) +1
=
n^2+\frac{15n}{2}+12
\end{multline*}
%
\end{itemize}
%
We first need to prove some bounds on the hyperbolicities of various
graphs and sub-graphs that appear in our reduction.
It is trivial to see that 
$\curvgrom(K_{|V''|})=0$.
Define 
$\widetilde{\Delta_{u,v,w}}(G)$
be a geodesic triangle which contributes to the \emph{minimality} of the value of 
$\curvgrom(G)$, \IE, 
one of the shortest paths, say $\overline{u,v}$, lies in a $\curvgrom(\widetilde{\Delta_{u,v,w}}(G))$-neighborhood
of the union 
$\overline{u,w}\,\cup\,\overline{v,w}$
of the other two shortest paths,  
but 
$\overline{u,v}$ does \emph{not} lie in a $\delta$-neighborhood
of $\overline{u,w}\,\cup\,\overline{v,w}$ for any $\delta<\curvgrom(\widetilde{\Delta_{u,v,w}}(G))$.
The following two facts are well-known.
%

\begin{fact}\label{fact1}
For any geodesic triangle $\Delta_{u,v,w}$, 
from the definition of $\curvgrom(\Delta_{u,v,w})$ \emph{(\emph{cf}.\ Definition~\ref{def-hyperbolic-1})} it follows that
\[
\curvgrom(\Delta_{u,v,w})\leq \max\left\{
\left  \lfloor \nicefrac{\dist_G(u,v)}{2} \right \rfloor,
\left  \lfloor \nicefrac{\dist_G(v,w)}{2} \right \rfloor,
\left  \lfloor \nicefrac{\dist_G(u,w)}{2} \right \rfloor
\right\}
\]
\end{fact}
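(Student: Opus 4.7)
The plan is to derive the fact directly from the definition of $\curvgrom(\Delta_{u,v,w})$ without needing any auxiliary machinery; in particular, it suffices to prove the stronger bound $\curvgrom(\Delta_{u,v,w})\le\lfloor \dist_G(u,v)/2\rfloor$, which trivially implies the claimed maximum.

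First, I would observe that the endpoints $u$ and $v$ automatically lie on the reference set $\overline{u,w}\cup\overline{v,w}$, since $u$ is a vertex of $\overline{u,w}$ and $v$ is a vertex of $\overline{v,w}$. Consequently, for every node $x$ on $\overline{u,v}$ (including interior points of edges, viewing $G$ as a metric graph as in Section~\ref{def-grom-curv}), one has $\dist_G(x,\overline{u,w}\cup\overline{v,w})\le\min\{\dist_G(x,u),\dist_G(x,v)\}$.

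Next, because $\overline{u,v}$ is by hypothesis a shortest path, any node $x$ on it satisfies the standard geodesic identity $\dist_G(x,u)+\dist_G(x,v)=\dist_G(u,v)$. At least one of the two summands must therefore be at most $\lfloor \dist_G(u,v)/2\rfloor$, so chaining this with the previous inequality yields $\dist_G(x,\overline{u,w}\cup\overline{v,w})\le\lfloor \dist_G(u,v)/2\rfloor$. Taking the minimum over all such nodes $x$ and appealing to Definition~\ref{def-hyperbolic-1} gives $\curvgrom(\Delta_{u,v,w})\le\lfloor \dist_G(u,v)/2\rfloor$, which is bounded above by the maximum appearing in the statement.

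There is no real obstacle here; the only point that warrants a word of care is the handling of interior points of edges, which is immediate once one recalls the metric-graph convention adopted in Section~\ref{def-grom-curv}. Note also that the proof exploits only the side $\overline{u,v}$ and therefore matches exactly the (slightly asymmetric) form of the definition used in the paper, so no additional symmetrization argument is required.
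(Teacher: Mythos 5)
The paper never actually proves this Fact---it is asserted as ``well-known'' and immediate from Definition~\ref{def-hyperbolic-1}---and your route (bound the distance of each $x\in\overline{u,v}$ by its distance to the nearer of the endpoints $u,v$, which lie on $\overline{u,w}\cup\overline{v,w}$, and use $\dist_G(x,u)+\dist_G(x,v)=\dist_G(u,v)$) is exactly the natural argument behind that assertion, including the correct observation that only the side $\overline{u,v}$ matters because of the asymmetric form of the definition.

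However, there is a genuine flaw in the one step you dismiss as immediate: the handling of interior (virtual) points. If $x$ is allowed to be an interior point of an edge, the two summands $\dist_G(x,u)$ and $\dist_G(x,v)$ are reals, and from their sum being $\dist_G(u,v)$ you may only conclude that one of them is at most $\nicefrac{\dist_G(u,v)}{2}$, \emph{not} at most $\left\lfloor \nicefrac{\dist_G(u,v)}{2}\right\rfloor$; for a side of odd length the midpoint violates the floored bound. Concretely, in a $3$-cycle on $u,v,w$ the midpoint of the edge $\{u,v\}$ is at distance $\nicefrac{1}{2}$ from $\overline{u,w}\cup\overline{v,w}$, while all three floors equal $0$, so with virtual nodes admitted as $x$ the floored inequality (and hence your chain of inequalities) fails. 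The fix is to read the Fact---consistently with how the paper uses it, e.g.\ in Lemma~\ref{cycle-hyp} and Lemma~\ref{lem-simpler}, where all distances are integers---as quantifying $x$ over genuine nodes of $G$ on $\overline{u,v}$: then $\dist_G(x,u)$ and $\dist_G(x,v)$ are nonnegative integers summing to $\dist_G(u,v)$, so the smaller one is indeed at most $\left\lfloor \nicefrac{\dist_G(u,v)}{2}\right\rfloor$ and the rest of your argument goes through verbatim. So your remark that the metric-graph convention makes interior points unproblematic has the situation backwards: interior points are precisely the case that must be excluded (or the floor dropped) for the statement to hold.
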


\begin{fact}[{\cite[Lemma~2.1]{RT04}}]\label{fact2}
We may assume that 
$\widetilde{\Delta_{u,v,w}}(G)$
is a \emph{simple} geodesic triangle, \IE, 
the three shortest paths 
$\overline{u,v}$, $\overline{u,w}$ and $\overline{v,w}$
do not share any nodes other than $u$, $v$ or $w$.
\end{fact}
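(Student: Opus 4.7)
The plan is a standard iterative pruning argument. Starting from any extremal geodesic triangle $\widetilde{\Delta_{u,v,w}}(G)$ that is not simple, I would construct another extremal geodesic triangle whose three sides are strictly shorter in total, and then iterate. Since $\dist_G(u,v)+\dist_G(v,w)+\dist_G(u,w)$ is a non-negative integer that strictly decreases at each step, the process must terminate with a simple extremal triangle.

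First I would isolate the minimal offending configuration: some node $x\notin\{u,v,w\}$ lies on two of the three sides, say on $\overline{u,v}$ and $\overline{u,w}$ (the remaining pair-sharing cases are symmetric, and a three-way sharing reduces to a pairwise one). Pick $x$ to be the shared node farthest from $u$ along $\overline{u,v}$. The replacement triangle $\Delta_{x,v,w}$ is formed by the sub-geodesic of $\overline{u,v}$ from $x$ to $v$, the sub-geodesic of $\overline{u,w}$ from $x$ to $w$, and the original $\overline{v,w}$; these remain shortest paths since subpaths of shortest paths are shortest paths. The key claim is $\curvgrom(\Delta_{x,v,w}) \leq \curvgrom(\widetilde{\Delta_{u,v,w}}(G))$; combined with the extremality of $\widetilde{\Delta_{u,v,w}}(G)$ in the definition of $\curvgrom(G)$, equality is forced, so $\Delta_{x,v,w}$ is itself extremal and strictly smaller in perimeter.

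To establish the claim, take any point $y$ on the new side from $x$ to $v$. By the hypothesis on $\widetilde{\Delta_{u,v,w}}(G)$, there exists $z\in\overline{u,w}\cup\overline{v,w}$ with $\dist_G(y,z)\leq \curvgrom(\widetilde{\Delta_{u,v,w}}(G))$. If $z$ lies on $\overline{v,w}$, or on the $x$-to-$w$ subpath of $\overline{u,w}$, then $z$ already lies on the boundary of $\Delta_{x,v,w}$ and serves as its own witness. The only nontrivial subcase is when $z$ lies on the $u$-to-$x$ subpath of $\overline{u,w}$; here I would take the vertex $x$ itself as the replacement witness. Using $\dist_G(u,y)=\dist_G(u,x)+\dist_G(x,y)$ (because $y$ lies on $\overline{u,v}$ beyond $x$), $\dist_G(u,z)\leq \dist_G(u,x)$ (because $z$ lies on the geodesic $\overline{u,w}$ before $x$), and the triangle inequality $\dist_G(u,y)\leq \dist_G(u,z)+\dist_G(z,y)$, one gets $\dist_G(x,y)\leq \dist_G(z,y)\leq \curvgrom(\widetilde{\Delta_{u,v,w}}(G))$, so $x\in\overline{x,w}$ witnesses the thinness at $y$.

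The main obstacle is precisely this last subcase, because $z$ could be arbitrarily far from the sub-geodesic $[x,w]$; the fix is to exploit simultaneously the geodesicity of the $u$-to-$x$ portions of both $\overline{u,v}$ and $\overline{u,w}$ (which have equal length $\dist_G(u,x)$) to ``fold'' the witness from $z$ onto the vertex $x$. A secondary delicate point is the termination bookkeeping when more than one pair of sides share nodes at once; this is handled by always choosing the shared node farthest from the common corner, so that the perimeter drops by at least $2\,\dist_G(u,x)\geq 2$ in each pruning step, and the induction closes in at most $\dist_G(u,v)+\dist_G(v,w)+\dist_G(u,w)$ rounds, yielding the desired simple extremal triangle.
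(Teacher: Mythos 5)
The paper does not actually prove this Fact; it is quoted from~\cite{RT04} with no internal argument, so there is nothing to compare your write-up against line by line. Judged on its own terms, your proof has a sensible skeleton (prune at the shared node farthest from the common corner, use that subpaths of shortest paths are shortest paths, and let the perimeter drive termination), and your computation in the subcase where the witness $z$ lies on the $u$-to-$x$ portion of $\overline{u,w}$ is correct.

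The genuine gap is the direction of your key inequality. Although Definition~\ref{def-hyperbolic-1} literally writes $\curvgrom(G)$ as a minimum over triples, every use of it in this paper (e.g., Lemma~\ref{cycle-hyp}, which combines an upper bound over \emph{all} triangles with one triangle attaining it, and the soundness argument for Theorem~\ref{thm-grom-hard}, which computes $\curvgrom(G_2')$ as a maximum over pieces) and the cited source treat $\curvgrom(G)$ as the \emph{maximum} of $\curvgrom(\Delta_{u,v,w})$ over geodesic triangles, so $\widetilde{\Delta_{u,v,w}}(G)$ is a maximizer. What the Fact therefore requires is a simple triangle $T'$ with $\curvgrom(T')\geq\curvgrom(\widetilde{\Delta_{u,v,w}}(G))$. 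You prove $\curvgrom(\Delta_{x,v,w})\leq\curvgrom(\widetilde{\Delta_{u,v,w}}(G))$, which already follows from maximality and does not force equality: your pruning may discard exactly the point of $\overline{u,v}$ that realizes the extremal distance (it can lie on the $u$-to-$x$ portion you throw away), after which the surviving simple triangle can be strictly thinner than $\curvgrom(G)$, and the iteration then terminates at a simple triangle that fails to witness the graph's hyperbolicity. The repair is to fix a point $y^*\in\overline{u,v}$ with $\dist_G(y^*,\overline{u,w}\cup\overline{v,w})=\curvgrom(\widetilde{\Delta_{u,v,w}}(G))$ and keep the piece of the decomposition that contains $y^*$ (the triangle $\Delta_{x,v,w}$ if $y^*$ lies beyond $x$, or the bigon on the two $u$-to-$x$ geodesics otherwise). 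For that piece the needed inequality $\geq$ is immediate, since its two non-designated sides are subsets of $\overline{u,w}\cup\overline{v,w}$ and so $y^*$ is still at distance at least $\curvgrom(G)$ from them; combined with the trivial bound $\curvgrom(T')\leq\curvgrom(G)$ this forces equality, and your perimeter bookkeeping then closes the induction.
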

%

Let $H$ denote the (node-induced) sub-graph $(\{w_0,w_1,\dots,w_{|V''|-1}\},\cA)$ of $G_2$.

\begin{lemma}\label{cycle-hyp}
$\curvgrom(G_2)=\curvgrom(H)=\frac{n}{2}+1$.
\end{lemma}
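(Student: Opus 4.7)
The plan is to establish the two equalities $\curvgrom(H)=\frac{n}{2}+1$ and $\curvgrom(G_2)=\curvgrom(H)$ separately, exploiting the very simple structure of $H$. Under the stated isomorphism, $H$ is a ``fan with a rim'': the $n+3$ internally disjoint paths $\cP_0,\dots,\cP_{n+2}$, each of length $\frac{n}{2}+1$, share the hub $w_0$, and their opposite endpoints $v_0,v_1,\dots,v_{n+2}$ are joined consecutively by the rim edges $\{v_j,v_{j+1}\}$. A direct case analysis on this description will show $\diam(H)=n+2$, with the maximum realized by the antipodal pair $(v_0,v_{n+2})$ (for which the rim route and the through-hub route both have length $n+2$). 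Note that $n$ is even since $G$ is cubic.

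The upper bound $\curvgrom(H)\le\frac{n}{2}+1$ is then immediate from Fact~\ref{fact1}, since every side of every geodesic triangle is at most $\diam(H)=n+2$ and $\lfloor (n+2)/2\rfloor=\frac{n}{2}+1$. For the matching lower bound I will exhibit the explicit geodesic triangle $\Delta_{w_0,v_0,v_{n+2}}$ whose sides are $\cP_0$, $\cP_{n+2}$, and the rim path $v_0\leftrightarrow v_1\leftrightarrow\cdots\leftrightarrow v_{n+2}$, and check that the midpoint $v_{n/2+1}$ of the rim side lies at distance exactly $\frac{n}{2}+1$ from every node of $\cP_0\cup\cP_{n+2}$ (the two cheapest options being to walk along the rim to $v_0$, or to climb $\cP_{n/2+1}$ all the way to the hub, each of length $\frac{n}{2}+1$). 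This forces the thinness of this triangle, and hence $\curvgrom(H)$, to be at least $\frac{n}{2}+1$.

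To transfer the value to $G_2$ I will first argue that $H$ is an \emph{isometric} subgraph of $G_2$: the only edge of $G_2$ joining the $w$-nodes to the rest of the graph is the bridge $\{u,w_0\}$, so any $w$-to-$w$ walk that leaves $H$ must traverse this bridge an even number of times and therefore cannot beat a walk staying inside $H$. This yields $\curvgrom(G_2)\ge\curvgrom(H)=\frac{n}{2}+1$. For the reverse inequality I will bound $\diam(G_2)\le n+3$ by splitting into the three cases ``$v$-side to $v$-side'', ``$w$-side to $w$-side'', and ``cross-bridge''; in each case the eccentricities of $u$ and $w_0$ inside their own halves are at most $\frac{n}{2}+1$, and the cross case simply adds $+1$ for the bridge. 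Applying Fact~\ref{fact1} once more gives $\curvgrom(G_2)\le\lfloor(n+3)/2\rfloor=\frac{n}{2}+1$, completing the proof.

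The part I expect to require the most care is the $v$-side bookkeeping for $\diam(G_2)$: that side carries the arbitrary cubic graph $G$ together with its three extension edges and the paths $\cP_j$, so one must check that none of the $G$-edges or the extension edges creates an unexpected shortcut either to the hub $u$ or between two $w$-side nodes (once combined with the bridge). The saving observation is that every $G_2$-edge incident to $u$ lies on some $\cP_j$, so from any $v_i$ the only way to reach $u$ is to climb a path $\cP_{j'}$, which costs the full $\frac{n}{2}+1$; once this is settled the remaining distance estimates are routine.
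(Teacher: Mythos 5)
Your proposal is correct, and its lower-bound half coincides with the paper's argument: the same fan-with-rim reading of $H$, the same geodesic triangle (hub, $v_0$, $v_{n+2}$ --- the paper's $\Delta_{w_0,w_{\frac{n}{2}+1},w_{\frac{3n}{2}+3}}$ after relabelling along the $(2n+4)$-cycle), and the same computation that the rim midpoint is at distance exactly $\frac{n}{2}+1$ from the union of the two spokes (your phrase ``at distance exactly $\frac{n}{2}+1$ from every node of $\cP_0\cup\cP_{n+2}$'' should say ``from that set''; your parenthetical shows you mean the correct thing). Where you genuinely diverge is the upper bound and the transfer to $G_2$. The paper invokes Fact~\ref{fact2} to restrict to simple geodesic triangles, asserts that the extremal such triangle uses only edges of $\cA$, and then applies Fact~\ref{fact1} with $\diam(H)=n+2$; as written, that localization is terse, since simplicity alone only rules out the bridge and one must still know that triangles on the $G''$ side cannot exceed $\frac{n}{2}+1$. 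You instead bound $\diam(G_2)\le n+3$ from the two hub eccentricities plus the bridge and apply Fact~\ref{fact1} to all of $G_2$, which disposes of the $G''$-side triangles automatically and needs neither Fact~\ref{fact2} nor any claim about where the extremal triangle lives, at the price of using the parity of $n$ (so that $\lfloor (n+3)/2\rfloor=\frac{n}{2}+1$), which you correctly note. Your explicit bridge-excision/isometric-subgraph step giving $\curvgrom(G_2)\ge\curvgrom(H)$ is a cleaner justification of the transfer than the paper's implicit one. What the paper's route buys is the sharper structural statement that extremal simple triangles live on one side of the bridge, a pattern it reuses in the completeness and soundness arguments of Theorem~\ref{thm-grom-hard}; your route yields only the numerical value, which is all this lemma requires.
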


\begin{proof}
By Fact~\ref{fact2}
$\widetilde{\Delta_{p,q,r}}(G_2)$
must be a simple geodesic triangle and therefore 
can only include edges in $\cA$. 
Since the diameter of the sub-graph 
$H$
is $n+2$, 
for any geodesic triangle 
$\Delta_{p,q,r}$ of 
$H$
we have 
\[
\max\left\{
\left  \lfloor \nicefrac{\dist_G(p,q)}{2} \right \rfloor,
\left  \lfloor \nicefrac{\dist_G(q,r)}{2} \right \rfloor,
\left  \lfloor \nicefrac{\dist_G(p,r)}{2} \right \rfloor
\right\}
\leq n+2 
\]
and thus by Fact~\ref{fact1} we have 
$
\curvgrom(G_2)=
\curvgrom(H)=
\curvgrom(\widetilde{\Delta_{p,q,r}}(G_2))\leq 
\frac{n}{2}+1
$.
Thus, it suffices we provide a simple geodesic triangle 
$\Delta_{p,q,r}$ of $H$ for some three nodes $p,q,r$ of $H$ such that 
$\curvgrom({\Delta_{p,q,r}}(H))=\frac{n}{2}+1$. 
Consider the simple geodesic triangle 
$\Delta_{w_0,w_{\frac{n}{2}+1},w_{\frac{3n}{2}+3}}$ of $H$ consisting of the three shortest paths 
$\cQ_1\eqdef w_0\leftrightarrow w_1\leftrightarrow w_2\leftrightarrow\dots\leftrightarrow 
w_{\frac{n}{2}}\leftrightarrow w_{\frac{n}{2}+1}$, 
$\cQ_2\eqdef w_{\frac{n}{2}+1}\leftrightarrow w_{\frac{n}{2}+2}\leftrightarrow w_{\frac{n}{2}+3}
\leftrightarrow\dots\leftrightarrow w_{\frac{3n}{2}+2}\leftrightarrow w_{\frac{3n}{2}+3}$
and 
$\cQ_3\eqdef w_{\frac{3n}{2}+3}\leftrightarrow w_{\frac{3n}{2}+4}\leftrightarrow w_{\frac{3n}{2}+5}\leftrightarrow 
\dots\leftrightarrow w_{2n+3},w_0$,
and 
consider the node $w_{n+2}$ that is the mid-point of the shortest path 
$\cQ_2$
(see \FI{fig2}(\emph{c})).
It is easy to verify that the distance of the node 
$w_{n+2}$
from the union of the two shortest paths 
$\cQ_1$
and 
$\cQ_3$
is
$\frac{n}{2}+1$.
\end{proof}


Now, suppose that we can prove the following two claims:
\[
\begin{array}{r l}
\text{\bf (completeness)} &
\text{if $G$ has a Hamiltonian path between $v_1$ and $v_n$ then}
\\
\\
[-10pt]
& \hspace*{1in}
     \mathsf{OPT}_{\mbox{\badp}_{\curvgrom}}(G_1,G_2)\leq \frac{n}{2}+1
\\
\\
[-10pt]
\text{\bf (soundness)} &
\text{if $G$ has no Hamiltonian paths between $v_1$ and $v_n$ then}
\\
\\
[-10pt]
& \hspace*{1in}
     \mathsf{OPT}_{\mbox{\badp}_{\curvgrom}}(G_1,G_2)\geq \frac{n^3+3n^2+2n}{2}
\end{array}
\]
Note that this proves the theorem since 
$
\frac { \frac{n^3+3n^2+2n}{2} } { \frac{n}{2}+1 }
> \frac{n^2}{5} =\Omega \left( {|V'''|} \right)
$.

\bigskip
\noindent
\textbf{Proof of completeness}

\medskip
\noindent
Suppose that $G$ has a Hamiltonian path between $v_1$ and $v_n$, say 
$v_1\leftrightarrow v_2\leftrightarrow v_3\leftrightarrow\dots\leftrightarrow v_{n-1}\leftrightarrow v_n$.
Thus, $G''$ has a Hamiltonian path 
$v_0\leftrightarrow v_1\leftrightarrow v_2\leftrightarrow v_3\leftrightarrow\dots\leftrightarrow v_{n-1}\leftrightarrow 
v_n\leftrightarrow v_{n+1}\leftrightarrow v_{n+2}$
between $v_0$ and $v_{n+2}$.
We remove the $\frac{n}{2}+1$ edges in $E_d=E''\setminus \big\{ \,\set{v_j,v_{j+1}} \,|\, j=0,1,\dots,n+1 \big\}$
that are not in this Hamiltonian path resulting in the graph 
$G_2'=G_1\setminus E_d$
(see \FI{fig2}(\emph{d})).
To show that 
$\curvgrom(G_2')=\curvgrom(G_2)$, 
note that 
by Fact~\ref{fact2}
$\widetilde{\Delta_{p,q,r}}(G_2')$
must be a simple geodesic triangle and therefore 
\begin{multline*}
\curvgrom(G_2')= \max \big\{ \, \curvgrom(G''\setminus E_d),\,  \curvgrom(K_{|V''|})\, \big\}
\\
=
\max \big\{ \curvgrom(G''\setminus E_d),\, 0 \big\}
=
\curvgrom(G''\setminus E_d)
\end{multline*}
Since $G''\setminus E_d$ is isomorphic to $H$,  
by Lemma~\ref{cycle-hyp}
$
\curvgrom(G''\setminus E_d)
=
\curvgrom(H)
=
\curvgrom(G_2)
$.

\bigskip
\noindent
\textbf{Proof of soundness}

\medskip
\noindent
Assume that $G$ has no Hamiltonian paths between $v_1$ and $v_n$, and 
let $E_d\subseteq E_1\setminus E_2$ be the optimal set of edges that need to be deleted
to obtain the graph $G_2'=(V''',E_1\setminus E_d)$ such that 
$\curvgrom(G_2')=\curvgrom(G)$. 
By Fact~\ref{fact2},
$\widetilde{\Delta_{p,q,r}}(G_2')$
must be a simple geodesic triangle and therefore 
\begin{multline}
\curvgrom(G_2')= 
\max \big\{ \, \curvgrom(G''\setminus E_d),\,  \curvgrom(K_{|V''|}\setminus E_d)\, \big\}
\\
=
\curvgrom(G_2)=\frac{n}{2}+1 
\label{eq1}
\end{multline}

\begin{lemma}\label{lem-simpler}
$\curvgrom(G''\setminus E_d)\leq \frac{n}{2}$.
\end{lemma}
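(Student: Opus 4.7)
The plan is to first reduce the lemma to showing $\curvgrom(G'') \leq n/2$, and then derive this from a diameter estimate on $G''$ via Fact~\ref{fact1}. Since $E_d \subseteq E_1 \setminus E_2 = E(K_{|V''|}) \setminus \cA$, every edge of $E_d$ lies in the attached clique and none lies in $E''$, so $G'' \setminus E_d = G''$. By Fact~\ref{fact2} it suffices to bound $\curvgrom(\Delta_{p,q,r})$ for simple geodesic triangles in $G''$; by Fact~\ref{fact1} this in turn is bounded by $\lfloor \diam(G'')/2 \rfloor$. Because $n$ is even (as $G$ is cubic), the goal becomes showing $\diam(G'') \leq n+1$, since then $\curvgrom(G'') \leq \lfloor (n+1)/2 \rfloor = n/2$.

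Next I exploit the hypothesis that $G$ has no Hamiltonian path between $v_1$ and $v_n$: a shortest $v_1$-$v_n$ path of length $n-1$ in $G$ would visit all $n$ nodes and thus be Hamiltonian, so $d_G(v_1, v_n) \leq n-2$. Because $v_0$ and $v_{n+2}$ each have degree one in $G'$ (connecting only to $v_1$ and $v_{n+1}$ respectively) and $v_{n+1}$ has degree two (connecting only to $v_n$ and $v_{n+2}$), every $v_0$-$v_{n+2}$ path in $G'$ must traverse $v_0 \to v_1 \to \cdots \to v_n \to v_{n+1} \to v_{n+2}$, giving $d_{G'}(v_0, v_{n+2}) = 3 + d_G(v_1, v_n) \leq n+1$. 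A routine enumeration using the same degree constraints shows $d_{G'}(v_i, v_j) \leq n+1$ for every other pair of endpoint nodes in $V'$ as well (most pairs trivially satisfy $d_{G'}(v_i,v_j) \leq n-1$ or $\leq n$ via the cubic graph $G$).

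Finally I bound $d_{G''}(p, q)$ uniformly. If $p$ and $q$ lie on a common path $\cP_i$ the distance is at most $n/2+1$. Otherwise $p = v_{i,k}$ and $q = v_{j,\ell}$ with $i \neq j$ and $k, \ell \in \{0, 1, \ldots, n/2+1\}$, and
\[
d_{G''}(p,q) = \min\bigl\{k+\ell,\; (n+2-k-\ell) + d_{G'}(v_i, v_j)\bigr\},
\]
since any path either uses the hub $u$ or exits both paths at their endpoints $v_i, v_j$ into $G'$. If $k+\ell \leq n+1$ the first term already suffices; the only remaining case is $k+\ell = n+2$, which forces $k = \ell = n/2+1$ so that $p = v_i$ and $q = v_j$ are path endpoints, and then the second term equals $d_{G'}(v_i, v_j) \leq n+1$ by the previous paragraph. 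Hence $\diam(G'') \leq n+1$, and the lemma follows.

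The main obstacle is precisely the degenerate $k+\ell = n+2$ case: this is exactly where both the via-$u$ route and the via-$G'$ route could a priori reach length $n+2$, and the Hamiltonian-path hypothesis is what rules this out for the critical pair $(v_0, v_{n+2})$. The degree-one/degree-two structure at the boundary nodes $v_0, v_{n+1}, v_{n+2}$ then propagates the bound to every other endpoint pair without needing any further hypothesis on $G$.
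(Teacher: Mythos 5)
Your bound on $\curvgrom(G'')$ itself is fine --- the hub-versus-endpoints case analysis does give $\diam(G'')\le n+1$ once the critical pair $(v_0,v_{n+2})$ is handled by the no-Hamiltonian-path hypothesis, and Fact~\ref{fact1} then gives $\frac{n}{2}$ since $n$ is even; in that restricted setting your argument is even cleaner than the paper's. The genuine gap is your very first step, $G''\setminus E_d=G''$. You read $E_1\setminus E_2$ literally as $E(K_{|V''|})\setminus\cA$, but the reduction only makes sense if the edges of the original cubic graph $G$ are deletable: the completeness direction explicitly removes the $\frac{n}{2}+1$ edges of $E$ that avoid the Hamiltonian path, and the paper's proof of this lemma is written for $G''\setminus E_d$ and $G'\setminus E_d$ precisely because $E_d$ may meet $E$ (if your literal reading were the intended one, the completeness argument would not be a feasible deletion at all, so it cannot be). Under the intended reading your diameter argument breaks: if $E_d$ separates $v_i$ from $v_j$ inside $G'$ (for instance $E_d\supseteq E$), the only remaining $v_i$--$v_j$ route goes through the hub $u$ and has length exactly $2\left(\frac{n}{2}+1\right)=n+2$, so $\diam(G''\setminus E_d)=n+2$ and Fact~\ref{fact1} only yields $\curvgrom(G''\setminus E_d)\le\frac{n}{2}+1$, which is not what the soundness argument needs.

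The paper's proof is structured to survive exactly such deletions. It assumes $\curvgrom(G''\setminus E_d)\ge\frac{n}{2}+1$, uses Fact~\ref{fact2} to place an extremal geodesic triangle on a \emph{simple cycle} of $G''\setminus E_d$, and bounds every simple cycle: a cycle through $u$ consists of two full pendant paths of length $\frac{n}{2}+1$ plus a $v_i$--$v_j$ path inside $G'\setminus E_d$, and the latter has length at most $n+1$ because a path of length $n+2$ in $G'$ would be a Hamiltonian $v_0$--$v_{n+2}$ path, contradicting the hypothesis. (Note the hypothesis is used here to bound the \emph{longest} path, not merely the distance $\dist_{G'}(v_0,v_{n+2})\le n+1$ as in your argument.) Hence the cycle has length at most $2n+3$, any two points of the triangle are at distance at most $n+1$, while Fact~\ref{fact1} forces some side of the triangle to have length at least $n+2$ --- a contradiction. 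To repair your proof you would need to replace the global diameter bound by this cycle-length argument (or otherwise control triangles whose sides are forced through the hub $u$ after edges of $E$ are deleted).
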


\begin{proof}
Since $G$ has no Hamiltonian paths between $v_1$ and $v_n$,
$\diam(G'\setminus E_d)\leq n+1$. 
Assume, for the sake of contradiction, that 
$\curvgrom(G''\setminus E_d)\geq \frac{n}{2}+1$.
By Fact~\ref{fact1}, we have 
\begin{multline*}
\curvgrom(G''\setminus E_d)
=
\curvgrom( \widetilde{\Delta_{p,q,r}}(G''\setminus E_d) )
\\
\leq
\max\left\{
\left  \lfloor \nicefrac{\dist_{G''\setminus E_d}(p,q)}{2} \right \rfloor,
\left  \lfloor \nicefrac{\dist_{G''\setminus E_d}(q,r)}{2} \right \rfloor,
\left  \lfloor \nicefrac{\dist_{G''\setminus E_d}(p,r)}{2} \right \rfloor
\right\}
\end{multline*}
and thus at least one of the three distances in the left-hand-side of the above inequality, 
say $\dist_{G''\setminus E_d}(p,q)$,
must be at least $n+2$.
Let $\cL(\cC(H))$ and $\cL(H)$ denote the length (number of edges) of a (simple) cycle $\cC$ and the length 
of the \emph{longest (simple) cycle} of a graph $H$. 
Since 
$\curvgrom(\widetilde{\Delta_{p,q,r}}(G''\setminus E_d))>0$
and 
$\widetilde{\Delta_{p,q,r}}(G''\setminus E_d)$
must be a simple geodesic triangle, 
there must be at least one cycle, say $\cC$, in $G''\setminus E_d$ containing $p$, $q$ and $r$.
Now, note that 
\[
\cL(\cC(G''\setminus E_d))
\leq
\cL(G''\setminus E_d)
\leq
2\,\left(\frac{n}{2}+1\right) + \diam(G'\setminus E_d)
\leq 
2n+3
\]
and therefore 
$\dist_{G''\setminus E_d}(p,q)\leq \left\lfloor \frac{2n+3}{2}\right\rfloor=n+1$, which provides 
the desired contradiction.
\end{proof}

By Lemma~\ref{lem-simpler} and Equation~\eqref{eq1} it follows that 
$\curvgrom(K_{|V''|}\setminus E_d)=\frac{n}{2}+1$.

\begin{lemma}
If 
$\curvgrom(K_{|V''|}\setminus E_d)\geq \frac{n}{2}+1$ then 
$|E_d|\geq \frac{n^3+3n^2+2n}{2}$.
\end{lemma}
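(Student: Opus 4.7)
My plan is to use the hyperbolicity lower bound to extract a long geodesic in $K_{|V''|}\setminus E_d$ and then tally removed edges via a BFS-layer counting argument.

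First, since $\curvgrom(K_{|V''|}\setminus E_d)\geq \frac{n}{2}+1$ is finite, the graph $K_{|V''|}\setminus E_d$ is connected. Picking a geodesic triangle $\Delta_{u,v,w}$ with $\curvgrom(\Delta_{u,v,w})\geq \frac{n}{2}+1$ and applying Fact~\ref{fact1}, at least one of its sides has length $\geq n+2$, so $\diam(K_{|V''|}\setminus E_d)\geq n+2$. Fix any pair $p,q$ with $\dist_{K_{|V''|}\setminus E_d}(p,q)\geq n+2$ and truncate a shortest $p$--$q$-path at the vertex at distance exactly $n+2$ from $p$ to obtain a length-$(n+2)$ shortest path $\pi\colon p = v_0 \leftrightarrow v_1 \leftrightarrow \cdots \leftrightarrow v_{n+2}$ in $K_{|V''|}\setminus E_d$. (Such a vertex exists because distances along a shortest path increase by one.)

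Next, for each vertex $w \notin \{v_0,\ldots,v_{n+2}\}$ at BFS level $k\eqdef \dist_{K_{|V''|}\setminus E_d}(p,w)\geq 1$, and for each $j\in\{0,\ldots,n+2\}$ with $|k-j|\geq 2$, one has $\dist_{K_{|V''|}\setminus E_d}(w,v_j)\geq |k-j|\geq 2$, while the complete graph $K_{|V''|}$ originally contained the edge $\{w,v_j\}$; hence $\{w,v_j\}\in E_d$. At most three indices $j\in\{0,\ldots,n+2\}$ satisfy $|k-j|\leq 1$, so for each such $w$ at least $(n+3)-3 = n$ values of $j$ contribute removed edges incident to $w$.

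Finally, the number of off-path vertices equals $|V''|-(n+3) = \tfrac{n^2+5n+8}{2} - (n+3) = \tfrac{(n+1)(n+2)}{2}$, and the edges $\{w,v_j\}$ counted above are pairwise distinct because each has a unique off-path endpoint $w$. Therefore
\[
|E_d|\;\geq\;\frac{(n+1)(n+2)}{2}\cdot n\;=\;\frac{n(n+1)(n+2)}{2}\;=\;\frac{n^3+3n^2+2n}{2},
\]
as required. The main technical choice to make is \emph{which} quantity to sum: bounding $\sum_{|i-j|\geq 2}\ell_i\ell_j$ over arbitrary BFS-layer profiles $(\ell_i)$ is unwieldy, but restricting attention to off-path-to-on-path edges once $\pi$ has been truncated to exact length $n+2$ gives a clean uniform bound of $n$ removed edges per off-path vertex, which is exactly what is needed. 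The bound is in fact slightly loose---also counting on-path chords $\{v_i,v_j\}$ with $|i-j|\geq 2$ sharpens the constant to $\tfrac{(n+1)^2(n+2)}{2}$---but this strengthening is not required.
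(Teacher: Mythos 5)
Your proof is correct and takes essentially the same route as the paper: extract from Fact~\ref{fact1} a shortest path of length $n+2$ in $K_{|V''|}\setminus E_d$, argue that each of the $\frac{(n+1)(n+2)}{2}$ off-path vertices can keep at most $3$ edges to the path (you via BFS-levels and the triangle inequality, the paper via the equivalent observation that four retained neighbours on the path would create a length-$2$ shortcut between path vertices at distance $\geq 3$), and multiply to get $|E_d|\geq \frac{n(n+1)(n+2)}{2}$. The opening claim that finite hyperbolicity forces connectivity is convention-dependent and unnecessary, but harmless, since vertices at infinite distance from $p$ only lose more edges to the path.
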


\begin{proof}
Since 
$
\curvgrom(K_{|V''|}\setminus E_d)
=
\curvgrom( \widetilde{\Delta_{p,q,r}}(K_{|V''|}\setminus E_d) )
=
\frac{n}{2}+1
$, 
by Fact~\ref{fact1}
at least one of the three distances 
$\dist_{K_{|V''|}\setminus E_d}(p,q)$, 
$\dist_{K_{|V''|}\setminus E_d}(q,r)$ 
or 
$\dist_{K_{|V''|}\setminus E_d}(p,r)$,
say $\dist_{K_{|V''|}\setminus E_d}(p,q)$,
must be at least $n+2$.
This implies that 
$K_{|V''|}\setminus E_d$ 
must contain a shortest path of length $n+2$, say 
$\cQ\eqdef w_0\leftrightarrow w_1\leftrightarrow w_2\leftrightarrow\dots\leftrightarrow w_{n+1}\leftrightarrow w_{n+2}$.
We now claim that \emph{no} node from the set 
$W_1=\set{w_{n+3},w_{n+4},\dots,w_{|V''|-1}}$
is connected to \emph{more than} $3$ nodes from the set 
$W_2=\set{w_0,w_1,\dots,w_{n+2}}$
in $K_{|V''|}\setminus E_d$.
To show this by contradiction, suppose that some node 
$w_i\in W_1$ is connected to four nodes $w_j,w_k,w_\ell,w_r\in W_2$ with $j<k<\ell<r$.
Then $r\geq j+3$ which implies 
$\dist_{K_{|V''|}\setminus E_d}(w_j,w_r)\leq 2$, contradicting the fact that $\cQ$ is a shortest path. 
It thus follows that 
\begin{multline*}
|E_d|\geq 
\big( (n+3)-3\big)|W_1| 
=
n (|V''|-(n+3))
\\
=
n \left(\frac{n^2+3n}{2}+1\right)
=
\frac{n^3+3n^2+2n}{2}
\end{multline*}
\end{proof}

The above lemma completes the proof of soundness of our reduction.



\section{Conclusion and future research}
\label{sec-conclusion}

Notions of curvatures of higher-dimensional geometric shapes and topological spaces 
play a \emph{fundamental} role in physics and mathematics 
in characterizing anomalous behaviours of these higher dimensional entities. 
However, using curvature measures to detect anomalies in networks 
is \emph{not} yet very common due to several reasons such as lack of preferred geometric interpretation of networks and lack of 
experimental evidences that may lead to specific desired curvature properties.
In this paper we have attempted to formulate and analyze curvature analysis methods to provide the foundations of 
systematic approaches to find critical components and anomaly detection in networks by using 
two measures of network curvatures, namely the Gromov-hyperbolic curvature
and the geometric curvature measure.
This paper must \emph{not} be viewed as uttering the final word 
on appropriateness and suitability of specific curvature measures, but rather
should be viewed as a stimulator and motivator of further theoretical or empirical research
on the exciting interplay between notions of curvatures from network and non-network domains.

There is a \emph{plethora} of interesting future research questions and directions raised by the topical discussions 
and results in this paper. Some of these are stated below.
\begin{enumerate}[label=$\triangleright$,leftmargin=0.7cm]
\item
For geometric curvatures, we considered the first-order non-trivial measure 
$\mathfrak{C}^2_d$. 
It would be of interest to investigate computational complexity issues of anomaly detection 
problems using $\mathfrak{C}^p_d$ for $p>2$. We conjecture that our algorithmic results 
for extremal anomaly detection using 
$\mathfrak{C}^2_d$ (Theorem~\ref{ext-thm}(\emph{a}2-2)\&(\emph{b}2))
can be extended to 
$\mathfrak{C}^3_d$.
\item
There are at least two more aspects of geometric curvatures that need further careful investigation.
Firstly, the topological association of elementary components to higher-dimensional objects as
described in this paper is by \emph{no} means the only reasonable topological association possible.
But, more importantly, 
other suitable notions of geometric curvatures are quite possible. 
As a very simple illustration, 
assuming that 
smaller dimensional simplexes 
edges 
in the discrete network setting correspond to vectors or directions in the smooth context, 
an analogue of the Bochner-Weitzenb\"{o}ck formula developed by 
Forman for the curvature for 
a simplex $s$ can be 
given by the formula~\cite{Fo03,SSG18}:
\[
\mathfrak{F}(s)
=
w_s 
\left(
          \left(
               \sum_{s \prec s'} \frac{ w_s } { w_{s'} }
             + \sum_{s' \prec s} \frac{ w_{s'} } { w_s }
			   \right)	
 \!\!
 \text{--}
 \!\!
   \sum_{ s' \parallel s  }
             \Big| 
 \!\!
							 \sum_{ s,s' \prec g  }
 \!\!
						     \frac  { \sqrt{ w_s w_{s'} } } {  w_g }
							+
 \!\!
							  \sum_{ g \prec s,s' }
						   \frac 
							   { w_g } { \sqrt { w_s w_{s'} } }  
						 \Big|
\right)
\]
where
$a\prec b$ means $a$ is a face of $b$, 
$a\parallel b$ means 
$a$ and $b$ have either a common higher-dimensional face or a common lower-dimensional face 
but \emph{not} both,
and $w$ is a function that assigns weights to simplexes.
One can then either modify the Euler characteristics as 
$\sum_{k=0}^p (-1)^k\, \mathfrak{F}(f_d^k)$ 
or 
by combining the individual $\mathfrak{F}(f_d^k)$ values 
using curvature functions defined by Bloch~\cite{Bl14}. 
\item
Our inapproximability results for the Gromov-hyperbolic curvature require a 
high average node degree. 
We hypothesize that the anomaly detection 
problems using Gromov-hyperbolic curvatures is much more computationally tractable than 
what our results depict for networks with bounded average degree.
\end{enumerate}
Finally, in contrast to the combinatorial/geometric graph-property based approach investigated in this paper,
a viable alternate approach for anomaly detection is the \emph{algebraic tensor-decomposition based approach} studied 
in the contexts of dynamic social networks~\cite{STF06} and 
pathway reconstructions in cellular systems and microarray data integration from several 
sources~\cite{AG05,OGA07}.
This approach is quite different from the ones studied in this paper with its own pros and cons.
For computational biology researchers, an useful survey of tensor-based approaches for various kinds 
of biological networks and systems can be found in reference~\cite{YDA20}.

\section*{Acknowledgements}

We thank Anastasios Sidiropoulos and Nasim Mobasheri for very useful discussions. This research work was partially supported by 
NSF grants IIS-1160995 and IIS-1814931.



\end{document}